\def\BibTeX{{\rm B\kern-.05em{\sc i\kern-.025em b}\kern-.08em
    T\kern-.1667em\lower.7ex\hbox{E}\kern-.125emX}}
\DeclareMathOperator*{\argmax}{arg\,max}
\newcommand\numberthis{\addtocounter{equation}{1}\tag{\theequation}}
\def\BState{\State\hskip-\ALG@thistlm}
\begin{document}
\title{Safe Pricing Mechanisms for Distributed Resource Allocation with Bandit Feedback}
\author{Spencer Hutchinson, Berkay Turan, and Mahnoosh Alizadeh
\thanks{This work was supported by  NSF grant \#1847096.}
\thanks{S. Hutchinson, B. Turan and M. Alizadeh are with  Department of Electrical and Computer Engineering, University of California-Santa Barbara, Santa Barbara, CA 93106 USA (email: shutchinson@ucsb.edu; bturan@ucsb.edu; alizadeh@ucsb.edu). }}
\maketitle
\begin{abstract}
In societal-scale infrastructures, such as electric grids or transportation networks, pricing mechanisms are often used as a way to shape users' demand in order to lower operating costs and improve reliability.
Existing approaches to pricing design for safety-critical networks often require that users are queried beforehand to negotiate prices, which has proven to be challenging to implement in the real-world.
To offer a more practical alternative, we  develop learning-based pricing mechanisms that require no input from the users.
These pricing mechanisms aim to maximize the utility of the users' consumption by gradually estimating the users' price response over a span of $T$ time steps (e.g., days) while ensuring that the infrastructure network's safety constraints that limit the users' demand are satisfied at all time steps.
We propose two different algorithms for the two different scenarios when, 1) the utility function is chosen by the central coordinator to achieve a social objective and 2) the utility function is defined by the price response under the assumption that the users are self-interested agents. We prove that both algorithms enjoy $\tilde{\mathcal{O}} (T^{2/3})$ regret with high probability.
We then apply these algorithms to demand response pricing for the smart grid and  numerically demonstrate their effectiveness.
\end{abstract}



\newtheorem{proposition}{Proposition}
\newtheorem{theorem}{Theorem}
\newtheorem{corollary}{Corollary}
\newtheorem{lemma}{Lemma}
\newtheorem{Fact}{Fact}
\newtheorem{remark}{Remark}
\newtheorem{assumption}{Assumption}
\newtheorem{definition}{Definition}

\newcommand{\eqdef}{\vcentcolon=}
\newcommand{\beq}{\begin{equation}}
\newcommand{\eeq}{\end{equation}}
\newcommand{\ie}{i.e., }
\newcommand\munderbar[1]{%
  \underaccent{\bar}{#1}}
  \newcommand{\bltxt}[1]{\textcolor{black}{#1}}
\newenvironment{bl}{\par\color{black}}{\par}

\section{Introduction}

In safety-critical infrastructure systems, such as power and transportation networks, prices or tolls are often used to improve efficiency while ensuring  safety constraints (e.g. power line or road capacities) are honored.
Optimal design of such prices requires knowledge of self-interested users' preferences/utility functions, which are not often apriori available to any central coordinator. As such,  one popular approach is to employ distributed resource allocation mechanisms such as network utility maximization (NUM), e.g., \cite{samadi2010optimal,mehr2017joint}. These approaches are well suited for finding optimal shadow prices in such multi-agent network systems via prescribed interactions between agents with private preferences \cite{nedic2009distributed,low1999optimization,palomar2007alternative}. After the distributed optimization protocol converges, optimal prices may be posted, and the users will adjust their demand in response to the posted prices.
However, in spite of their popularity in research papers, such  resource allocation mechanisms have not been widely implemented in real-world safety-critical networks, such as power systems, due to several factors including: 1) the need for back and forth communications with users  to negotiate over optimal prices; 2) they require fully automated personal demand management mechanisms to be adopted by each individual user in order to implement the distributed optimization protocol; 3) they require all users to take part in the  protocol and cooperate with the central entity.

To  circumvent these issues, in this paper, we adopt an alternative viewpoint wherein, instead of employing distributed mechanisms to find optimal prices given unknown user preferences, the central coordinator aims to \textit{learn} the users' preferences over a span of $T$ days through repeated interactions. Each day, the central coordinator posts a price and observes the users' noisy response through their resource consumption, and based on these observations, refines its knowledge of the users' preferences.

Adopting a learning-based pricing framework introduces several novel challenges that are not present in conventional approaches.
The first challenge is ensuring the infrastructure's \emph{safety constraints} when posting prices each day. As the price response of the users is being learnt and is hence not entirely known, the central coordinator needs to ensure that any posted price will not lead to user demand that will violate the network's constraints (e.g., power flow constraints in demand response applications).
The second challenge is ensuring the efficiency of the posted prices. Despite the fact that the central coordinator lacks full knowledge of the users' preferences, it still needs to ensure that the aggregate utility of the users due to the resource consumption is high over the span of $T$ days.

To model this problem, we study two different frameworks, \emph{Safe Price Response (SPR)} and \emph{Safe Utility Maximization (SUM)}.
The SPR and the SUM problems differ in that, in the former, the utility functions for different user groups are chosen by the central coordinator and, in the latter, the utility functions are defined by the price response function under the assumption that the users are self-interested agents.
SPR is appropriate when the optimization objective (i.e. the utility function) is a design choice of the central coordinator to achieve social objectives, while SUM is appropriate when the users are assumed to be self-interested agents and the goal is to maximize the total private benefit of these agents.

\begin{bl}
The contributions of this work are summarized as follows:
\begin{itemize}
    \item We introduce two new learning-based frameworks for pricing design in safety-critical networks that are applicable for the two typical settings where 1) the utility functions are designed by a central entity and 2) the utility functions are defined by the user's response to prices under the assumption that the users behave as self-interested agents.
    \item Relative to prior works such as \cite{nedic2009distributed,low1999optimization,palomar2007alternative}, our frameworks are more practical for safety-critical pricing applications because they \emph{do not} require prices to be negotiated with users beforehand to ensure safety;
    \item We propose two bandit algorithms for these frameworks and prove that they enjoy sublinear regret and satisfy the safety constraints at all rounds with high probability; 
    \item We apply these algorithms to demand response in the smart grid and demonstrate their effectiveness through simulation of a real distribution network.
\end{itemize}
\end{bl}

\emph{Related Work:}
Evidently, this work is related to existing approaches for demand management in safety-critical infrastructure.
Several works \cite{bitar2012deadline,han2015approximately,sun2018eliciting} take a mechanism design perspective, where the system users are modeled as strategic agents in their interactions with the central coordinator (e.g. a user may submit untruthful estimates of future demand to reduce their own costs).
In this paper, we take a different perspective in that we model the users' demand to be a private function of the price that does not change in response to the central coordinator's pricing policy.
This is more closely related to distributed resource allocation approaches, which are useful for finding optimal shadow prices in multi-agent systems with private utility functions.
The most relevant distributed resource allocation framework is Network Utility Maximization (NUM), which allows for a resource allocation problem with private utility functions to be decomposed such that it can be solved in a distributed fashion where a central coordinator communicates with each user \cite{palomar2006tutorial, chiang2007layering}. 
NUM has been applied to congestion control for internet networks \cite{low1999optimization, palomar2007alternative} as well as the control of power and transportation systems \cite{samadi2010optimal,li2011optimal,mehr2017joint}.
Recently, a NUM algorithm that respects stage-wise constraints was presented in \cite{turan2022safe}.
Another class of distributed resource allocation approaches consider a fully distributed system with cooperative agents and limited information sharing \cite{nedic2009distributed,nedic2014distributed}.
Our problem formulations are different than existing distributed resource allocation approaches in that there is noisy bandit feedback from users, a parametric form for the price response, and stage-wise safety constraints that must always be respected in spite of uncertainty about the users' response.

Given that this work is focused on learning in safety-critical applications, learning-based control techniques (surveyed in \cite{brunke2022safe}) are particularly relevant because they use previous data to improve performance while ensuring safe operation.
This includes learning-based adaptive control \cite{gahlawat2020l1, chowdhary2014bayesian}, learning-based robust control \cite{berkenkamp2015safe}, learning-based robust MPC \cite{koller2018learning} and model predictive safety certification \cite{wabersich2018linear}.
Although we use similar techniques to ensure safety, our problem fundamentally differs from the aforementioned approaches because the algorithm in our problem interacts with the same static environment at every time step, albeit with progressively more information (i.e. the environment does not evolve as a dynamical system).

In addition to learning-based control, there is also relevant literature on safe optimization.
This includes constrained optimization algorithms with unknown constraints and feasible iterates where the constraints are either linear \cite{usmanova2019safe} or nonlinear \cite{usmanova2020safe}, as well as online convex optimization with unknown constraints that need to be satisfied in the long term \cite{chen2018bandit} or constraints that need to be satisfied stage-wise
\cite{chaudhary2021safe}.
Several works have also considered the problem of safe learning under a Gaussian process prior \cite{sui2015safe, berkenkamp2021bayesian}.
However, none of these consider a multi-agent optimization problem with stage-wise constraints as we do here.

Most relevant to our work, prior work has also studied safety in linear stochastic bandits, where the reward is an unknown linear function of the action and the learner receives noisy bandit feedback of this action.
Different types of safety constraints have been considered, including constraints on the objective \cite{moradipari2020stage}, constraints on another linearly parameterized function with bandit feedback \cite{pacchiano2021stochastic} and constraints that are linear with respect to the decision variable and the unknown parameter \cite{amani2019linear}.
Although our algorithm and analysis are inspired by \cite{amani2019linear}, the key difference is that we have multiple constraints that jointly apply to multiple users (or equivalently, bandits) which necessitates different analysis techniques.
Refer to Section \ref{sec:reg_anal} for a more detailed comparison of the analysis.

This work studies a similar, but more general problem as the conference paper \cite{cdc2022paper}.
In particular, \cite{cdc2022paper} only considers the SPR problem formulation with $a_{ji} \geq 0$ for all $i\in[n],j\in[p]$, which allows for a simpler algorithm and analysis than what is presented here. The SUM formulation was not considered in \cite{cdc2022paper}.

\emph{Organization:}
Our study of the SPR and SUM problems are located in Section \ref{sec:price_resp} and \ref{sec:sum} respectively.
Additionally, the algorithms developed for these problems are applied to demand response in smart grid in Section \ref{sec:dr}.

\emph{Notation:}
For a positive integer $n$, we use $[n]$ to refer to the set of positive integers from $1$ to $n$ inclusive.
For a vector or matrix $A$, its transpose is denoted $A^{\top}$.
When $A$ is square, its minimum and maximum eigenvalues are denoted $\lambda_{min}(A)$ and $\lambda_{max} (A)$ respectively.
For a vector $v$ and positive definite matrix $P$, we use $\| v \|$ to refer to the euclidean norm of $v$ and $\| v \|_P $ to refer to $\sqrt{v^\top P v}$.
For a $d$-dimensional vector or $d$-tuple $v$ and positive integer $i$, we denote the $i$th element of $v$ as $v_i$.
In $d$ dimensions, the non-negative orthant and positive orthant are referred to as $\mathbb{R}_+^d$ and $\mathbb{R}_{++}^d$ respectively.
We use $\tilde{\mathcal{O}}$ to refer to Big-O notation that ignores logarithmic factors.
A vector of zeros and a vector of ones are indicated by $\mathbf{0}$ and $\mathbf{1}$ respectively, where the size is inferred by context.
For vectors $u$ and $v$, the notation $u \succ v$ indicates that each element of $u$ is greater than the corresponding element of $v$ and $u \succeq v$ indicates that each element of $u$ is greater than or equal to the corresponding element of $v$.
For a set $A$, $\mathbf{int}A$ refers to the interior of $A$ and $\mathbf{bd}A$ refers to the boundary of $A$.
\bltxt{The domain of a function $f$ is denoted by $\mathbf{dom} f$.}

\section{The Safe Price Response (SPR) Problem}
\label{sec:price_resp}

In this section, we first describe the SPR problem, and then present an algorithm and theoretical performance guarantees to address this problem.
The problem setup, algorithm, and regret analysis are presented in Sections \ref{sec:prob_set}, \ref{sec:alg}, and \ref{sec:reg_anal}.

\subsection{Problem Setup}
\label{sec:prob_set}

We pose a resource allocation problem involving repeated interactions between a central coordinator and $n$ users.
At each time step $t$ in horizon $[T]$, there is an interaction between each user $i\in[n]$ and the central coordinator in which the central coordinator chooses a price $\gamma_i^t$ and user $i$ responds with a resource consumption $x_i^t$.
The physical limits of the system are specified by $p$ linear constraints on the consumption vector $x^t = [x_1^t\ x_2^t\ ...\ x_n^t]^{\top}$.
The objective of the central coordinator is to maximize the total user utility (defined later), while ensuring that the constraints are satisfied at every $t\in~[T]$.

We adopt a parametric form for the price response function of the users. Specifically, we assume that the average resource consumption of user $i$ in response to the price $\gamma_i^t$ is given by its average price response function,
\begin{equation}
\label{eqn:price_resp}
    x_i^t = x_i(\gamma_i^t; \theta_i^*) = h_i(\gamma_i^t)^{\top} \theta_i^*,
\end{equation}
where $\theta_i^* \in \mathbb{R}^m_+$ is a nonzero parameter that is unknown to the central coordinator and $h_i: \mathbb{R} \rightarrow \mathbb{R}^m_{+}$ is a known continuous and non-increasing function \bltxt{where $\mathbf{dom} h_i = \mathbb{R}$}.
Equation \eqref{eqn:price_resp} models each user's price response as an unknown mix of given \textit{price response signatures}, where $h_i$ specifies the set of possible price response signatures that may be present in the user population and $\theta_i^*$ specifies what 
 (unknown) mix of these price response signatures make up user $i$'s price response.

\begin{bl}
For example, in the electricity demand response set up, the total price response of each user to posted prices is composed of the sum of usage of individual flexible appliances (e.g., electric vehicle or dishwasher) and flexible appliances have a limited number of ways to respond to prices (which is justified given the
automated nature of price response from home energy management systems, the limited types of flexible appliances, and
the common electricity load patterns of electricity customers). 
For example, time shiftable loads with similar energy demand and similar deadlines would show similar price response signatures. The response of an electric vehicle to posted prices can be determined by the charging rate, the amount of required charge, and the charging deadline. If these parameters were known, the response can be fully determined. However, since this is not the case,  we assume that each appliance can have one of a number of known price response signatures captured by $h_i(\gamma_i^t)$. The central coordinator does not know the exact combination of active price response signatures in each user's home (captured by $\theta_i^*$) and as such, needs to learn this information by choosing prices and observing the electricity usage of the homes.
We provide a more in-depth discussion of the electricity demand response example in Section \ref{sec:dr}.
\end{bl}

The average price response function is non-increasing by definition, which is natural as consumption of a resource will generally not increase as price increases.
Also, due to inherent stochasticities present in the users' behaviors, the central coordinator observes the average resource consumption with some additive noise $\mu_i^t$. Specifically, on day $t$, the central coordinator observes the following response:
\begin{equation}
    \bar{x}_i^t = x_i(\gamma_i^t; \theta_i^*) + \mu_i^t.
\end{equation}
We assume the following noise model on $\mu_i^t$, which is often used in similar problems (e.g.  \cite{amani2019linear, chaudhary2021safe, abbasi2011improved}).
\begin{assumption}
\label{ass:noise_model}
    For all $i \in [n]$ and $t \in [T]$, the noise $\mu_i^t$ is conditionally $\sigma$-subgaussian such that, given the history $\mathcal{F}_i^t = \sigma(\gamma_i^1,\gamma_i^1,...,\gamma_i^{t+1},\mu_i^1, \mu_i^2, ...,\mu_i^t)$,  $\mathbb{E}[\mu_i^t | \mathcal{F}_i^{t-1} ] = 0$ and $\mathbb{E}[e^{\lambda \mu_i^t} | \mathcal{F}_i^{t-1} ] \leq \exp(\frac{\lambda^2 \sigma^2}{2}), \forall \lambda \in \mathbb{R}$.
\end{assumption}

In choosing the price vector $\gamma^t = [\gamma_1^t\ \gamma_2^t\ ...\ \gamma_n^t]^{\top}$ for the users at each time step $t \in [T]$, there are various objectives that the central coordinator might have depending on the specific application.
For example, in infrastructure systems that supply critical resources, it is important that the allocation of resources is fair such that under-served communities are not charged high prices, or large consumers do not block access to resources.
In any case, the central coordinator can design utility functions for each user to achieve the objective at hand.
Utility functions map the resource consumption of a user to utility and have been extensively studied (e.g. \cite{kelly1998rate, mo2000fair, lan2010axiomatic}).
Here, the utility function for user $i$ is the strictly increasing function $f_i : \mathbb{R} \rightarrow \mathbb{R}$, which means that the total utility for the system at time step $t$ is $\sum_{i=1}^n f_i (x_i^t)$.

Despite the high utility that comes with unrestricted consumption, there are physical limits on the system that restrict which consumption vectors are allowable.
These limits are specified by $p$ linear constraints on the users' consumptions, such that the set of feasible consumption vectors is compact and defined as
\begin{equation}
\label{eqn:feas_set}
    \bar{E} = \left\{ x \in \mathbb{R}^n : \sum_{i=1}^n a_{ji} x_i \leq c_j, \forall j \in [p] \right\},
\end{equation}
where $\{ a_{ji} \}_{i \in [n], j \in [p]}$ and $\{c_j\}_{j \in [p]}$ are known to the central coordinator.
Since the central coordinator only has access to noisy observations of the price response, it is in general impossible to design any method that enforces constraint \eqref{eqn:feas_set}  deterministically over the course of $T$ days without additional (but unrealistic) assumptions. 
As such, we take the next alternative, which is to slightly relax this requirement of safety and ensure it with a high probability jointly throughout the $T$ day operating time of our system.
That is, the central coordinator needs to ensure that every consumption vector $x^t$ is in $\bar{E}$ for all $t$ in $[T]$ with high probability.
Note that this is different from a regular chance constraint, which ensures constraint satisfaction with a certain probability per time step, meaning that the violation probability would compound as $T$ grows.
Since any feasible algorithm will ensure that all consumption vectors are in $\bar{E}$ with high probability, the following Lipschitz assumption on $f_i$ only needs to hold for feasible consumption vectors.
\begin{assumption}
\label{ass:lipschitz}
For all $i \in [n]$, the utility function $f_i$ is $M$-Lipschitz such that $|f_i(x_i^1) - f_i(x_i^2)| \leq M | x_i^1 - x_i^2 |$ for all $x^1$, $x^2$ in $\bar{E}$.
\end{assumption}

Given the model that has been specified so far, we can see that if the central coordinator had full information (i.e. knew $\{ \theta_i^* \}_{\forall i \in [n]}$) they would choose the price for every time step as
\begin{equation}
\label{eqn:opt_price}
    \gamma^* \in \argmax_{\gamma \in \bar{D}} \sum_{i=1}^n f_i (x_i (\gamma_i; \theta_i^*)),
\end{equation}
where $\gamma = [\gamma_1\ \gamma_2\ ...\ \gamma_n]^{\top}$ and
\begin{equation}
\label{eqn:feasp_set}
    \bar{D} = \left \{ \gamma \in \mathbb{R}^n : \sum_{i=1}^n a_{ji} x_i(\gamma_i; \theta_i^*) \leq c_j, \forall j \in [p] \right\},
\end{equation}
which we call the feasible price set.
The central coordinator cannot simply solve \eqref{eqn:opt_price} and choose $\gamma^*$ immediately because the $\theta_i^*$ are unknown to them.
Instead, the central coordinator uses the information from previous time steps (i.e. $\{ (\gamma^{\tau}, x^{\tau} ) \}_{\tau = 1}^{t-1}$) to choose the current price $\gamma_i^t$.
The central coordinator's performance in this task is measured by how close the total realized utility is to the optimal utility over $T$ time steps, which is referred to as regret:
\begin{equation}
R_T = \sum_{t=1}^T \sum_{i=1}^n \Big[ f_i \big( x_i ( \gamma_i^*; \theta_i^* ) \big) - f_i \big( x_i ( \gamma_i^t;\theta_i^* ) \big) \Big]
\end{equation}
The central coordinator's objective is to ensure that there is low regret and that, with high probability, every $\gamma^t$  is in $\bar{D}$ for all $t \in [T]$.

\begin{bl}
Given the stated problem, we make a technical assumption on the price response function in the following.
\end{bl}
\begin{assumption}
\label{ass:price_resp}
    For all $i \in [n]$, there exists a constant $S$ such that $\| \theta_i^* \| \leq S$.
    Also, $h_i$ satisfies $\lim_{\gamma_i \rightarrow -\infty} h_i(\gamma_i) = \boldsymbol{\infty}$ and $\lim_{\gamma_i \rightarrow \infty} h_i(\gamma_i) = \mathbf{0}$ for all $i \in [n]$.
\end{assumption}
\begin{bl}
The first part of this assumption ensures that the norm of $\theta_i^*$ is bounded by $S$ which is standard in the bandit learning literature, e.g. \cite{dani2008stochastic,abbasi2011improved}.
In a real-world setting, an appropriate $S$ can be found with domain knowledge.
For example, in the demand response setup, an appropriate $S$ can be chosen by finding the worst-case estimates of the size of each appliance in each home.
The second part of Assumption \ref{ass:price_resp} ensures that there is a price (which may be negative) that will persuade the user to consume any  non-negative quantity of the resource.
This will generally be satisfied in real-world settings because the price could be negative, i.e. the central coordinator would pay the user to consume the resource. Such negative prices are occasionally used in power systems, for example. 
\end{bl}

As defined thus far, the problem does not provide the central coordinator with enough information to choose initial prices that satisfy the constraints.
To remedy this, we ensure by assumption that the prior knowledge on $\theta^*$, i.e. the fact that $\| \theta_i^* \| \leq S$ for all $i$ in $[n]$, is enough information for the central coordinator to construct a set of prices that is strictly within $\bar{D}$.
To state such an assumption, we first define the initial confidence set for $\theta^* = (\theta_1^*, \theta_2^*, ..., \theta_n^*)$ as $C^0 = C^0_1 \times C^0_2 \times ... \times C^0_n$, where
\begin{equation}
\label{eqn:init_set}
    C_i^0 = \{  \theta_i \in \mathbb{R}^m_+ : \| \theta_i \| \leq S\}
\end{equation}
for all $i$ in $[n]$.
Since $\theta^*$ is known to be in $C^0$, it follows that 
\begin{equation}
    D^0 := \left\{ \gamma \in \mathbb{R}^n : \sum_{i=1}^n a_{ji} \theta_i^T h_i(\gamma_i) \leq c_j - \zeta, \forall \theta \in C^0 \right\}
\end{equation}
is a subset of $\bar{D}$ for any $\zeta \geq 0$.
In Assumption \ref{ass:init_set}, we assume that $D^0$ is nonempty for some $\zeta$, providing the algorithm with a set of prices that are initially known to strictly satisfy the constraints.
We will also consider a set that is known to be larger than $\bar{D}$ given that $\theta^* \in C^0$,
\begin{equation}
    \hat{D}^0 := \left\{\gamma \in \mathbb{R}^n : \exists \theta \in C^0 \text{ s.t. } \sum_{i=1}^n a_{ji} \theta_i^T h_i(\gamma_i) \leq c_j \right\}.
\end{equation}
Since each price in $\hat{D}^0$ only needs to satisfy the constraints for \emph{some} $\theta \in C^0$, any algorithm that incorporates the information that $\theta_* \in C^0$ will only choose prices that are in $\hat{D}^0$.
Therefore, by assuming that the norm of $h_i$ is bounded for any price in $\hat{D}^0$, Assumption~\ref{ass:init_set} ensures that the norm of $h_i$ is bounded by a constant $L$ for any prices that are chosen by the algorithm.
In the real-world, such a constant $L$ can be simply calculated given that $h_i$ and $\hat{D}^0$ are known to the central coordinator.


\begin{assumption}
\label{ass:init_set}
    There exists positive constants $\zeta$ and $\kappa$ such that the initial safe set $D^0$ is nonempty and $| a_{ji} | \leq \kappa$ for all $i$ in $[n]$ and $j$ in $[p]$.
    Additionally, there exists a positive constant $L$ such that $\max_{i \in [n]} \| h_i (\gamma_i) \| \leq L$ for all $\gamma \in \hat{D}^0$.
    Also, there does not exist a nonzero $\alpha_i \in \mathbb{R}^m$ for each $i \in [n]$ such that $\alpha_i ^{\top} h_i (\gamma_i) = 0$ for all $[\gamma_1\ \gamma_2\ ...\ \gamma_n]^{\top}$ in $D^0$.
\end{assumption}

\begin{bl}
In addition to what has already been discussed, Assumption \ref{ass:init_set} also specifies that 1) $| a_{ji} |$ is bounded by some constant $\kappa$ for all $i,j$ and that 2) the elements of $h_i$ are linearly independent on $D^0$.
Note that point 1 is mild given that the value of $| a_{ji} |$ is known and will be finite in any real-world application and therefore such a $\kappa$ can simply be calculated.
The linear independence assumption of point 2 specifies that the elements of $h_i$ are not scalar multiples of each other for all prices in $D^0$.
This ensures that sampling $D^0$ will provide sufficient information about every dimension of $\theta_i^*$.
In practice, this requires that the selected price response signatures are sufficiently different, which is a design choice.
\end{bl}

With the problem established, we develop an appropriate algorithm in the next section.

\subsection{Proposed Algorithm}
\label{sec:alg}
The proposed algorithm (Algorithm \ref{alg:safe_price}) first performs pure exploration by choosing prices in the initial safe set $D^0$ for an appropriately chosen duration $T'$, and then for the remaining time steps, chooses the prices via the optimism in the face of uncertainty (OFU) paradigm restricted to prices that are known to satisfy the constraints.
As proven in the analysis, this algorithm achieves sublinear regret while ensuring that the prices are in the feasible price set $\bar{D}$ for all time steps with high probability.

\begin{algorithm}[t]
    \caption{Safe Price Response Algorithm}
    \label{alg:safe_price}
    \begin{algorithmic}[1]
    \Require $\{ h_i \}_{i \in [n]}$, $\{ a_{ji} \}_{i \in [n], j \in [p]}$, $\{ c_j \}_{j \in [p]}$, $\{f_i\}_{i \in [n]}$, $S$, $L$
    \For{$t = 1$ to $T'$}
        \State Broadcast $\gamma^t \sim \text{Unif}(D^0)$ to the users.
        \State Observe noisy consumption $\bar{x}^t$.
    \EndFor
    \State Construct confidence set $C^{T'}$ with \eqref{eqn:conf_set}.
    \State Construct safe price set $D^{T'}$ with \eqref{eqn:safe_dec}.
    \For{$t = T' + 1$ to $T$}
        \State Find optimistic price $\gamma^t$ with \eqref{eqn:ofu_price}.
        \State Broadcast $\gamma^t$ to the users.
        \State Observe noisy consumption $\bar{x}^t$.
        \State Update confidence set $C^t$ with \eqref{eqn:conf_set}.
        \State Update safe price set $D^t$ with \eqref{eqn:safe_dec}.
    \EndFor
    \end{algorithmic}
\end{algorithm}

In order to both implement OFU and determine which prices are safe, the proposed algorithm uses previous price response information $\{ (\gamma^{\tau}, x^{\tau} ) \}_{\tau = 1}^{t-1}$ to construct confidence sets in which $\theta_i^*$ lie with high probability.
Given the regularized least-squares estimator for $\theta_i^*$ at time step $t$ with regularization paramater $\nu > 0$,
\begin{equation}
    \hat{\theta}_i^t = [V_i^t]^{-1} \sum_{s=1}^t h_i (\gamma_i^s) \bar{x}_i^s,
\end{equation}
where the gram matrix is
\begin{equation}
\label{eqn:gram_mat}
    V_i^t = \nu I + \sum_{s=1}^t h_i (\gamma_i^s) h_i (\gamma_i^s)^{\top},
\end{equation}
we use a modified version of the confidence set developed in \cite{abbasi2011improved}.
\begin{theorem}
\label{thm:conf_set}
(Theorem 1 in \cite{abbasi2011improved} modified for multiple users) Let Assumptions \ref{ass:noise_model} and \ref{ass:price_resp} hold. Recall the definition of $V_i^t$ in \eqref{eqn:gram_mat}. Then for all $i$ in $[n]$ and $t \geq 0$, we have with probability at least $1 - \delta$ that $\theta_i^*$ lies in the set
\begin{equation}
\label{eqn:conf_set}
C_i^t = \left\{ \theta_i \in \mathbb{R}^m_+ : \left\| \theta_i - \hat{\theta}_i^t \right\|_{V_i^t} \leq \sqrt{\beta^t}, \| \theta_i \| \leq S \right\}
\end{equation}
where
\begin{equation*}
    \sqrt{\beta^t} = \sigma \sqrt{m \log \Big( \frac{1 + t L^2 / \nu}{\delta/n} \Big) } + \sqrt{\nu} S.
\end{equation*}
\end{theorem}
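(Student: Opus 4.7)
The plan is to reduce the statement to the single-user self-normalized confidence ellipsoid of Abbasi-Yadkori, Pál, and Szepesvári (Theorem 1 of \cite{abbasi2011improved}) and then promote it to $n$ users with a union bound, which is exactly what the $\delta/n$ inside the logarithm suggests.

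First I would fix an arbitrary user $i \in [n]$ and check that the hypotheses of the original bound are satisfied for user $i$'s regression problem $\bar{x}_i^t = h_i(\gamma_i^t)^\top \theta_i^* + \mu_i^t$. The four ingredients are: (a) by Assumption \ref{ass:noise_model}, $\{\mu_i^t\}$ is a conditionally $\sigma$-subgaussian martingale difference sequence with respect to the filtration $\{\mathcal{F}_i^t\}$; (b) the regressor $h_i(\gamma_i^t)$ is $\mathcal{F}_i^{t-1}$-measurable, since Algorithm \ref{alg:safe_price} chooses $\gamma^t$ using only information available strictly before time $t$; (c) by Assumption \ref{ass:price_resp}, $\|\theta_i^*\| \leq S$; and (d) by Assumption \ref{ass:init_set}, every price $\gamma^t$ the algorithm broadcasts lies in $D^0 \cup \hat{D}^0 \subseteq \hat{D}^0$, so $\|h_i(\gamma_i^t)\| \leq L$ uniformly in $t$. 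Condition (d) is the one place where a little care is needed, because the bound $L$ is only assumed on $\hat{D}^0$; however this is precisely the set to which the algorithm's choices are restricted, so the standard self-normalized martingale argument applies verbatim.

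Having verified the hypotheses, I would invoke Theorem 1 of \cite{abbasi2011improved} with confidence parameter $\delta/n$ (and ambient dimension $m$) to conclude that, with probability at least $1 - \delta/n$, the event
\[
    \bigl\{\, \|\theta_i^* - \hat\theta_i^t\|_{V_i^t} \leq \sqrt{\beta^t} \ \text{for all } t \geq 0 \,\bigr\}
\]
holds, with $\sqrt{\beta^t}$ exactly as stated. Since $\theta_i^* \in \mathbb{R}^m_+$ with $\|\theta_i^*\| \leq S$ is automatic from Assumption \ref{ass:price_resp}, intersecting the ellipsoid with $\{\theta_i \in \mathbb{R}^m_+ : \|\theta_i\| \leq S\}$ preserves membership of $\theta_i^*$, yielding $\theta_i^* \in C_i^t$ for all $t \geq 0$ on this event.

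Finally I would take a union bound over the $n$ users: the intersection of the $n$ good events, each of probability at least $1 - \delta/n$, occurs with probability at least $1 - \delta$, which is the claimed joint statement. There is no real obstacle here beyond bookkeeping; the only subtlety worth flagging in the write-up is the measurability/boundedness check in point (d), since the uniform bound $L$ on the regressors is what lets us apply the black-box theorem directly rather than re-deriving the self-normalized tail inequality.
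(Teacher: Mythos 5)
Your proposal is correct and matches what the paper intends: the paper gives no separate proof of this theorem, treating it as Theorem 1 (resp.\ Theorem 2) of \cite{abbasi2011improved} applied per user with confidence $\delta/n$ and a union bound over the $n$ users, which is exactly your argument, including the needed checks that $\|\theta_i^*\|\leq S$, the noise is conditionally subgaussian, and $\|h_i(\gamma_i^t)\|\leq L$ because the algorithm's prices lie in $D^0\cup D^{t-1}\subseteq \hat{D}^0$ deterministically. The only cosmetic slip is writing ``$D^0\cup\hat{D}^0\subseteq\hat{D}^0$'' where you mean that the chosen prices lie in $D^0\cup D^{t-1}\subseteq\hat{D}^0$; the conclusion you draw from it is still valid.
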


The pure exploration phase of the algorithm is used to control the minimum eigenvalue of $V_i^{T'}$ and hence control the size of the confidence set $C^{T'}$.
In order to shrink the confidence set in a controlled manner, the algorithm samples prices IID from the initial safe set $D^0$.
Formally, we can state that as $\gamma^t \overset{\text{iid}}{\sim} \text{Unif}(D^0)$ for all $t$ in $[1,T']$.
As proven in Lemma \ref{lem:pos_def1}, this exploration strategy ensures that the parameter $\lambda_-$ (defined in \eqref{eqn:min_eig}) is strictly greater than zero.
This means that the confidence set $C^{T'}$ will shrink with $T'$ and therefore guarantees that the algorithm will have sublinear regret.

\begin{lemma}
\label{lem:pos_def1}
    Let Assumption \ref{ass:init_set} hold. Then, with Algorithm \ref{alg:safe_price} we have that 
    \begin{equation}
    \label{eqn:min_eig}
        \lambda_- := \min_{i \in [n]} \bigg[ \lambda_{min} \Big(\mathbb{E} \left[ h_i(\gamma_i^t) h_i(\gamma_i^t)^{\top} \right] \Big) \bigg] > 0,
    \end{equation}
    for all $t$ in $[1,T']$.
\end{lemma}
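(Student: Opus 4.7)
The plan is to prove this by contradiction, leveraging the linear independence clause in Assumption~\ref{ass:init_set}. First, since $\gamma^t \overset{\text{iid}}{\sim} \text{Unif}(D^0)$ throughout the exploration phase, the matrix $\mathbb{E}[h_i(\gamma_i^t) h_i(\gamma_i^t)^{\top}]$ does not depend on $t$, and it is always symmetric positive semidefinite. Since the minimum of finitely many positive numbers is positive, it suffices to show that this matrix is positive definite for each fixed $i \in [n]$.

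Assume for contradiction that $\lambda_{min}\bigl(\mathbb{E}[h_i(\gamma_i^t) h_i(\gamma_i^t)^{\top}]\bigr) = 0$ for some $i$. Then there exists a nonzero $\alpha_i \in \mathbb{R}^m$ with
\begin{equation*}
0 \;=\; \alpha_i^{\top} \mathbb{E}\bigl[h_i(\gamma_i^t) h_i(\gamma_i^t)^{\top}\bigr] \alpha_i \;=\; \mathbb{E}\bigl[(\alpha_i^{\top} h_i(\gamma_i^t))^2\bigr] \;=\; \frac{1}{\mathrm{vol}(D^0)} \int_{D^0} (\alpha_i^{\top} h_i(\gamma_i))^2 \, d\gamma.
\end{equation*}
Because the integrand is non-negative, it must vanish for Lebesgue-almost every $\gamma \in D^0$. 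Continuity of $h_i$ then upgrades this to pointwise vanishing on $\mathbf{int}\,D^0$: if some $\gamma_0 \in \mathbf{int}\,D^0$ satisfied $\alpha_i^{\top} h_i(\gamma_{0,i}) \neq 0$, then by continuity a small open ball around $\gamma_0$ (which lies in $D^0$ and has positive Lebesgue measure) would be entirely nonzero, contradicting the almost-everywhere statement.

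Next, I would extend the identity $\alpha_i^{\top} h_i(\gamma_i) = 0$ from $\mathbf{int}\,D^0$ to all of $D^0$. Since $D^0$ is defined through finitely many closed inequalities that enjoy the strict slack $\zeta > 0$ against the $c_j$ thresholds, and since $h_i$ is continuous, every point of $D^0$ is approached by a sequence of points in $\mathbf{int}\,D^0$; then continuity of $h_i$ carries the identity to all of $D^0$. This contradicts the linear independence clause of Assumption~\ref{ass:init_set}, establishing $\lambda_- > 0$ for each $t$ in $[1,T']$.

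The main obstacle is the step that promotes the almost-everywhere identity to a pointwise identity on the full set $D^0$: this implicitly requires both that $D^0$ has nonempty interior (which is already needed for $\text{Unif}(D^0)$ to be well-defined) and that $D^0$ equals the closure of its interior (so that continuity suffices to extend beyond $\mathbf{int}\,D^0$). Both properties should follow from the strict slack $\zeta$ together with continuity of $h_i$, but are worth spelling out explicitly to ensure the contradiction with Assumption~\ref{ass:init_set} is watertight.
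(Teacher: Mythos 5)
Your proof is correct and follows essentially the same route as the paper: assume $\mathbb{E}\bigl[h_i(\gamma_i^t)h_i(\gamma_i^t)^{\top}\bigr]$ is singular for some $i$, deduce $\mathbb{E}\bigl[(\alpha_i^{\top}h_i(\gamma_i^t))^2\bigr]=0$ for a nonzero $\alpha_i$, and contradict the linear-independence clause of Assumption~\ref{ass:init_set} via the uniform sampling over $D^0$. The paper's own proof is in fact terser than yours---it passes directly from the almost-sure vanishing of $\alpha_i^{\top}h_i(\gamma_i^t)$ to the contradiction without the a.e.-to-pointwise continuity argument or the $D^0$-regularity caveat you spell out---so your added detail only makes explicit a step the paper glosses over.
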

\begin{proof}
The proof is given in Lemma \ref{lem:pos_def} in Appendix~\ref{sec:apx_rt1}.
\end{proof}

For time steps after the pure exploration phase, the algorithm chooses actions optimistically within a conservative inner approximation of the feasible price set, which we call the safe price set.
The safe price set is defined as
\begin{equation}
\label{eqn:safe_dec}
    D^t = \left\{ \gamma \in \mathbb{R}^n : \sum_{i=1}^n a_{ji} x_i(\gamma_i; \theta_i) \leq c_j, \forall j \in [p], \forall \theta \in C^t \right\}
\end{equation}
where $C^t = C_1^t \times C_2^t \times ... \times C_n^t$.
Equation \eqref{eqn:safe_dec} implies that for any $\gamma \in D^t$ and any $\theta \in C^t$, it holds that $[x_1(\gamma_1, \theta_1), ..., x_n(\gamma_n, \theta_n)]^\top$ is in the feasible consumption set $\Bar{E}$.
Since $\theta^*$ is in $C^t$ for all $t \in [T]$ with high probability (due to Theorem \ref{thm:conf_set}), any $\gamma \in D^t$ will yield a feasible consumption vector with the same probability.
Therefore, the algorithm ensures that the price vectors  at all time steps are feasible with high probability by choosing each price vector $\gamma^t$ from the safe price set $D^t$.
Among the price vectors in $D^t$, the algorithm chooses one that is optimistic, i.e. the algorithm finds a $\gamma^t$ such that
\begin{equation}
\label{eqn:ofu_price}
    (\gamma^t, \tilde{\theta}^t) \in \argmax_{(\gamma,\theta) \in D^{t-1} \times C^{t-1}} \sum_{i=1}^n f_i \left( h_i(\gamma_i)^{\top} \theta_i \right).
\end{equation}
For each time step after the pure exploration phase (when $t > T'$) the algorithm broadcasts the optimistic price found with \eqref{eqn:ofu_price}, observes the noisy consumption $\bar{x}^t$ and then updates the confidence set $C^t$.
In the next section we provide theoretical regret guarantees for the proposed algorithm.

\subsection{Regret Analysis}
\label{sec:reg_anal}

In this section, we prove that, with high probability, the regret of the proposed algorithm is $\tilde{\mathcal{O}}(T^{2/3})$ as given by Theorem \ref{thm:reg_bound}. 
This regret bound is comparable with similar safe learning algorithms as \cite{chaudhary2021safe} and \cite{amani2019linear} give the same order bound.

\begin{theorem}
\label{thm:reg_bound}
    Let Assumptions \labelcref{ass:init_set,ass:lipschitz,ass:noise_model,ass:price_resp} hold. Then with probability at least $1 - 2 \delta$, we have that the regret of Algorithm \ref{alg:safe_price} satisfies
    \begin{equation*}
    \label{eqn:reg_bound}
    \begin{split}
        R_T \leq & n M \max(LS,1) \sqrt{8 (T - T') \beta^T m \log \left( 1 + \frac{T L^2}{m \nu} \right)}\\
        & + 2 MnLST' + \frac{4 \sqrt{2} (T - T') \kappa M n^2 L^2 S \sqrt{\beta^T}}{\zeta \sqrt{2 \nu + \lambda_- T'}}
    \end{split}
    \end{equation*}
    when $T' \geq t_{\delta} = \frac{8 L^2}{\lambda_-} \log(\frac{nm}{\delta})$.
    In particular, choosing $T' = \max(n^{2/3} T^{2/3}, t_{\delta})$ ensures that $R_T \in \tilde{\mathcal{O}}(n^{5/3} T^{2/3})$.
\end{theorem}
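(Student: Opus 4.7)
The plan is to identify a ``good event'' of probability at least $1-2\delta$ on which (i) $\theta^*\in C^t$ for all $t$ and (ii) the exploration phase has produced a well-conditioned Gram matrix, and then decompose the regret into an exploration piece, an optimism-based bandit piece, and a conservatism piece. Let $\mathcal{E}=\{\theta_i^*\in C_i^t,\forall i,t\}$; Theorem \ref{thm:conf_set} with a union bound over users gives $\mathbb{P}(\mathcal{E})\ge 1-\delta$, and on $\mathcal{E}$ the construction \eqref{eqn:safe_dec} forces $D^t\subseteq\bar D$, so every posted price is safe. Let $\mathcal{G}$ be the event that $\lambda_{\min}(V_i^{T'})\ge(2\nu+\lambda_-T')/2$ for every $i$; the i.i.d.\ sampling from $D^0$, Lemma \ref{lem:pos_def1}, and $\|h_i\|\le L$ make $\mathcal{G}$ a standard consequence of matrix Chernoff, and one verifies $\mathbb{P}(\mathcal{G})\ge 1-\delta$ precisely when $T'\ge t_\delta=\tfrac{8L^2}{\lambda_-}\log(nm/\delta)$. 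The rest of the argument is conditioned on $\mathcal{E}\cap\mathcal{G}$.

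Splitting $R_T=R_T^{\mathrm{expl}}+R_T^{\mathrm{OFU}}$ at time $T'$ handles the exploration piece immediately: Assumption \ref{ass:lipschitz} and $|x_i(\gamma,\theta_i^*)|\le LS$ (from $\gamma\in\hat D^0$ and $\|\theta_i^*\|\le S$) give a uniform per-round regret of $2MnLS$, which sums to the middle term $2MnLST'$. The hard part is the OFU phase, for which I construct a proxy price $\gamma^{*,t}\in D^t$ close to $\gamma^*$. Because $\gamma\mapsto h_i(\gamma)^\top\theta_i^*$ is nonlinear, a straight convex combination in price space does not induce a convex combination of constraint values; instead I pick any $\tilde\gamma\in D^0$ and use that $\gamma_i\mapsto x_i(\gamma_i,\theta_i^*)$ is continuous and monotone (Assumption \ref{ass:price_resp}) to invoke the intermediate value theorem, picking $\gamma^{*,t}_i$ so that $x_i(\gamma^{*,t}_i,\theta_i^*)=(1-\alpha^t)x_i(\gamma^*_i,\theta_i^*)+\alpha^t x_i(\tilde\gamma_i,\theta_i^*)$ for a scalar $\alpha^t\in[0,1]$. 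A short calculation shows the slack of $\gamma^{*,t}$ in the true constraint $j$ is at least $\alpha^t\zeta$, while replacing $\theta^*$ by an arbitrary $\theta\in C^t$ tightens the constraint by at most $G^t:=2n\kappa L\sqrt{\beta^t}/\sqrt{\lambda_{\min}(V^{T'})}$ (using $|a_{ji}|\le\kappa$, $\|h_i\|\le L$, $V^t\succeq V^{T'}$, and $\|\theta-\theta^*\|_{V^t}\le 2\sqrt{\beta^t}$). Choosing $\alpha^t=\min(1,G^t/\zeta)$ then places $\gamma^{*,t}$ in $D^t$.

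With the proxy in hand, I decompose the instantaneous OFU regret as $(A)+(B)$ where $(A)=\sum_i[f_i(x_i(\gamma_i^*,\theta_i^*))-f_i(x_i(\gamma_i^{*,t},\theta_i^*))]$ and $(B)=\sum_i[f_i(x_i(\gamma_i^{*,t},\theta_i^*))-f_i(x_i(\gamma_i^t,\theta_i^*))]$. For $(A)$, Lipschitzness and the IVT construction bound each summand by $2MLS\,\alpha^t$, so $\sum_t(A)\le(2MnLS/\zeta)\sum_t G^t$, which after substituting the lower bound on $\lambda_{\min}(V^{T'})$ reproduces exactly the third term in the theorem. For $(B)$, the optimism rule \eqref{eqn:ofu_price} and $(\gamma^{*,t},\theta^*)\in D^{t-1}\times C^{t-1}$ give $\sum_i f_i(x_i(\gamma_i^{*,t},\theta_i^*))\le\sum_i f_i(x_i(\gamma_i^t,\tilde\theta_i^t))$, so $(B)\le M\sum_i|h_i(\gamma_i^t)^\top(\tilde\theta_i^t-\theta_i^*)|$. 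Cauchy--Schwarz in the $V_i^{t-1}$ norm together with $\|\tilde\theta_i^t-\theta_i^*\|_{V_i^{t-1}}\le 2\sqrt{\beta^{t-1}}$ yields $(B)\le 2M\sqrt{\beta^{t-1}}\sum_i\|h_i(\gamma_i^t)\|_{(V_i^{t-1})^{-1}}$; the clipping identity $\min(2LS,2\sqrt{\beta}\|h\|_{V^{-1}})\le 2\sqrt{\beta}\max(LS,1)\min(1,\|h\|_{V^{-1}})$ combined with Cauchy--Schwarz in $t$ and the elliptical potential lemma then produces the first term. Plugging in $T'=\max(n^{2/3}T^{2/3},t_\delta)$ makes the middle and third terms both $\tilde O(n^{5/3}T^{2/3})$, while the first term is $\tilde O(n\sqrt T)$, giving the stated rate.

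The main obstacle I anticipate is the proxy construction: unlike the linear bandit safety analyses such as \cite{amani2019linear}, where convex combinations in decision space translate directly into convex combinations of constraint functions, here the safe set is defined through the nonlinear maps $h_i$, so naive interpolation in price space does not preserve constraint slackness. Realizing the interpolation in consumption space via the intermediate value theorem and pulling it back to prices is the technical crux; it is what lets one simultaneously control feasibility across all $p$ constraints while keeping $\alpha^t$ scale as $\sqrt{\beta^t}/\sqrt{T'}$, which is the exact rate needed for the third regret term to match the exploration cost.
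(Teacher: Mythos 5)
Your proposal is correct in substance and reaches the same three-term bound, but your treatment of the conservatism (safe-set gap) term is genuinely different from the paper's. The paper defines $\alpha^t$ implicitly as the largest coefficient for which the interpolated point $\alpha h^* + (1-\alpha)h^0$ lies in a shrunk safe set in $h$-space, and must then prove that some constraint is tight at the optimum (Lemmas \ref{lem:equiv_sol} and \ref{lem:tight_const}), that the interpolated point sits on the boundary of the relaxed constraint set (Lemma \ref{lem:z_tight}), and a support-function/convexity fact (Lemma \ref{lem:bstar}) before it can turn tightness into the lower bound $1-\alpha^t \le \ell^t/\zeta$. You instead build an explicit feasible proxy price by interpolating the scalar consumptions $x_i(\cdot,\theta_i^*)$ and pulling the interpolated value back to a price via the intermediate value theorem, with the weight $\alpha^t=\min(1,G^t/\zeta)$ chosen just large enough to absorb the worst-case estimation error; feasibility is then a direct slack-versus-error comparison that holds for every constraint $j$ simultaneously, so you never need the tightness, boundary, or support-function lemmas (and you sidestep the question of whether a convex combination of $h$-vectors is realizable by an actual price, which the paper's $h$-space interpolation leans on). The price of this shortcut is only bookkeeping, and a few details should be nailed down: the OFU step \eqref{eqn:ofu_price} optimizes over $D^{t-1}\times C^{t-1}$, so the proxy must be placed in $D^{t-1}$ (your indices are off by one); when $G^t/\zeta>1$ the IVT proxy with $\alpha^t=1$ matches $\tilde\gamma$ only in consumption and need not lie in the safe set since $h_i$ is merely non-increasing, so in that case take the proxy to be $\tilde\gamma$ itself and use $D^0\subseteq D^{t-1}$; you should note explicitly that the proxy lies in $\hat{D}^0$ (its consumption under $\theta^*\in C^0$ satisfies the constraints), which is what licenses $\|h_i(\gamma_i^{*,t})\|\le L$ in your bound on $G^t$; and your clipping inequality, like the paper's, requires $\beta^T\ge 1$. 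The exploration term, the optimism/elliptical-potential term, and the $1-2\delta$ accounting via the confidence-set event and the minimum-eigenvalue event coincide with the paper's argument.
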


The complete proof of Theorem \ref{thm:reg_bound} is given in Appendix \ref{sec:apx_thm2}.
This proof relies on a decomposition of the instantaneous regret that separates (I) the instantaneous regret due to the difference between the safe price set $D^t$ and the true price set $\bar{D}$, and (II) the instantaneous regret due to the size of the confidence set for $\theta_i^*$.
Given the definition of instantaneous regret,
\begin{equation}
    r_t =  \sum_{i=1}^n \left[ f_i \left( h_i(\gamma_i^*)^{\top} \theta_i^* \right) - f_i \left(h_i(\gamma_i^t)^{\top} \theta_i^* \right) \right],
\end{equation}
we have the decomposition
\begin{equation}
    r_t = r_t^I + r_t^{II},
\end{equation}
where
\begin{align}
\begin{split}
    r_t^I & = \sum_{i=1}^n \left[ f_i \Big( h_i(\gamma_i^*)^{\top} \theta_i^* \Big) - f_i \Big( h_i(\gamma_i^t)^{\top} \tilde{\theta}_i^t \Big) \right],\\
    r_t^{II} & =  \sum_{i=1}^n \left[ f_i \Big( h_i(\gamma_i^t)^{\top} \tilde{\theta}_i^t \Big) - f_i \Big( h_i(\gamma_i^t)^{\top} \theta_i^* \Big) \right].
\end{split}
\end{align}
Establishing the bound on $r_t^{II}$ uses similar techniques to the stochastic linear bandit analysis, such as \cite{abbasi2011improved}.
Bounding $r_t^I$ is somewhat more challenging and existing theory proves to be largely insufficient.

We bound $r_t^I$ for all time steps greater than $T'$ in Lemma~\ref{lem:term1}.

\begin{lemma}
\label{lem:term1} 
Let Assumptions \labelcref{ass:noise_model,ass:lipschitz,ass:price_resp,ass:init_set} hold.
Recall the definition of $\gamma_i^*$ and $\tilde{\theta}^t$ in \eqref{eqn:opt_price} and \eqref{eqn:ofu_price} respectively.
Then, the set of prices chosen by Algorithm 1 for time steps $t$ greater than $T' \geq t_{\delta}$, $\{\gamma_i^t\}_{\forall i \in [n], t > T'}$, satisfies
\begin{equation*}
\label{eqn:term1}
\begin{split}
r_t^I & = \sum_{i=1}^n \left[ f_i \Big( h_i(\gamma_i^*)^{\top} \theta_i^* \Big) - f_i \Big( h_i(\gamma_i^t)^{\top} \tilde{\theta}_i^t \Big) \right]\\
& \leq \frac{4 \sqrt{2} \kappa M n^2 L^2 S \sqrt{\beta^T}}{\zeta \sqrt{2 \lambda + \lambda_- T'}},
\end{split}
\end{equation*}
with probability at least $1 - 2 \delta$.
\end{lemma}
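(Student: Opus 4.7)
My plan is to bound $r_t^I$ by exhibiting, for each $t > T'$, a candidate price $\gamma^{(\alpha)}$ that (a) lies in the safe set $D^{t-1}$ and (b) produces features close to $h_i(\gamma_i^*)$, so that pairing it with $\theta^* \in C^{t-1}$ provides a feasible competitor in the optimistic maximization \eqref{eqn:ofu_price}. Optimism will then transfer the bound to $(\gamma^t, \tilde{\theta}^t)$, and the $M$-Lipschitzness of $f_i$ will convert the feature-space discrepancy into a utility-space one.

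The first quantitative ingredient is a handle on the confidence set diameter. After $T' \geq t_\delta$ rounds of independent uniform exploration on $D^0$, I would apply a matrix concentration inequality (matrix Chernoff) to $\sum_{s=1}^{T'} h_i(\gamma_i^s) h_i(\gamma_i^s)^\top$ together with Lemma \ref{lem:pos_def1} to conclude that $\lambda_{\min}(V_i^{T'}) \geq \nu + \lambda_- T'/2$ for every $i \in [n]$ with probability at least $1-\delta$; this is precisely what the threshold $t_\delta = 8 L^2 \lambda_-^{-1} \log(nm/\delta)$ is designed to guarantee. Since $V_i^t$ is monotone non-decreasing in $t$, this lower bound persists for every $t \geq T'$. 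Combining it with the self-normalized bound of Theorem \ref{thm:conf_set} via the triangle inequality between two members of the confidence ellipsoid yields
\begin{equation*}
\|\theta_i - \theta_i^*\| \leq 2\sqrt{2 \beta^T / (2\nu + \lambda_- T')}
\end{equation*}
for every $\theta_i \in C_i^t$, on the intersection of the two high-probability events, which accounts for the $1-2\delta$ in the lemma statement.

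Given this diameter bound, the infeasibility of $\gamma^*$ relative to $D^{t-1}$ is easy to quantify: using $\gamma^* \in \bar D$, the product structure of $C^{t-1}$, and the bounds $|a_{ji}| \leq \kappa$ and $\|h_i(\gamma_i)\| \leq L$ on $\hat D^0$ from Assumption \ref{ass:init_set}, a direct estimate gives
\begin{equation*}
\max_{\theta \in C^{t-1}} \sum_{i=1}^n a_{ji} h_i(\gamma_i^*)^\top \theta_i - c_j \,\leq\, B_t := \frac{2\sqrt{2}\, n \kappa L \sqrt{\beta^T}}{\sqrt{2\nu + \lambda_- T'}}
\end{equation*}
for every $j \in [p]$. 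Next, let $\gamma^s$ be any point in $D^0$ whose features provide slack $\zeta$ in the $C^0$-worst-case constraints. I would define $\gamma_i^{(\alpha)}$ so that $h_i(\gamma_i^{(\alpha)}) = (1-\alpha) h_i(\gamma_i^*) + \alpha h_i(\gamma_i^s)$ and choose $\alpha = B_t/(B_t+\zeta) \leq B_t/\zeta$; because the safety constraints are linear in the features, the worst-case constraint value at $\gamma^{(\alpha)}$ is upper-bounded by $(1-\alpha) B_t - \alpha \zeta \leq 0$, placing $\gamma^{(\alpha)}$ in $D^{t-1}$. Optimism then gives $\sum_i f_i(h_i(\gamma_i^t)^\top \tilde\theta_i^t) \geq \sum_i f_i(h_i(\gamma_i^{(\alpha)})^\top \theta_i^*)$, and Assumption \ref{ass:lipschitz} together with $\|\theta_i^*\| \leq S$ and $\|h_i(\gamma_i^*) - h_i(\gamma_i^s)\| \leq 2L$ bounds $r_t^I$ above by $2 n M L S \alpha$; substituting the bound on $\alpha$ produces exactly the claimed $4\sqrt{2} \kappa M n^2 L^2 S \sqrt{\beta^T} / (\zeta \sqrt{2\nu + \lambda_- T'})$.

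The main obstacle I anticipate is precisely the convex combination step: the feature vector $(1-\alpha) h_i(\gamma_i^*) + \alpha h_i(\gamma_i^s)$ need not lie on the image of $h_i$, which under Assumption \ref{ass:price_resp} is generally a one-dimensional curve in $\mathbb{R}^m_+$ rather than a convex set, so a scalar price $\gamma_i^{(\alpha)}$ realizing the interpolated feature may simply fail to exist. Reconciling this with the conclusion $\gamma^{(\alpha)} \in D^{t-1}$ will likely require either reformulating the optimistic competitor directly in feature space (exploiting that both the safety constraints and the objective depend on $\gamma_i$ only through $h_i(\gamma_i)$, so that the existence of a realizing price can be bypassed), or a continuity/IVT argument along the segment $[\gamma^*, \gamma^s]$ in price space paired with a quantitative decay estimate on the worst-case constraint violation; reconciling the linear interpolation with the nonlinear parametrization $h_i$ is the technical crux.
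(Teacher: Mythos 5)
Your quantitative skeleton is exactly the paper's: the min-eigenvalue bound $\lambda_{\min}(V_i^{T'}) \geq \nu + \lambda_- T'/2$ after the exploration phase (Lemma~\ref{lem:min_eig}), the resulting confidence-set diameter $2\sqrt{2\beta^T}/\sqrt{2\nu+\lambda_- T'}$, the worst-case violation bound for $\gamma^*$ (your $B_t$ is the paper's $\ell^t$ in \eqref{eqn:bstar}), interpolation toward a $\zeta$-slack point of $D^0$, and optimism plus $M$-Lipschitzness giving $r_t^I \leq 2nMLS\cdot(\text{interpolation weight})$, which reproduces the stated constant. Your direct choice $\alpha = B_t/(B_t+\zeta)$ is in fact leaner than the paper's route, since the paper instead defines $\alpha^t$ in \eqref{eqn:alpha} as the \emph{maximal} safe interpolation and must then lower-bound it, which is why it needs the auxiliary results that a constraint is tight at the optimal consumption (Lemmas~\ref{lem:equiv_sol} and~\ref{lem:tight_const}, using monotonicity of $f_i$), that the interpolated endpoint lies on the boundary of the relaxed constraint set (Lemma~\ref{lem:z_tight}), and the support-function convexity argument of Lemma~\ref{lem:bstar}.

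The gap is the one you flag yourself, and it is genuine as your write-up stands: the competitor in \eqref{eqn:ofu_price} must be a \emph{price} in $D^{t-1}$, so you cannot invoke optimism at a feature vector $(1-\alpha)h_i(\gamma_i^*)+\alpha h_i(\gamma_i^s)$ unless some $\gamma_i^{(\alpha)}$ realizes it, and for $m>1$ the image of $h_i$ is a curve in $\mathbb{R}^m_+$, so such a price need not exist; the objective and constraints depending on $\gamma$ only through $h$ does not by itself let you bypass this, because the maximization defining $\gamma^t$ ranges over realizable features only. The paper's resolution is to run the whole argument inside the image set: it defines $\tilde{G}^t$ in \eqref{eqn:g_tilde} as the image of the shrunk safe price set $\tilde{D}^t$ and takes $\alpha^t$ to be the largest $\alpha$ with the interpolate in $\tilde{G}^t$, so the comparison point $z^t$ is realizable by construction and optimism applies; the cost is that $\alpha$ can no longer be chosen freely, and the lower bound on $\alpha^t$ is obtained through the boundary argument of Lemma~\ref{lem:z_tight}, whose proof asserts $\tilde{G}^t = H^t \cap \{h \succeq 0\}$, i.e.\ that every nonnegative feature vector satisfying the relaxed constraints is realized by some price. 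In other words, the surjectivity-type property you identified as the crux is exactly where the paper spends its additional lemmas (and where its own justification is thinnest); your proposal is correct in outline and in constants, but without carrying out one of your two suggested repairs it does not yet close this step.
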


\emph{Proof sketch:} The complete proof of Lemma \ref{lem:term1} is given in Appendix \ref{sec:apx_rt1}. This proof draws inspiration from \cite{amani2019linear} in that it considers a line segment between a point in the initial safe set and the optimal solution, and then tracks the growth of the safe set along this line segment by relating it to the shrinkage of the paramater confidence set (i.e. $C^t$ in this case).
Despite the influence from \cite{amani2019linear}, our problem requires more complex work and new techniques to handle two primary challenges: (a) the fact that there are nonlinear basis functions (i.e. $h_i$), and (b) the fact that there multiple constraints that jointly apply to multiple users.

Due to challenge (a), we cannot take the natural approach of using a line segment in the price domain (i.e. a line segment from a point in $D^0$ to $\gamma^*$), because the constraint is nonlinear with respect to the price.
To work around this issue, we consider a line segment in the domain of $h = [h_1(\gamma_1)^{\top} \ h_2(\gamma_2)^{\top} \ ...\ h_n(\gamma_n)^{\top}]^{\top}$.
As a result, each of the constraints are linear with respect to any point on the line segment, making it feasible to bound.
However, this introduces additional challenges because there needs to be careful consideration of which values the $h$ vector can take given the range of each $h_i$ function.

Challenge (b) makes it difficult both to determine which constraint this line segment crosses and to bound the growth of the safe set across multiple users.
To address this challenge, we use the increasing property of $f_i$ to show that at least one constraint is tight on the optimal solution and use this fact to relate the growth of the safe set to the mininum eigenvalue of the gram matrix.

In the next section, we extend the work from the price response problem to a setting where the utility function is a property of the user rather than being chosen by the central coordinator.

\section{Safe Utility Maximization (SUM) Problem}
\label{sec:sum}

In this section, we consider the setting where the user utility functions $f_i(\cdot)$ are not designed by the central coordinator, but are instead defined by the price response of the users under the assumption that the users are self-interested agents.
In particular, the price response function $x_i(\gamma_i, \theta_i)$ now corresponds to the profit-maximizing consumption, with the \emph{profit} due to consumption $x_i$ taken to be the utility $f_i(x_i,\theta_i)$ minus the cost $\gamma_i x_i$.
That is, we want $f_i$ to be defined such that \begin{equation}\label{profitmax}
x_i(\gamma_i, \theta_i) = \argmax_{x_i} (f_i(x_i,\theta_i) - \gamma_i x_i).    
\end{equation}
This setting is especially useful because it captures the behavior of rational self-interested agents, which are prevalent in safety-critical infrastructure systems.
For example, electricity customers will choose an electricity consumption that maximizes the benefit (or utility) that they get from the electricity minus the costs of the electricity (e.g. \cite{li2011optimal,samadi2010optimal} use such a model). We will specify the specific structure of utility functions that can concurrently satisfy \eqref{profitmax} and \eqref{eqn:price_resp}.

Given that our problem is utility-maximizing and satisfies this profit-maximizing property, it can also be viewed as a safe version of the dual NUM problem (see \cite{palomar2006tutorial}) with a specific structure for the utility function and noisy observations of the consumption.
Our problem is considered to be \emph{safe} because, unlike conventional dual NUM, it ensures that the resource constraints are satisfied at each time step.
Therefore, our work may find further application in areas in which dual NUM algorithms have traditionally been used, as well as in safety-critical areas that may benefit from NUM-type algorithms.

The problem setup, proposed algorithm and regret analysis are given in Sections \ref{sec:sum_prob}, \ref{sec:sum_alg} and \ref{sec:sum_reg}, respectively.

\subsection{Problem Setup}
\label{sec:sum_prob}


\bltxt{In this setting, the utility functions are not known to the central coordinator.
Instead, the utility functions, denoted by $\munderbar{f}_i$, are defined in terms of the price response function $x_i(\cdot)$ given that the price response is the profit-maximizing consumption.}
That is, the utility function for user $i$, denoted $\munderbar{f}_i: \mathbb{R}_{++} \times \mathbb{R}^m \rightarrow \mathbb{R}$, is differentiable with respect to the first argument and is implicitly defined as
\begin{equation}
\label{eqn:no_profit}
    x_i(\gamma_i, \theta_i) = \argmax_{x_i \in \mathbb{R}_{++}} \left( \munderbar{f}_i(x_i, \theta_i) - \gamma_i x_i \right),
\end{equation}
where $x_i(\cdot)$ is the price response function in \eqref{eqn:price_resp}.
It follows from its definition that $\munderbar{f}_i(x_i,\theta_i)$  represents the utility that a self-interested user gets from a consumption of $x_i$, given that her price response function is $x_i(\cdot,\theta_i)$.

To ensure that $\munderbar{f}_i$  \bltxt{is well defined and} satisfies the same properties of the utility functions as in the SPR setting (e.g. increasing, Lipschitz), we make the following modifications to the price response functions.
We first specify that $h_i$ is differentiable and strictly decreasing.
This is more restrictive than the SPR setting as $h_i$ is specified as continuous and non-increasing in that case.
\bltxt{These restrictions on $h_i$ ensure that $\munderbar{f}_i(\cdot,\theta_i)$ is unique up to an additive constant for a given price response function $x_i(\cdot,\theta_i)$ as proven in Appendix \ref{sec:apx_thm3}.}
We also make the following assumption on the price response, which is a stronger version of Assumption \ref{ass:price_resp} from the SPR setting.
\begin{assumption}
\label{ass:no_resp}
    (Replaces Assumption \ref{ass:price_resp})
    There exists positive constants $S$ and $\rho$ such that $\| \theta_i^* \| \leq S$ and $\mathbf{1}^{\top} \theta_i^* \geq \rho$.
    Also, the domain of $h_i$ is the positive reals where $\lim_{\gamma_i \rightarrow 0^+} h_i(\gamma_i) = \boldsymbol{\infty}$ and $\lim_{\gamma_i \rightarrow \infty} h_i(\gamma_i) = \mathbf{0}$.
\end{assumption}
The first part of Assumption \ref{ass:no_resp} is rather mild as it only adds the condition that $\mathbf{1}^{\top} \theta_i^* \geq \rho$ to Assumption \ref{ass:price_resp}.
However, the second part of Assumption \ref{ass:no_resp} is somewhat stronger than the equivalent part of Assumption \ref{ass:price_resp} because it ensures that there exists a \emph{positive} price (versus a real-valued price in Assumption \ref{ass:price_resp}) that will compel the user to consume any non-negative quantity of the resource.
\bltxt{Examples of basis functions that satisfy these assumptions (where $m=1$, i.e. $\theta_i$ is a scalar) are $h_i(\gamma_i) = 1/ \gamma_i$ which corresponds to $x_i(\gamma_i,\theta_i) = \theta_i /\gamma_i$ and $\munderbar{f}_i(x_i,\theta_i) = \theta_i \log(x_i)$, as well as $h_i(\gamma_i) = 1/ \sqrt{\gamma_i}$ which corresponds to $x_i(\gamma_i,\theta_i) = \theta_i /\sqrt{\gamma_i}$ and $\munderbar{f}_i(x_i,\theta_i) = - \theta_i^2 /x_i$.~\footnote{\bltxt{We give examples with $m=1$ because in more complicated settings, there may not be a closed-form expression for the utility function. See \eqref{eqn:no_util} for an integral expression for the utility function.}}}

Now that the utility function for this setting has been specified, we define the optimal price vector for this setting
\begin{equation}
    \munderbar{\gamma}^* \in \text{arg}\max_{\gamma \in \bar{D}} \sum_{i=1}^n \munderbar{f}_i ( x_i(\gamma_i, \theta_i^*) , \theta_i^*),
\end{equation}
where $\bar{D}$ is the feasible price set defined in \eqref{eqn:feasp_set}.
Using the definition of optimal prices, we can then define the regret due to the prices $\{ \munderbar{\gamma}_i^t \}_{i \in [n], t \in [T]}$ as
\begin{equation}
\munderbar{R}_T = \sum_{t=1}^T \sum_{i=1}^n \left[ \munderbar{f}_i (x_i (\munderbar{\gamma}_i^*, \theta_i^*),\theta_i^*) - \munderbar{f}_i (x_i (\munderbar{\gamma}_i^t,\theta_i^*),\theta_i^*) \right].
\end{equation}
Note that the only difference between the definition of both the optimal price and regret in this setting versus the SPR setting is that $\munderbar{f}_i$ is used in place of the SPR utility function $f_i$.

Given that Assumption \ref{ass:no_resp} provides an additional restriction on the parameter $\theta_i^*$ (i.e. the condition that $\mathbf{1}^{\top} \theta_i^* \geq \rho$), we need to define the initial confidence set for this setting (equivalent to \eqref{eqn:init_set}),
\begin{equation}
    \munderbar{C}_i^0 = \{  \theta_i \in \mathbb{R}^m_+ : \| \theta_i \| \leq S, \mathbf{1}^{\top} \theta_i \geq \rho \}
\end{equation}
with $\munderbar{C}^0 = \munderbar{C}^0_1 \times \munderbar{C}^0_2 \times ... \times \munderbar{C}^0_n$.
As in the SPR setting, we use $\munderbar{C}^0$ to define a set of prices that is contained in $\bar{D}$ and a set of prices that contains $\bar{D}$ which are $\munderbar{D}^0 = \{\gamma \in \mathbb{R}^n : \sum_{i=1}^n a_{ji} \theta_i^T h_i(\gamma_i) \leq c_j - \zeta, \forall \theta \in \munderbar{C}^0 \}$ and $\munderbar{\hat{D}}^0 := \left\{\gamma \in \mathbb{R}^n : \exists \theta \in \munderbar{C}^0 \text{ s.t. } \sum_{i=1}^n a_{ji} \theta_i^T h_i(\gamma_i) \leq c_j  \right\}$, respectively.
In the following assumption, we assume that $\munderbar{D}^0$ is nonempty, $h_i(\gamma_i)$ has bounded norm for all $\gamma \in \munderbar{\hat{D}}^0$ and that the elements of $h_i$ are linearly independent (equivalent to Assumption \ref{ass:init_set}).
\begin{assumption} 
\label{ass:no_initset}
    (Replaces Assumption \ref{ass:init_set})
    There exists positive constants $\zeta$ and $\kappa$ such that $\munderbar{D}^0$ is nonempty and $| a_{ji} | \leq \kappa$ for all $i$ in $[n]$ and $j$ in $[p]$.
    Additionally, there exists a positive constant $L$ such that $\max_{i \in [n]} \| h_i (\gamma_i) \| \leq L$ for all $\gamma \in \munderbar{\hat{D}}^0$
    Also, there does not exist a nonzero $\alpha_i \in \mathbb{R}^m$ for each $i \in [n]$ such that $\alpha_i ^{\top} h_i (\gamma_i) = 0$ for all $[\gamma_1\ \gamma_2\ ...\ \gamma_n]^{\top}$ in $\munderbar{D}^0$.
\end{assumption}
Note that Assumption \ref{ass:no_initset} is the same as Assumption \ref{ass:init_set} from the SPR setting, except it incorporates the additional prior information that $\mathbf{1}^{\top} \theta_i \geq \rho$ for all $i$ in $[n]$.

We then use $\munderbar{C}_i^0$ to state an assumption which ensures that $\munderbar{f}_i$ is Lipschitz.
This assumption uses the inverse of the price response function, which we denote $g_i (x_i, \theta_i)$ such that $x_i(g_i (x_i, \theta_i),\theta_i) = x_i$ for any $x_i$ in $\mathbb{R}_{++}$ and $g_i(x_i (\gamma_i, \theta_i),\theta_i) = \gamma_i$ for any $\gamma_i$ in $\mathbb{R}_{++}$.
In Lemma \ref{lem:no_equiv} in Appendix \ref{sec:apx_no_pre}, $g_i (x_i, \theta_i)$ is proven to exist and to be equal to $\frac{\partial}{\partial x_i}\munderbar{f}_i (x_i, \theta_i)$ for all $x_i$ in $\mathbb{R}_{++}$.

\begin{assumption} \label{ass:no_bound}
For all $x$ in $\bar{E}$ and $\theta$ in $\munderbar{C}^0$, there exists positive constants $\Gamma$, $L$ and $K$ such that $g_i (x_i, \theta_i) \leq \Gamma$,  $\| h_i( g_i (x_i, \theta_i) ) \| \leq L$, and $h_i'(g_i (x_i, \theta_i)) \preceq -\mathbf{1} K$ for all $i \in [n]$.
Additionally, there exists a point $x^0$ in $E$ such that $\munderbar{f}_i(x_i^0, \cdot)$ is $\eta$-Lipschitz on $\munderbar{C}_i^0$ for all $i$ in $[n]$.
\end{assumption}
The first part of Assumption \ref{ass:no_bound} provides bounds on $g_i (x_i, \theta_i)$, $h_i( g_i (x_i, \theta_i) )$ and $h_i'( g_i (x_i, \theta_i) )$ for values of $x_i$ and $\theta_i$ that the central coordinator might use as arguments for the utility function when estimating the optimal utility (i.e. the central coordinator initially knows that $\theta^*$ is in $\munderbar{C}^0$ and that the optimal $x$ is in $\bar{E}$).
The second part of Assumption \ref{ass:no_bound} is mild as it only states that $\munderbar{f}_i(x_i,\cdot)$ is Lipschitz for a single value of $x_i = x_i^0$.
Along with the other assumptions, this is sufficient to ensure that $\munderbar{f}_i(x_i,\cdot)$ is Lipschitz for all $x \in \bar{E}$.
In order to bound $x^0$ in the analysis, we use the fact that $\bar{E}$ is bounded by definition (in Section \ref{sec:prob_set}) to define the positive constant $\xi$ as satisfying $|x^1 - x^2| \preceq \mathbf{1} \xi$ for all $x^1$ and $x^2$ in $\bar{E}$.
In the next section, we propose an algorithm to address this problem setup.

\subsection{Proposed Algorithm}
\label{sec:sum_alg}

\begin{algorithm}[t]
    \caption{Safe Utility Maximization Algorithm}
    \label{alg:net_opt}
    \begin{algorithmic}[1]
    \Require $\{ h_i \}_{i \in [n]}$, $\{ a_{ji} \}_{i \in [n], j \in [p]}$, $\{ c_j \}_{j \in [p]}$, $\{\munderbar{f}_i\}_{i \in [n]}$, $S$, $L$, $\rho$, $x^0$, $f^0$
    \For{$t = 1$ to $T'$}
        \State Broadcast $\gamma^t \sim \text{Unif}(\munderbar{D}^0)$ to the users.
        \State Observe noisy consumption $\bar{x}^t$.
    \EndFor
    \State Construct confidence set $\munderbar{C}^{T'}$ with \eqref{eqn:no_conf_set}.
    \State Construct safe price set $\munderbar{D}^{T'}$ with \eqref{eqn:no_safe_dec}.
    \For{$t = T' + 1$ to $T$}
        \State Choose some $\munderbar{\check{\theta}}^t$ in $\munderbar{C}^{t-1}$.
        \State Find optimistic price $\munderbar{\gamma}^t$ with \eqref{eqn:no_ofu_price}.
        \State Broadcast $\munderbar{\gamma}^t$ to the users.
        \State Observe noisy consumption $\bar{x}^t$.
        \State Update confidence set $\munderbar{C}^t$ with \eqref{eqn:no_conf_set}.
        \State Update safe price set $\munderbar{D}^t$ with \eqref{eqn:no_safe_dec}.
    \EndFor
    \end{algorithmic}
\end{algorithm}

The proposed algorithm for this setting (Algorithm \ref{alg:net_opt}) operates nearly the same as the SPR algorithm with the exception being that the confidence set and the optimistic price are defined differently.
Incorporating Assumption \ref{ass:no_resp} into \eqref{eqn:conf_set}, the confidence set for $\theta_i^*$ in this setting is
\begin{equation}
\label{eqn:no_conf_set}
\begin{split}
    \munderbar{C}_i^t = \bigg\{ & \theta_i \in \mathbb{R}^m_+ : \\
    & \left\| \theta_i - \hat{\theta}_i^t \right\|_{V_i^t} \leq \sqrt{\beta^t}, \| \theta_i \| \leq S, \mathbf{1}^{\top} \theta_i \geq \rho \bigg\}.
\end{split}
\end{equation}
The safe price set is then
\begin{equation}
\label{eqn:no_safe_dec}
    \munderbar{D}^t = \left\{ \gamma \in \mathbb{R}^n : \sum_{i=1}^n a_{ji} x_i(\gamma_i; \theta_i) \leq c_j, \forall j \in [p], \forall \theta \in \munderbar{C}^t \right\}
\end{equation}
where $\munderbar{C}^t = \munderbar{C}_1^t \times \munderbar{C}_2^t \times ... \times \munderbar{C}_n^t$.
Using these definitions, the optimistic price is found by first choosing some $\munderbar{\check{\theta}}^t$ in $\munderbar{C}^{t-1}$ and then solving
\begin{equation}
\label{eqn:no_ofu_price}
    (\munderbar{\gamma}^t, \munderbar{\tilde{\theta}}^t) \in \argmax_{(\gamma,\theta) \in \munderbar{D}^{t-1} \times \munderbar{C}^{t-1}} \sum_{i=1}^n \munderbar{f}_i \left( h_i(\gamma_i)^{\top} \theta_i, \munderbar{\check{\theta}}_i^t \right)
\end{equation}
Note that this price update is nearly the same as the price update in the SPR setting, with the exception being that $\munderbar{f}_i(\cdot, \munderbar{\check{\theta}}^t_i)$ is used instead of $f_i(\cdot)$ where $\munderbar{\check{\theta}}^t$ is chosen arbitrarily in $\munderbar{C}^{t-1}$.
Although the optimistic price update \eqref{eqn:no_ofu_price} maximizes the approximate utility $\sum_{i =1}^n \munderbar{f}_i(\cdot, \munderbar{\check{\theta}}^t_i)$ rather than the true utility $\sum_{i =1}^n \munderbar{f}_i(\cdot, \theta_i^*)$ (since $\theta^*$ is unknown), the algorithm still enjoys sublinear regret because the difference between $\munderbar{f}_i(x_i, \munderbar{\check{\theta}}^t_i)$ and $\munderbar{f}_i(x_i, \theta_i^*)$ shrinks with time horizon.
This is due to the pure exploration phase, which ensures that $C^{t-1}$ shrinks with the time horizon and therefore that the distance between $\munderbar{\check{\theta}}^t$ and $\theta^*$ shrinks as well.
The complete regret bound for this algorithm is given in the next section.

\subsection{Regret Analysis}
\label{sec:sum_reg}

In this section, we extend the SPR regret analysis in Section \ref{sec:reg_anal} to this setting.
The main result is Theorem \ref{thm:no_reg}, which gives the regret bound for this setting.

\begin{theorem}
\label{thm:no_reg}
Let Assumptions \labelcref{ass:noise_model,ass:no_resp,ass:no_initset,ass:no_bound} hold.
Then, with probability at least $1 - 2 \delta$, we have that
\begin{equation*}
    \label{eqn:no_reg_bound}
    \begin{split}
        \munderbar{R}_T \leq & n \Gamma \max(LS,1) \sqrt{8 (T - T') \beta^T m \log \left( 1 + \frac{T L^2}{m \nu} \right)}\\
        & + 2 \Gamma nLST' + \frac{4 \sqrt{2} (T - T') \kappa \Gamma n^2 L^2 S \sqrt{\beta^T}}{\zeta \sqrt{2 \nu + \lambda_- T'}}\\
        & + \frac{4 \sqrt{2} n (T - T') ( \eta + \frac{\xi L}{\rho K}) \sqrt{\beta^T}}{ \sqrt{2 \nu + \lambda_- T'}}.
    \end{split}
\end{equation*}
    when $T' \geq t_{\delta} = \frac{8 L^2}{\lambda_-} \log(\frac{nm}{\delta})$.
    In particular, choosing $T' = \max(n^{2/3} T^{2/3}, t_{\delta})$ guarantees that $\munderbar{R}_T \in \tilde{\mathcal{O}}(T^{2/3}n^{5/3})$.
\end{theorem}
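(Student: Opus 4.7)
The plan is to mirror the regret decomposition used in the SPR analysis of Theorem~\ref{thm:reg_bound} while adding a new term that accounts for the fact that the utility function $\munderbar{f}_i(\cdot,\theta_i^*)$ is itself unknown and must be approximated inside the optimistic update \eqref{eqn:no_ofu_price} by $\munderbar{f}_i(\cdot,\munderbar{\check{\theta}}_i^t)$. First I would split off the exploration phase, whose contribution is at most $2\Gamma n L S\,T'$ by using the Lipschitz-in-$x_i$ constant $\Gamma$ of $\munderbar{f}_i(\cdot,\theta_i^*)$ (which follows from $g_i=\partial_{x_i}\munderbar{f}_i$ via Lemma~\ref{lem:no_equiv} together with $g_i\le\Gamma$ from Assumption~\ref{ass:no_bound}) along with $\|h_i\|\le L$ and $\|\theta_i^*\|\le S$. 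For $t>T'$, writing $\munderbar{\tilde{x}}_i^t=h_i(\munderbar{\gamma}_i^t)^{\top}\munderbar{\tilde{\theta}}_i^t$, $\munderbar{x}_i^t=h_i(\munderbar{\gamma}_i^t)^{\top}\theta_i^*$ and $\munderbar{x}_i^*=h_i(\munderbar{\gamma}_i^*)^{\top}\theta_i^*$, I would split the instantaneous regret as
\begin{align*}
\munderbar{r}_t &= \sum_{i=1}^n \bigl[\munderbar{f}_i(\munderbar{x}_i^*,\theta_i^*)-\munderbar{f}_i(\munderbar{x}_i^*,\munderbar{\check{\theta}}_i^t)\bigr] \\
&\quad{}+\sum_{i=1}^n \bigl[\munderbar{f}_i(\munderbar{x}_i^*,\munderbar{\check{\theta}}_i^t)-\munderbar{f}_i(\munderbar{\tilde{x}}_i^t,\munderbar{\check{\theta}}_i^t)\bigr] \\
&\quad{}+\sum_{i=1}^n \bigl[\munderbar{f}_i(\munderbar{\tilde{x}}_i^t,\munderbar{\check{\theta}}_i^t)-\munderbar{f}_i(\munderbar{x}_i^t,\munderbar{\check{\theta}}_i^t)\bigr] \\
&\quad{}+\sum_{i=1}^n \bigl[\munderbar{f}_i(\munderbar{x}_i^t,\munderbar{\check{\theta}}_i^t)-\munderbar{f}_i(\munderbar{x}_i^t,\theta_i^*)\bigr].
\end{align*}

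The middle two summands parallel the SPR decomposition with $M$ replaced by $\Gamma$ and $\munderbar{f}_i(\cdot,\munderbar{\check{\theta}}_i^t)$ serving as the Lipschitz surrogate in $x_i$. The \emph{optimism gap} (second summand) is bounded by replaying the argument of Lemma~\ref{lem:term1}: on the high-probability event $\theta^*\in\munderbar{C}^{t-1}$ from Theorem~\ref{thm:conf_set}, optimism kills the gap whenever $\munderbar{\gamma}^*\in\munderbar{D}^{t-1}$, and otherwise the line-segment-in-the-$h$-domain technique---exploiting the tight-constraint property of the optimum together with the $\lambda_-$-shrinkage of $\munderbar{C}^{t-1}$ afforded by Lemma~\ref{lem:pos_def1}---produces the third term of the stated bound with $\Gamma$ in place of $M$. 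The \emph{confidence-ball gap} (third summand) is controlled by the standard stochastic-linear-bandit chain: the $\Gamma$-Lipschitz surrogate, the confidence bound $\|\munderbar{\tilde{\theta}}_i^t-\theta_i^*\|_{V_i^{t-1}}\le 2\sqrt{\beta^{t-1}}$ from Theorem~\ref{thm:conf_set}, Cauchy--Schwarz in the $V_i^{t-1}$-norm, and the elliptic-potential lemma of \cite{abbasi2011improved} deliver the first regret term.

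The first and fourth summands are the new ingredient, and their control is the main obstacle: I need a uniform Lipschitz-in-$\theta$ bound for $\munderbar{f}_i(x_i,\cdot)$ over $x\in\bar{E}$. My plan is to bootstrap the pointwise $\eta$-Lipschitz estimate at the reference point $x_i^0$ from Assumption~\ref{ass:no_bound} via the fundamental theorem of calculus. Lemma~\ref{lem:no_equiv} provides $\partial_{x_i}\munderbar{f}_i(x_i,\theta_i)=g_i(x_i,\theta_i)$, where $g_i$ inverts the price-response map. Implicit differentiation of the identity $h_i(g_i(x_i,\theta_i))^{\top}\theta_i=x_i$ in $\theta_i$ yields
\begin{equation*}
\nabla_{\theta_i}g_i(x_i,\theta_i)=-\frac{h_i(g_i(x_i,\theta_i))}{h_i'(g_i(x_i,\theta_i))^{\top}\theta_i}.
\end{equation*}
Assumption~\ref{ass:no_bound} bounds the numerator in norm by $L$, while $h_i'\preceq-\mathbf{1}K$ and $\mathbf{1}^{\top}\theta_i\ge\rho$ from Assumption~\ref{ass:no_resp} force the denominator below $-\rho K$, giving $\|\nabla_{\theta_i}g_i\|\le L/(\rho K)$. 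Integrating $g_i$ from $x_i^0$ and combining with the $\eta$-Lipschitzness of $\munderbar{f}_i(x_i^0,\cdot)$ over a path of length at most $\xi$ delivers
\begin{equation*}
|\munderbar{f}_i(x_i,\theta^1)-\munderbar{f}_i(x_i,\theta^2)|\le\Bigl(\eta+\tfrac{\xi L}{\rho K}\Bigr)\|\theta^1-\theta^2\|
\end{equation*}
for every $x\in\bar{E}$ and $\theta^1,\theta^2\in\munderbar{C}^0$.

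To close, both $\theta_i^*$ and $\munderbar{\check{\theta}}_i^t$ lie in $\munderbar{C}_i^{t-1}$ on the good event, so $\|\theta_i^*-\munderbar{\check{\theta}}_i^t\|_{V_i^{t-1}}\le 2\sqrt{\beta^{t-1}}$, and Lemma~\ref{lem:pos_def1} combined with a standard eigenvalue concentration argument for the i.i.d.\ samples drawn during exploration gives $\lambda_{\min}(V_i^{t-1})\ge(2\nu+\lambda_- T')/2$ for every $t>T'\ge t_\delta$. The Lipschitz-in-$\theta$ bound above then renders each of the first and fourth summands at most $n(\eta+\xi L/(\rho K))\sqrt{\beta^T}/\sqrt{(2\nu+\lambda_- T')/2}$ per round; summing over $t>T'$ produces the new fourth regret term. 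Adding the four contributions and choosing $T'=\max(n^{2/3}T^{2/3},t_\delta)$ balances the $T'$-linear exploration loss against the $\sqrt{\beta^T}\,(T-T')/\sqrt{\lambda_- T'}$-type terms and gives $\munderbar{R}_T\in\tilde{\mathcal{O}}(n^{5/3}T^{2/3})$.
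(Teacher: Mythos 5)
Your proposal follows essentially the same route as the paper's proof: the same four-term telescoping decomposition (yours merely swaps which argument is switched first in the last two steps, which is immaterial since the realized consumption lies in $\bar{E}$ on the same high-probability event), the Lipschitz-in-$x$ constant $\Gamma$ from $g_i=\partial_{x_i}$ of the utility, the Lipschitz-in-$\theta$ constant $\eta+\xi L/(\rho K)$ obtained by the identical implicit-differentiation argument, the SPR line-segment/tight-constraint machinery for the optimism term, the elliptic-potential lemma for the confidence term, and the same tuning of $T'$. The only blemish is a dropped factor of $2$ in your per-round bound for the parameter-error summands (using the radius $2\sqrt{\beta^{t-1}}$ each is at most $2\sqrt{2}\,n(\eta+\xi L/(\rho K))\sqrt{\beta^T}/\sqrt{2\nu+\lambda_- T'}$), and correcting it reproduces the theorem's fourth term exactly, so the conclusion is unaffected.
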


The proof of Theorem \ref{thm:no_reg} is in Appendix \ref{sec:apx_thm3}. 
We can see that the first three terms of the regret bound in Theorem \ref{thm:no_reg} match the bound for the SPR setting (Theorem \ref{thm:reg_bound}) except that $\Gamma$ appears in this bound where $M$ appears in the SPR bound.
The fourth term in the regret bound comes from the error in the second argument of the utility function, i.e. the difference between $\munderbar{f}_i(x_i, \check{\munderbar{\theta}}_i^t)$ and $\munderbar{f}_i(x_i,\theta_i^*)$ for some $x_i$.
However, the bound in Theorem \ref{thm:no_reg} is still the same order as Theorem \ref{thm:reg_bound} from the SPR setting.

\section{Application to Demand Response in Smart Grid}
\label{sec:dr}

In this section, we apply the SPR and SUM algorithms to demand response (DR) in smart grid.
DR is a mechanism by which an aggregator (or other organization that supplies power) can modify the electricity usage of its customers, sometimes through variable pricing.
This is advantageous because it can reduce the costs for the aggregator and its customers, and improve reliability \cite{qdr2006benefits}.
One popular type of DR program is day-ahead real-time pricing (RTP), where each day the aggregator posts prices for each time interval in the next day.
In choosing these prices, the aggregator aims to ensure that the utility provided by the electricity consumption is high for the users (i.e., they are satisfied), while maintaining low costs for providing that electricity.
It is also paramount that the prices be chosen such that the customers' consumption does not violate the physical limits of the grid to avoid service outages and repair costs.
Our approach to day-ahead RTP, based on the algorithms developed in this paper, achieves high utility and ensures that cost constraints and grid constraints are satisfied without knowing the specific flexibility or responsiveness of the customer's load beforehand.
This is illustrated in Fig. \ref{fig:dr_model}.
For the remainder of this section, we formulate a day-ahead RTP problem that is utility maximizing and safe with regard to system constraints, and then show, through simulation, that the SPR and SUM algorithms are effective for this problem.

\begin{figure}[!t]
    \centering
    \includegraphics[width=\linewidth]{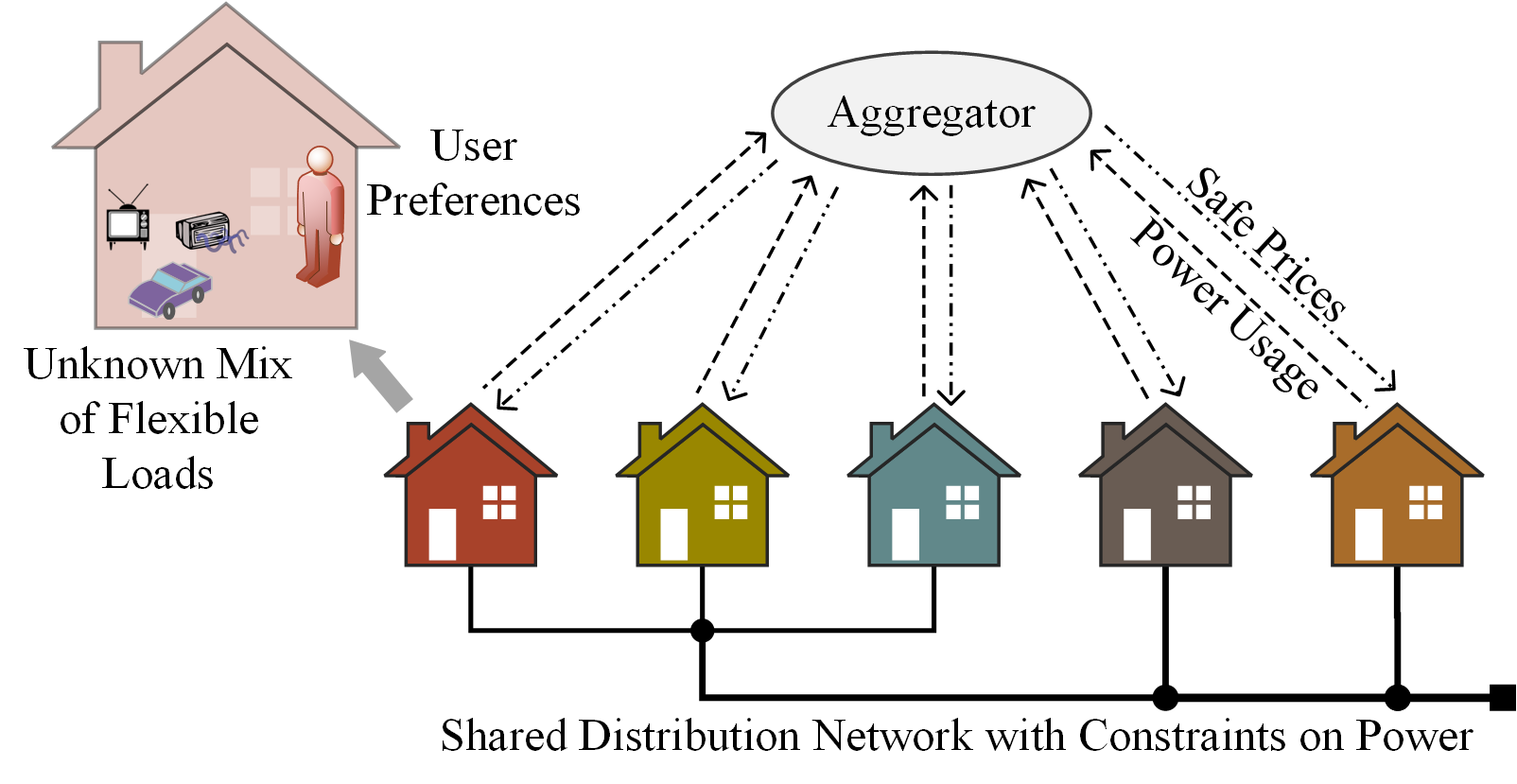}
    \vspace{-0.5cm}
    \caption{The aggregator does not specifically know how each electricity customer will respond to prices beforehand as each customer has an unknown mix of flexible loads and unique user preferences.
    Despite having limited knowledge as such, the aggregator needs to choose prices such that the utility is high and the distribution network constraints are satisfied.}
    \label{fig:dr_model}
\end{figure}

\subsection{Demand Response Formulation}

\label{sec:dr_form}
Each day (time step) $t$, the aggregator (central coordinator) posts prices $\gamma_{i,v}^t$ for each time period $v$ in $[V]$  and customer (user) $i$ in $[n]$.
Customer $i$ then responds with a noisy power consumption for each period in the day\footnote{Although the consumption $\bar{x}_i$ and price $\gamma_i^t$ are vectors in the demand response formulation, the results from the SPR and SUM formulation, where they are scalars, can be easily extended to this case. In stating our theoretical results, this vector case is not adopted for brevity of notation. }, denoted $\bar{x}_i(\gamma_i^t; \theta_i) = [ \bar{x}_{i,1}(\gamma_i^t; \theta_i)\ ...\ \bar{x}_{i,V}(\gamma_i^t; \theta_i)]^{\top}$ where $\bar{x}_{i,v}(\gamma_i^t; \theta_i) = x_{i,v}(\gamma_i^t; \theta_i) + \mu_{i,v}^t$ and $\gamma_i^t = [\gamma_{i,1}^t\ ...\ \gamma_{i,V}^t]$.
As before, we use a parametrically linear model for the consumption $x_{i,v}(\gamma_i^t; \theta_i) = h_{i,v} (\gamma_i^t)^{\top} \theta_i$ and take the noise $\mu_{i,v}^t$ to be conditionally subgaussian.
Note that the consumption at each period $v$ is allowed to depend on the price at all the periods in the day to account for inter-temporal flexibility.
The central coordinator observes the noisy consumption $\bar{x}_i(\gamma_i^t, \theta_i)$ for all customers on each day, and uses this to inform the choice of prices on future days.

When choosing prices, the aggregator aims to maximize utility while satisfying grid constraints and cost constraints.
We denote the utility function for customer $i$ as the increasing function $U_i:\mathbb{R}^V \rightarrow \mathbb{R}$.
The grid constraints are on the nodal voltages $u_v^t$ and the distribution line power flows $p_v^t$: $u_{min} \leq u_v^t \leq u_{max}$ and $p_v^t \leq S_{max}$ for all $v \in [V]$ and $t \in [T]$.
The nodal voltages and power flows are related to the consumption via the power flow model.
In particular, we use the \emph{LinDistFlow} model \cite{baran1989optimal} for a feeder network to express the reliability constraints linearly with respect to the consumption, of the form $\sum_{i=1}^n a_{ji} x_i (\gamma_i^t) \preceq \mathbf{1} c_j$ for all $j$ in $[p]$, $t$ in $[T]$.
Cost constraints can be implemented by specifying a limit on the total power supplied to the users at each period in the day according to the supply price and cost limit at that period.
With the objective and constraints defined, we have that the optimal prices satisfy
\begin{align}
\begin{split}
    \gamma^* \in & \argmax_{\gamma \in \mathbb{R}^{V \times n}} \ \sum_{i=1}^n U_i(x_i(\gamma_i))\\
    & \quad \text{s.t.}\ \sum_{i=1}^n a_{ji} x_i (\gamma_i) \preceq \mathbf{1} c_j, \ \forall j  \in [p]
\end{split}
\end{align}
In the next section, we discuss the price response model that is used to define the price response function.

\subsection{Price Response Model}

In order to capture the consumption behavior of a customer in response to electricity prices (i.e. specify $h_i$ in \eqref{eqn:price_resp}), we use the price response model developed in \cite{tucker2020constrained}, which itself uses the appliance model in \cite{alizadeh2014reduced}.
This appliance model considers clusters of appliance which each have a set of feasible consumption profiles (a consumption profile specifies the consumption from those appliances for each period in the day).
For example, one cluster might represent electric vehicles (EV) which need to be fully charged within a specific time frame subject to power limits.
Depending on how tight the time frame is, there might be several different consumption profiles for the electric vehicles that would satisfy these charging requirements.
Refer to \cite{alizadeh2014reduced} for further discussion on modeling other appliance types.

Given that there are several possible consumption profiles for each appliance cluster, the price response model in \cite{tucker2020constrained} considers two mechanisms by which price impacts a customer's power consumption: (1) the cost-minimizing appliance scheduling by the home energy management system (HEMS) and (2) the adjustment of the customer's preferences in response to electricity prices.
Mechanism (1) specifies that the HEMS will choose the consumption profile for each appliance cluster that minimizes the cost of operating that appliance while mechanism (2) specifies that the customer's usage of each appliance cluster varies according to price equally for all periods in the day.
We assume that the way in which the HEMS schedules appliances and the customer's preference adjustment function are known, while the number of appliances that each customer has in each appliance cluster (specified by each element of $\theta_i$) is unknown.
Note that our approach and algorithms could accommodate a more general model, but we use this one to provide an example of how the approach and algorithm can be used.

\subsection{Test Setup}

To evaluate the performance of our algorithms in the demand response problem through simulation, we use a real radial distribution network with $n=37$ customers as specified in \cite{tucker2020constrained} (originally from \cite{andrianesis2019locational}).
For this distribution system, we use the  power limits specified for each line given in \cite{tucker2020constrained} and the nodal voltage limits of 0.95 and 1.05 p.u. (with 12.5kV base) as given in \cite{andrianesis2019locational}.
We use $T = 365$ days and $V=3$ periods, with $m=2$ different appliance clusters.
One appliance cluster is for appliances that operate at the same time regardless of price and includes lighting (200 W, on for intervals $\{2,3\}$) and cooking (500 W, on for interval $3$).
The other appliance cluster is for flexible appliances that can be scheduled at several different times and includes EV charging (500 W, on for 1 interval in $\{ 1,3\}$), washer/drier (300 W, on for 1 interval in $\{ 2,3\}$), HVAC (600 W, on for 1 interval in $\{1,2,3\}$) and entertainment (200 W, on for 1 interval in $\{ 2,3\}$).
We use shifted sigmoids ($1/(1 + e^{\gamma_i^{\top} \mathbf{1} - 5})$) for the preference adjustment functions of the clusters.
Also, we use $U_i(x_i) = b_i \log (x_i + 1)$ where $b_i \sim \mathcal{U}[0,1]$ (for the SPR Algorithm), choose the unknown parameter $[\theta_i]_k \sim \mathcal{U}[0.5,1]$ for each $k$ in $[m]$, take the variance proxy as $\sigma = 1.5$ for the SPR experiments and $\sigma=3$ for the SUM experiments, where $\mu^t_{i,v} \sim \mathcal{N}(\sigma)$.
For the algorithm parameters, we use $\nu = 10$ for the regularization parameter, $T' = 52 \approx T^{2/3}$ for the duration of the safe exploration phase and $\delta = 0.01$ for the confidence parameter.
To make the algorithms tractable, we consider a finite set of prices in that the set of possible prices for each user at each time step is $\{2,5\}$.
We also divide the users in to three groups and use the same price for each of these groups to reduce the number of prices that need to be considered.
These groups are (1) users 11-25, (2) users 29-35, and (3) users 1-10, 26-28.
To reduce the computational load of the SUM experiments, the users in each group in the SUM experiments are assumed to have the same price response function.
We also calculate the SUM utility with a smoothed version of the price response as detailed in Appendix \ref{sec:apx_sumdr}.
Also, note that we do not incorporate the assumed bounds on $\theta_i$ (i.e. $\| \theta_i \| \leq S$, $\mathbf{1}^{\top} \theta_i \geq \rho$) when calculating the confidence sets for $\theta_i$.

\subsection{Simulation Results}

\begin{figure}[!t]
    \centering
    \includegraphics[width=\linewidth]{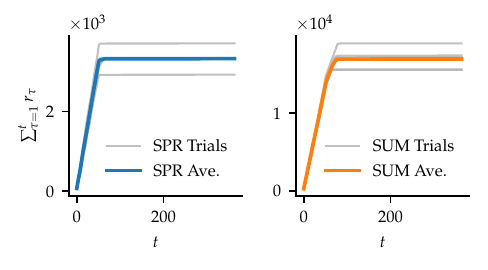}
    \vspace{-0.8cm}
    \caption{The rolling sum of the instantaneous regret of the Safe Price Response (SPR) and Safe Utility Maximization (SUM) algorithms for five trials of the demand response problem.}
    \label{fig:spr_reg}
\end{figure}

The SPR Algorithm (Algorithm \ref{alg:safe_price}) was implemented for the specified demand response problem and simulated for five trials, each with different realizations of the noise.
For all five trials, there were zero constraint violations.
The rolling sum of the instantaneous regret for each trial is shown in the left-hand side of Fig. \ref{fig:spr_reg}.

Similarly, the SUM Algorithm (Algorithm \ref{alg:net_opt}) was implemented for the demand response problem.
In five trials, there were zero constraint violations.
The rolling sum of the instantaneous regret for each trial is shown in the right-hand side of Fig. \ref{fig:spr_reg}.

\bltxt{Fig. \ref{fig:spr_reg} provides experimental evidence that the regret of both algorithms grows sub-linearly with respect to time.
This provides validation of the stated theoretical results, and demonstrates the algorithms' performance in a realistic demand response setting.}

\section{Conclusion}

In this paper, we present two novel safe optimization problems with applications to pricing design for safety-critical infrastructure systems.
We propose algorithms for each of these problems and prove in our analysis that they both enjoy sublinear regret.
We then demonstrate the real-world applicability of these algorithms by simulating them being used for DR pricing in a distribution network.
These simulations also provide numerical validation for the theoretical results in the paper.

\begin{bl}
    Despite the efficacy of our approach, there are some limitations.
    Firstly, our problem formulation requires that a set of safe prices is initially known by the algorithm.
    This is a fundamental limitation of any problem formulation with uncertain constraints that need to be satisfied at every time step because, in such a setting, the algorithm cannot ensure safety in the initial rounds without prior information.
    That said, it may be challenging in some real-world settings to accurately determine a set of safe prices without some (potentially unsafe) experimentation.

    Another limitation of our approach is that it only considers linear constraints.
    This limits the applicability of the work to safe pricing settings with nonlinear constraints.
    An important such example is power flow constraints in the smart grid, which can be more accurately specified by nonlinear constraints.
    We leave the problem of safe pricing under nonlinear constraints as future work.
\end{bl}

\bibliographystyle{IEEEtran}
\bibliography{references}

\appendix

\subsection{Proof of Theorem \ref{thm:reg_bound}}
\label{sec:apx_thm2}
We decompose the instantaneous regret in to $r_t^I$ and $r_t^{II}$ as
\begin{alignat*}{2}
    r_t = & \sum_{i=1}^n [ f_i (x_i (\gamma_i^*; \theta_i^*)) - f_i (x_i (\gamma_i^t;\theta_i^*))]\\
    = & \sum_{i=1}^n [ f_i (h_i(\gamma_i^*)^{\top} \theta_i^*) - f_i (h_i(\gamma_i^t)^{\top} \tilde{\theta_i}) ]&\quad & \Big \}\ r_t^I\\
     & + \sum_{i=1}^n [ f_i (h_i(\gamma_i^t)^{\top} \tilde{\theta_i}) - f_i (h_i(\gamma_i^t)^{\top} \theta_i^* ) ].&& \Big \}\ r_t^{II}
\end{alignat*}
In the following sections we handle $r_t^I$ and $r_t^{II}$ separately.

\subsubsection{Bounding $r_t^I$ (proof of Lemma \ref{lem:term1})}
\label{sec:apx_rt1}
Note that, in this section, many of the statements only hold with high probability (i.e. they rely on Theorem \ref{thm:conf_set} and Lemma \ref{lem:min_eig}).
For brevity, this is not referenced at each step but the probability of the complete bound is discussed at the end.

First, we define some sets that allow us to more easily bound the growth of the safe decision set.
We first have an expanded confidence set for $\theta_i^*$ that is centered at $\theta_i^*$ instead of $\hat{\theta}_i$ :
\begin{equation}
\label{eqn:exp_conf}
    \tilde{C}_i^t = \{ \theta_i \in \mathbb{R}^m_+ : \| \theta_i - \theta_i^* \|_{V_i^t} \leq 2 \sqrt{\beta^t}, \| \theta_i \| \leq S \}.
\end{equation}
Note that $C_i^t \subseteq \tilde{C}_i^t$ as we can use the triangle inequality with any $\theta_i \in C_i^t$ to get $\| \theta_i - \theta_i^* \|_{V_i^t} \allowbreak = \| \theta_i - \hat{\theta}_i + \hat{\theta}_i - \theta_i^* \|_{V_i^t} \allowbreak \leq \| \theta_i - \hat{\theta}_i\|_{V_i^t} + \| \hat{\theta}_i - \theta_i^* \|_{V_i^t} \allowbreak \leq 2 \sqrt{\beta^t}$.
We then use this expanded confidence set to define a shrunk safe price set,
\begin{equation}
\label{eqn:shr_pset}
    \tilde{D}^t = \{ \gamma \in \mathbb{R}^n : \sum_{i=1}^n a_{ji} x_i(\gamma_i; \theta_i) \leq c_j, \forall \theta \in \tilde{C}^t, \forall j \in [p]\},
\end{equation}
where 
\begin{equation}
\label{eqn:lconf_set}
    \tilde{C}^t = \tilde{C}_1^t \times \tilde{C}_2^t \times ... \times \tilde{C}_n^t.
\end{equation}
The remaining analysis deals with the $h$ vector, which is defined as $h = [h_1(\gamma_1)^{\top} \ h_2(\gamma_2)^{\top} \ ...\ h_n(\gamma_n)^{\top} ]^{\top}$.
Accordingly, we define a safe set for the $h$ vector:
\begin{equation}
\label{eqn:g_tilde}
    \tilde{G}^t = \{ [h_1(\gamma_1)^{\top}\ ...\ h_n(\gamma_n)^{\top}]^{\top} : [\gamma_1\ ...\ \gamma_n]^{\top} \in \tilde{D}^t \}.
\end{equation}
We also have the initial safe set for the $h$ vector, which is defined similarily:
\begin{equation}
\label{eqn:g_init}
    G^0 = \{ [h_1(\gamma_1)^{\top}\ ...\ h_n(\gamma_n)^{\top}]^{\top} : [\gamma_1\ ...\ \gamma_n]^{\top} \in D^0 \}
\end{equation}
Note that, by definition, $\tilde{C}^t \subseteq C^0$ for all $t \geq 1$ which implies that $D^0 \subseteq \tilde{D}^t$ for all $t \geq 1$ and equivalently  that $G^0 \subseteq \tilde{G}^t$ for all $t \geq 1$.

Next, we consider a line from a point in $G^0$ to the optimal $h$ vector.
Let $h^* = [h_1(\gamma_1^*)^{\top}\ ...\ h_n(\gamma_n^*)^{\top}]^{\top}$ and $h^0$ be any element in $G^0$.
Then, we can use $\alpha^t$ to track the safe set along a line as
\begin{equation}
\label{eqn:alpha}
    \alpha^t = \max \{ \alpha \in [0,1] : \alpha h^* + (1 - \alpha) h^0 \in \tilde{G}^t \}.
\end{equation}
Let $z^t = \alpha^t h^* + (1 - \alpha^t) h^0$ and $z_i^t = \alpha^t h_i^* + (1 - \alpha^t) h_i^0$.
We can then bound $r_{t+1}^I$ with $\alpha^t$ by using the fact that $\sum_{i=1}^n f_i (h_i(\gamma_i^{t+1})^{\top} \tilde{\theta}_i^{t+1}) \geq \sum_{i=1}^n f_i ([z_i^t]^{\top}\theta_i^*)$ (which follows from \eqref{eqn:ofu_price}). 
\begin{align*}
    r_{t+1}^I & = \sum_{i=1}^n [ f_i (h_i(\gamma_i^*)^{\top}\theta_i^*) - f_i (h_i(\gamma_i^{t+1})^{\top} \tilde{\theta}_i^{t+1}) ]\\
    & \leq \sum_{i=1}^n | f_i (h_i(\gamma_i^*)^{\top}\theta_i^*) - f_i ([z_i^t]^{\top}\theta_i^*) | \\
    & \leq M \sum_{i=1}^n | h_i(\gamma_i^*)^{\top}\theta_i^* - [z_i^t]^{\top}\theta_i^* | \\
    & = M \sum_{i=1}^n | (h_i^* - h_i^0) ^{\top} \theta_i^* | (1 - \alpha^t) \\
    & \leq M \sum_{i=1}^n  \| h_i^* - h_i^0 \| \| \theta_i^* \| (1 - \alpha^t) \\
    & \leq 2 M n L S (1 - \alpha^t) 
\end{align*}
At this point, we can see that establishing a lower bound on $\alpha^t$ will provide an upper bound on $r_t^I$.

To do so, we establish that at least one constraint is tight on the optimal solution and then use this property to lower bound $\alpha^t$.
We first show that the problem can be expressed as optimization over the consumption in Lemma \ref{lem:equiv_sol}.
\begin{lemma}
\label{lem:equiv_sol}
    Let Assumption \labelcref{ass:price_resp} hold. Note the definition of $\gamma^*$ in \eqref{eqn:opt_price}. Then the optimal consumption $x^* = [h_1(\gamma_1^*)^{\top}\theta_1^*\ ...\ h_n(\gamma_n^*)^{\top} \theta_n^*]^{\top}$ satisfies
    \begin{equation}
    \label{eqn:equiv_sol}
    x^* \in \argmax_{x \in E} \sum_{i=1}^n f_i (x_i),
    \end{equation}
    where
    \begin{equation*}
    E = \left\{ x\in \mathbb{R}^n_{++} :  \sum_{i=1}^n a_{ji} x_i \leq c_j,\ \forall j \in [p] \right\}.
    \end{equation*}
\end{lemma}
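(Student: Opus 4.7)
The plan is to show that the price-space optimization defining $\gamma^*$ is equivalent to the consumption-space optimization in the lemma, by arguing that the price-response map $\gamma \mapsto (x_i(\gamma_i, \theta_i^*))_i$ sends $\bar{D}$ onto (a superset of) $E$. First, for each user $i$, I would observe that $\gamma_i \mapsto x_i(\gamma_i, \theta_i^*) = h_i(\gamma_i)^{\top} \theta_i^*$ is continuous (by continuity of $h_i$) and, since $\theta_i^* \in \mathbb{R}_+^m$ is nonzero with at least one strictly positive coordinate, the componentwise limits in Assumption \ref{ass:price_resp} force
\[
\lim_{\gamma_i \to -\infty} x_i(\gamma_i, \theta_i^*) = +\infty, \qquad \lim_{\gamma_i \to +\infty} x_i(\gamma_i, \theta_i^*) = 0.
\]
The intermediate value theorem then guarantees that the range of $x_i(\cdot, \theta_i^*)$ contains $\mathbb{R}_{++}$.

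Next, I would use this coordinatewise surjectivity to transfer the optimization to the consumption space. Given any $y \in E$, for each $i$ we can pick $\gamma_i$ with $x_i(\gamma_i, \theta_i^*) = y_i$; the assembled vector $\gamma$ lies in $\bar{D}$ because the linear constraints defining $E$ coincide, via the substitution $x(\gamma) = y$, with those defining $\bar{D}$. Consequently every $y \in E$ is the price-response image of some $\gamma \in \bar{D}$, which yields
\[
\sup_{\gamma \in \bar{D}} \sum_{i=1}^n f_i(x_i(\gamma_i, \theta_i^*)) \geq \sup_{y \in E} \sum_{i=1}^n f_i(y_i).
\]
The left-hand side equals $\sum_i f_i(x_i^*)$ by definition of $\gamma^*$, so once $x^* \in E$ is established, $x^*$ is feasible for the right-hand side and attains the supremum, yielding $x^* \in \argmax_{x \in E} \sum_i f_i(x_i)$.

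The main obstacle is the strict-positivity requirement $x^* \in E$, i.e., $x_i^* > 0$ for every $i$. I would argue this by contradiction using strict monotonicity of $f_i$: if some $x_i^* = 0$, a small decrease of $\gamma_i^*$ produces strictly positive $x_i$ (by the range argument above) and hence strictly increases $f_i$. Checking that such a perturbation can be made while remaining in $\bar{D}$ is the delicate part, since tight constraints with $a_{ji} > 0$ cannot be violated; such cases can be handled either by noting that $x_i^* = 0$ reduces the relevant constraint to $\sum_{k \neq i} a_{jk} x_k^* \leq c_j$, which admits a small upward perturbation of $x_i$ once paired with a correspondingly small downward perturbation of another compatible $x_k$ (with net gain guaranteed by continuity of $f_i$ at $0$ and the strict inequality $f_i(\delta) > f_i(0)$), or by a continuity-compactness argument placing the maximizer in the interior of $\mathbb{R}_+^n$. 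This membership check is the subtle step; the surjectivity argument itself is short.
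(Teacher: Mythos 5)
Your main argument is the same as the paper's. The paper also works with the image set $\tilde{E} = \{ [h_1(\gamma_1)^{\top}\theta_1^*\ ...\ h_n(\gamma_n)^{\top}\theta_n^*]^{\top} : \gamma \in \bar{D} \}$, notes that $x^*$ maximizes $\sum_i f_i$ over $\tilde{E}$ by the definition of $\gamma^*$, and then uses exactly your continuity-plus-limits argument (Intermediate Value Theorem applied to $\gamma_i \mapsto h_i(\gamma_i)^{\top}\theta_i^*$, with the limits from Assumption \ref{ass:price_resp} and the fact that $\theta_i^* \succeq 0$ is nonzero) to conclude that every feasible consumption vector is attained by some price in $\bar{D}$, hence $E \subseteq \tilde{E}$ and the optimality transfers. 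So the surjectivity half of your proposal is exactly the paper's proof.

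Where you diverge is the membership step $x^* \in E$, i.e.\ $x_i^* > 0$ for all $i$. You are right that this is the subtle point, but the perturbation argument you sketch does not work as stated: when a constraint with $a_{ji} > 0$ is tight, raising $x_i$ from $0$ to $\delta$ forces a decrease of order $\delta$ in some other coordinate $x_k$, and the resulting loss $f_k(x_k^*) - f_k(x_k^* - \epsilon)$ is generically of the same order as the gain $f_i(\delta) - f_i(0)$; strict monotonicity and continuity of $f_i$ at $0$ do \emph{not} guarantee a net gain (take $f_i$ and $f_k$ linear with $f_k$ having the larger slope). Likewise, a compactness argument cannot by itself place the maximizer in the open orthant. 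For comparison, the paper does not attempt such an argument at all: it simply asserts that $\tilde{E}$ equals $E$ (having verified only $E \subseteq \tilde{E}$), so feasibility of $x^*$ in $E$ is treated as immediate rather than proved. You should therefore either present this membership as an implicit regularity assumption (as the paper effectively does) or find a genuinely different argument; the paired-perturbation reasoning you propose would not close it.
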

\begin{proof}
Consider the set
\begin{equation*}
    \tilde{E} = \left\{ \left[h_1(\gamma_1)^{\top}\theta_1^*\ ...\ h_n(\gamma_n)^{\top} \theta_n^* \right]^{\top} : \gamma \in \bar{D} \right\}.
\end{equation*}
We can see that $\argmax_{x \in \tilde{E}} \sum_{i=1}^n f_i (x_i)$ contains $[h_1(\gamma_1^*)^{\top}\theta_1^*\ ...\ h_n(\gamma_n^*)^{\top} \theta_n^*]^{\top}$, where $\gamma^*$ is defined in \eqref{eqn:opt_price}.
Therefore, it only remains to be shown that $\tilde{E}$ is equal to $E$.
To do so, we consider the range of values that \bltxt{$h_i(\gamma_i)^{\top}\theta_i^*$} can take for $\gamma \in \bar{D}$ and hence the values that are in $\tilde{E}$.
\bltxt{Since $h_i(y)$ is continuous, $h_i(y)^T \theta_i^*$ is continuous as a function of $y$.
Also, Assumption \ref{ass:price_resp} and the fact that $\theta_i^*$ is nonzero imply that $\lim_{y \rightarrow -\infty} h_i(y)^T \theta_i^* = \infty$ and that $\lim_{y \rightarrow \infty} h_i(y)^T \theta_i^* = 0$.
Therefore, by the Intermediate Value Theorem, for every $x_i \in \mathbb{R}_+$, there exists a $y \in \mathbb{R}$ such that $x_i = h_i(y)^T \theta_i^*$.}
Thus, $\tilde{E}$ contains all $x \succ 0$ such that $\sum_{i=1}^n a_{ji} x_i \leq c_j$ for all $j$ in $[p]$ and is therefore equal to $E$.
This completes the proof.
\end{proof}

Using this result, we then show that at least one constraint is tight on the optimal solution in Lemma \ref{lem:tight_const}.
\begin{lemma}
\label{lem:tight_const}
Let Assumption \ref{ass:price_resp} hold. Then, there exists a constraint $j$ in $[p]$ such that $\sum_{i=1}^n a_{ji} x_i^* = c_j$, where $x^*$ is an optimal consumption as defined in Lemma \ref{lem:equiv_sol}.
\end{lemma}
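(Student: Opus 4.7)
The plan is to argue by contradiction. Assume no constraint is active at $x^*$, i.e., the slack $s_j := c_j - \sum_{i=1}^n a_{ji} x_i^* > 0$ for every $j \in [p]$. I will then construct a nearby feasible point with strictly higher objective value, contradicting the optimality guaranteed by Lemma \ref{lem:equiv_sol}.

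Concretely, pick any index $i \in [n]$ and consider perturbations of the form $x^* + \epsilon e_i$, where $e_i$ is the $i$th standard basis vector and $\epsilon > 0$. For each constraint $j$, feasibility requires $\epsilon a_{ji} \leq s_j$; this is automatic when $a_{ji} \leq 0$, and when $a_{ji} > 0$ it suffices to take $\epsilon \leq s_j / a_{ji}$. Since $s_j > 0$ for all $j$, the minimum over the (finitely many) indices $j$ with $a_{ji} > 0$ is strictly positive, so some $\epsilon > 0$ works for all constraints simultaneously. Moreover, because $x^* \in E \subset \mathbb{R}^n_{++}$ we have $x_i^* > 0$, so shrinking $\epsilon$ further if necessary keeps $x^* + \epsilon e_i \in \mathbb{R}^n_{++}$. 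Thus $x^* + \epsilon e_i \in E$.

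Finally, strict monotonicity of $f_i$ (which is part of the problem setup) gives $f_i(x_i^* + \epsilon) > f_i(x_i^*)$, while the other coordinates of the perturbed point coincide with $x^*$, so $\sum_{k=1}^n f_k(x_k^* + \epsilon \delta_{ki}) > \sum_{k=1}^n f_k(x_k^*)$. This contradicts the fact that $x^*$ attains the maximum in \eqref{eqn:equiv_sol}, establishing that at least one constraint must be tight.

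I do not expect a real obstacle here; the argument is a standard monotonicity-plus-slack perturbation. The only mildly subtle points are (i) using $x^* \in \mathbb{R}^n_{++}$ to ensure the perturbation does not leave the open positive orthant, and (ii) handling indices $j$ where $a_{ji} \leq 0$, for which the perturbation trivially preserves the constraint. Both are immediate given the setup.
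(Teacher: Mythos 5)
Your proof is correct and follows essentially the same route as the paper: argue by contradiction that if every constraint had positive slack, a small feasible increase of the consumption would strictly raise the objective by strict monotonicity of $f_i$, contradicting the optimality in \eqref{eqn:equiv_sol}. The only cosmetic difference is that you perturb a single coordinate using the explicit slacks $s_j$, whereas the paper perturbs along $\mathbf{1}$ via an interior-ball argument; both yield the same conclusion.
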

\begin{proof}
The statement of the Lemma is equivalent to claiming that $x^*$ is in the boundary of $F$ where 
\begin{equation*}
    F = \left\{x \in \mathbb{R}^n : \sum_{i=1}^n a_{ji} x_i \leq c_j\ \forall j \in [p] \right\}.
\end{equation*}
First, note that $x^*$ is in $F$ by definition, as $E \subseteq F$.
Suppose that $x^*$ is not in the boundary of $F$, which implies that $x^*$ is in the interior of $F$.
This would mean that there exists an open ball of radius $\epsilon > 0$ centered at $x^*$ that is in $F$.
It would follow that $x = x^* + \frac{\epsilon}{2} \frac{\mathbf{1}}{\| \mathbf{1} \|} \in F$.
Therefore, $x \succ x^*$.
It follows from the fact that $f_i$ is increasing that $f_i (x_i) > f_i (x_i^*)$ for all $i \in [n]$, which implies that $\sum_{i=1}^n f_i (x_i) > \sum_{i=1}^n f_i (x_i^*)$.
This means that $x^*$ does not satisfy \eqref{eqn:equiv_sol}, which is a contradiction.
Therefore, $x^*$ is in the boundary of $F$, which completes the proof.
\end{proof}

We can then use the fact that the constraint is tight on the optimal solution to establish a property of $\alpha^t$ that ultimately allows us to bound $\alpha^t$.
\begin{lemma}
\label{lem:z_tight}
    Let Assumption \ref{ass:price_resp} hold. Recall the definition of $\alpha^t$ in \eqref{eqn:alpha} and the definition of $\tilde{C}^t$ in \eqref{eqn:lconf_set}. Also, let $z_i^t = \alpha^t h_i^* + (1 - \alpha^t) h_i^0$ for all $i$ in $[n]$.
    Then with probability at least $1 - \delta$, there exists $j$ in $[p]$ such that
    \begin{equation}
    \label{eqn:z_tight}
        \max_{\theta \in \tilde{C}^t} \sum_{i=1}^n a_{ji} \theta_i^{\top} z_i^t = c_j.
    \end{equation}
\end{lemma}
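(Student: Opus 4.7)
The plan is to match an easy upper bound against a tight lower bound coming from the maximality in the definition of $\alpha^t$. Since $z^t \in \tilde{G}^t$ there exists $\gamma^t \in \tilde{D}^t$ with $h(\gamma^t) = z^t$, so unpacking \eqref{eqn:g_tilde} and \eqref{eqn:shr_pset} gives $\sum_i a_{ji} \theta_i^\top z_i^t \leq c_j$ for every $\theta \in \tilde{C}^t$ and every $j$; taking the maximum over $\theta$ yields $\max_{\theta \in \tilde{C}^t}\sum_i a_{ji}\theta_i^\top z_i^t \leq c_j$ uniformly. What remains is to show that equality is attained for at least one $j$.

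I would split into two cases depending on whether $\alpha^t$ hits the boundary $1$ of its domain. If $\alpha^t = 1$ then $z^t = h^*$, and Lemma \ref{lem:tight_const} supplies an index $j$ with $\sum_i a_{ji}\theta_i^{*\top} h_i^* = c_j$. Combined with the fact that $\theta^* \in C^t \subseteq \tilde{C}^t$ (with probability at least $1-\delta$ by Theorem \ref{thm:conf_set}, using the inclusion $C^t \subseteq \tilde{C}^t$ established just above via the triangle inequality), the lower bound $\max_{\theta \in \tilde{C}^t}\sum_i a_{ji}\theta_i^\top z_i^t \geq c_j$ follows, giving equality.

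If instead $\alpha^t < 1$, I would argue by contradiction. Suppose that every constraint has strict slack at $z^t$, namely $\psi_j(\alpha^t) < c_j$ for all $j \in [p]$, where $\psi_j(\alpha) := \max_{\theta \in \tilde{C}^t} \sum_i a_{ji}\theta_i^\top (\alpha h_i^* + (1-\alpha) h_i^0)$. For each fixed $\theta$ the inner expression is affine in $\alpha$, and the max is taken over the compact set $\tilde{C}^t$, so $\psi_j$ is a maximum of continuous (even affine) functions over a compact parameter set and is therefore continuous in $\alpha$. Consequently there is $\delta_\alpha > 0$ with $\alpha^t + \delta_\alpha \leq 1$ and $\psi_j(\alpha^t + \delta_\alpha) < c_j$ for every $j$, which means the perturbed point $(\alpha^t + \delta_\alpha) h^* + (1 - \alpha^t - \delta_\alpha) h^0$ still satisfies all the constraints defining $\tilde{G}^t$ (via \eqref{eqn:shr_pset}). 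This contradicts the maximality of $\alpha^t$ in \eqref{eqn:alpha}, so strict slack in every constraint is impossible.

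The step I expect to be most delicate is justifying that the perturbed point actually lies in $\tilde{G}^t$ rather than merely being algebraically feasible, since \eqref{eqn:g_tilde} presents $\tilde{G}^t$ as the image of $\tilde{D}^t$ under the coordinate-wise map $\gamma \mapsto h(\gamma)$ and not every element of the algebraic feasibility set \eqref{eqn:shr_pset} is obviously realized by some $\gamma$. I would close this gap by appealing to the continuity of each $h_i$ together with the surjectivity onto $\mathbb{R}_{++}^{m}$-type coordinates that Assumption \ref{ass:price_resp} implicitly provides through its limit behavior, showing that nearby algebraically feasible points along the $h^0$-to-$h^*$ line can be written as $h(\gamma)$ for some $\gamma$; alternatively, one can simply interpret $\alpha^t$ through the algebraic feasibility set (which is what the regret analysis actually uses, since the bound on $r_{t+1}^I$ above is purely in terms of $z_i^t$) and note that $\tilde{G}^t$ plays no role beyond enforcing $\sum_i a_{ji}\theta_i^\top z_i^t \leq c_j$.
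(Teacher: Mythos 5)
Your proposal is correct and takes essentially the same route as the paper: feasibility of $z^t$ gives the uniform upper bound, the $\alpha^t = 1$ case is settled by Lemmas \ref{lem:equiv_sol} and \ref{lem:tight_const} together with $\theta^* \in \tilde{C}^t$ (high probability via Theorem \ref{thm:conf_set}), and the $\alpha^t < 1$ case by perturbing along the segment to contradict the maximality in \eqref{eqn:alpha} --- the paper phrases this as showing $z^t \in \mathbf{bd}H^t$ for the convex set $H^t$ of algebraically feasible $h$-vectors, and your continuity-of-$\psi_j$ argument is the same idea in different clothing. The ``delicate step'' you flag is handled in the paper exactly by your second suggested fix: it identifies $\tilde{G}^t$ with $H^t \cap \{h \in \mathbb{R}^{nm} : h \succeq 0\}$ and argues on the algebraic constraints, the perturbed point remaining admissible because it is a convex combination of $h^0 \succeq 0$ and $h^* \succeq 0$ (your first fix, surjectivity of $h_i$ onto the orthant from Assumption \ref{ass:price_resp}, would not go through for $m>1$ since the image of $h_i$ is only a curve).
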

\begin{proof}
Recall the definition of $\tilde{G}^t$ in \eqref{eqn:g_tilde}.
The following statements hold with probability at least $1 - \delta$ without being further referenced.
Also, let $H^t := \{ h \in \mathbb{R}^{nm} | \sum_{i=1}^n a_{ji} h_i^T \theta_i \leq c_j\ \forall j \in [p], \forall \theta \in \tilde{C}^t \}$.
Note that $\tilde{G}^t = H^t \cap \{ h \in \mathbb{R}^{nm} : h \succeq 0 \}$.

First we show that $h^*$ is not in the $\mathbf{int} H^t$.
It is known that $\theta^*  \in \tilde{C}^t$.
If $\tilde{C}^t = \{ \theta^* \}$, then $\mathbf{int} H^t = \{ [h_1^T\ h_2^T\ ...\ h_n^T]^T | \sum_{i=1}^n a_{ji} h_i^T \theta_i^* < c_j\ \forall j \in [p] \}$.
From Lemma \labelcref{lem:equiv_sol,lem:tight_const}, we have that there is a $j \in [p]$ such that $\sum_{i=1}^n a_{ji} h_i^{*,T} \theta_i^* = c_j$, so $h^*$ is not in $\mathbf{int} H^t$.
Alternatively, if $\{\theta^*\} \subset \tilde{C}^t $, then $\mathbf{int}H^t \subset \{[h_1^T\ h_2^T\ ...\ h_n^T]^T | \sum_{i=1}^n a_{ji} h_i^T \theta_i^*  < c_j\ \forall j \in [p]\}$ which $h^*$ cannot be in either.
Therefore, $h^* \notin \mathbf{int} H$.

Next, we show that $z^t$ is in the boundary of $H^t$.
Note that $z^t \in H^t$ as $z^t \in \tilde{G}^t$ and $\tilde{G}^t \subseteq H^t$.
First, consider the case where $h^* \in \mathbf{bd}H^t$.
This will result in $\alpha^t = 1$ and $z^t = h^* \in \mathbf{bd}H^t$.
Next, consider the case where $h^* \notin H^t$.
For this case, suppose that $z^t \in \mathbf{int} H^t$ such that there exists an open ball of radius $\epsilon > 0$ centered at $z^t$ that is in $H^t$.
It then follows that $z^t + \frac{\epsilon}{2} \frac{h^* - h^0}{\| h^* - h^0 \|} \in H^t$.
This can be manipulated as follows
\begin{align*}
    z^t + \frac{\epsilon}{2} \frac{h^* - h^0}{\| h^* - h^0 \|} & = h^0 + \alpha^t(h^* - h^0) + \frac{\epsilon}{2} \frac{h^* - h^0}{\| h^* - h^0 \|} \\
    & = h^0 + \bigg(\alpha^t + \frac{\epsilon}{2\| h^* - h^0 \|} \bigg) (h^* - h^0)\\
    & = h^0 + \tilde{\alpha}^t (h^* - h^0),
\end{align*}
where $\tilde{\alpha}^t = (\alpha^t + \frac{\epsilon}{2\| h^* - h^0 \|} ) > \alpha^t$.
Since $h^* \notin H^t$, $\tilde{\alpha}^t < 1$.
Also, note that since $h^0 \succeq 0$ and $h^* \succeq 0$, we also have that $h^0 + \tilde{\alpha}^t (h^* - h^0) \succeq 0$.
This means that $h^0 + \tilde{\alpha}^t (h^* - h^0) \in \tilde{G}^t$.
Because there exists an $\tilde{\alpha}^t \in (0,1)$ such that $h^0 + \tilde{\alpha}^t (h^* - h^0) \in \tilde{G}^t$ and $\tilde{\alpha}^t > \alpha^t$, the definition of $\alpha^t$ does not hold.
Therefore, $z^t \in \mathbf{bd} H^t$.
The statement of the lemma immediately follows.
\end{proof}

To lower bound $\alpha^t$, we use Lemma \ref{lem:z_tight} which gives that there is a $j$ in $[p]$ such that $ \max_{\theta \in \tilde{C}^t} \sum_{i=1}^n a_{ji} \theta_i^T z_i^t = c_j$.
Therefore,
\begin{align*}
    c_j & = \max_{\theta \in \tilde{C}^t} \sum_{i=1}^n a_{ji} [z_i^t]^{\top}\theta_i\\
    & = \sum_{i=1}^n \max_{\theta_i \in \tilde{C}^t_i} a_{ji} [z_i^t]^{\top}\theta_i\\
    & = \sum_{i=1}^n \max_{\theta_i \in \tilde{C}^t_i} a_{ji} (\alpha^t h_i^* + (1 - \alpha^t) h_i^0)^{\top} \theta_i\\
    & \leq \sum_{i=1}^n \max_{\theta_i \in \tilde{C}^t_i} a_{ji} \alpha^t [h_i^*]^{\top} \theta_i + \sum_{i=1}^n \max_{\theta_i \in \tilde{C}^t_i} a_{ji} (1 - \alpha^t) [h_i^0]^{\top} \theta_i\\
    & = \alpha^t \underbrace{\sum_{i=1}^n \max_{\theta_i \in \tilde{C}^t_i} a_{ji} [h_i^*]^{\top} \theta_i}_{b^*} + (1 - \alpha^t) \underbrace{\sum_{i=1}^n \max_{\theta_i \in \tilde{C}^t_i} a_{ji} [h_i^0]^{\top} \theta_i}_{b^0} \numberthis \label{eqn:sep_cons}
\end{align*}

Next, we bound $b^*$.
In order to do so, we need the following lemmas, together which lower bound the minimum eigenvalue of the gram matrix after the pure exploration phase.

\begin{lemma}
\label{lem:pos_def}
    (Duplicate of Lemma \ref{lem:pos_def1}) Let Assumption \ref{ass:init_set} hold. Then, with Algorithm \ref{alg:safe_price} we have that 
    \begin{equation}
        \lambda_- := \min_{i \in [n]} \bigg[ \lambda_{min} \Big(\mathbb{E} \left[ h_i(\gamma_i^t) h_i(\gamma_i^t)^{\top} \right] \Big) \bigg] > 0,
    \end{equation}
    for all $t$ in $[1,T']$.
\end{lemma}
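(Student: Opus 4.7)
The plan is a standard positive-definiteness argument via contradiction, leveraging the linear independence condition in Assumption~\ref{ass:init_set} together with the continuity of $h_i$ and the uniform sampling on $D^0$.

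First I would fix $i \in [n]$ and $t \in [1,T']$, and let $M_i := \mathbb{E}[h_i(\gamma_i^t) h_i(\gamma_i^t)^\top]$, which is a well-defined $m \times m$ positive semidefinite matrix because $\gamma_i^t$ is drawn from $\mathrm{Unif}(D^0)$ and $h_i$ is continuous (hence bounded on the bounded set $D^0$, which is nonempty by Assumption~\ref{ass:init_set}). Suppose for contradiction that $\lambda_{\min}(M_i) = 0$. Then there exists a nonzero $\alpha_i \in \mathbb{R}^m$ such that $\alpha_i^\top M_i \alpha_i = 0$, i.e.
\begin{equation*}
    \mathbb{E}\bigl[(\alpha_i^\top h_i(\gamma_i^t))^2\bigr] = 0.
\end{equation*}

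Next I would unpack this expectation. Since $\gamma^t$ is drawn uniformly from $D^0$, the marginal distribution of $\gamma_i^t$ is supported on the projection of $D^0$ onto its $i$-th coordinate, with a density that is positive on the interior of this projection. Because $(\alpha_i^\top h_i(\gamma_i))^2 \geq 0$, the vanishing expectation forces $\alpha_i^\top h_i(\gamma_i) = 0$ almost everywhere on the support of this marginal. By continuity of $h_i$, this equality in fact holds pointwise on the closure of that support, and in particular for every $\gamma_i$ appearing as the $i$-th coordinate of some point in $D^0$.

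Finally I would conclude by contradicting the linear independence assumption. By the argument above, for every $[\gamma_1,\ldots,\gamma_n]^\top \in D^0$ we have $\alpha_i^\top h_i(\gamma_i) = 0$, with $\alpha_i \neq 0$. This directly contradicts the last part of Assumption~\ref{ass:init_set}, which rules out the existence of such a nonzero $\alpha_i$. Hence $\lambda_{\min}(M_i) > 0$. Taking the minimum over the finite set $i \in [n]$ preserves strict positivity, yielding $\lambda_- > 0$ as claimed.

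The main obstacle I anticipate is the step that moves from ``$\alpha_i^\top h_i(\gamma_i) = 0$ almost everywhere with respect to the marginal of $\gamma_i^t$'' to ``$\alpha_i^\top h_i(\gamma_i) = 0$ for all $\gamma \in D^0$'', since Assumption~\ref{ass:init_set} is phrased in terms of the joint vector $[\gamma_1,\ldots,\gamma_n]^\top \in D^0$ rather than the marginal support. The clean way to handle this is: let $\Pi_i(D^0)$ denote the projection of $D^0$ onto coordinate $i$; by continuity of $h_i$, the zero set of $\alpha_i^\top h_i(\cdot)$ is closed, so if it has full Lebesgue measure on $\Pi_i(D^0)$ it contains the closure of $\Pi_i(D^0)$, and in particular every $\gamma_i$ that appears as a coordinate of some point in $D^0$. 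This suffices to invoke the linear independence assumption and close the contradiction.
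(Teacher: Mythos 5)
Your proof is correct and follows essentially the same route as the paper's: both argue by contradiction that a zero minimum eigenvalue would force $\mathbb{E}\bigl[(\alpha_i^\top h_i(\gamma_i^t))^2\bigr]=0$ for some nonzero $\alpha_i$, hence $\alpha_i^\top h_i(\gamma_i^t)=0$ almost surely under the uniform sampling of $\gamma^t$ on $D^0$, which contradicts the linear-independence clause of Assumption~\ref{ass:init_set}. If anything, your handling of the passage from ``almost everywhere with respect to the marginal of $\gamma_i^t$'' to ``for all $\gamma\in D^0$'' (via continuity of $h_i$ and closedness of the zero set) is more careful than the paper, which asserts that step directly.
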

\begin{proof} 
Let $\ell = \ell(\gamma_i^t) = h_i(\gamma_i^t)$.
First note that for any random vector $Y$, the form $\mathbb{E}[Y Y^\top]$ is positive semi-definite and therefore $\mathbb{E}[\ell \ell^\top]$ is positive semi-definite by definition.
To see this, we have for all vectors $b$ that $b^\top \mathbb{E}[Y Y^\top] b = \mathbb{E}[b^\top Y Y^\top b] = \mathbb{E}[(b^\top Y)^2] \geq 0$.
We can then proceed by supposing that $\mathbb{E}[\ell \ell^\top]$ is not positive definite.
This requires that there is some nonzero $a \in \mathbb{R}^m$ such that $a^\top \mathbb{E}[\ell \ell^\top] a = 0$.
It would follow that $\mathbb{E}[(a^\top \ell)^2] = 0$ and therefore that $a^\top \ell = 0$ almost surely. 
From Assumption \ref{ass:init_set}, we have that for all $i \in [n]$ there is no nonzero $\alpha_i \in \mathbb{R}^m$ such that $\alpha_i^\top h_i(y_i) = 0$ for all $[y_1\ y_2\ ...\ y_n]^{\top} \in D^0$.
Since $\gamma^t \sim \mathrm{Unif}(D^0)$, there is no nonzero $a \in \mathbb{R}^m$ such that $a^\top \ell (\gamma_i^t) = 0$ for any $i$ in $[n]$ almost surely.
This demonstrates that $\mathbb{E}[\ell \ell^\top]$ is positive definite for all $i \in [n]$ and hence the statement of the lemma holds.
\end{proof}

\begin{lemma}
\label{lem:min_eig}
    (Lemma 1 in \cite{amani2019linear} modified for multiple users) With $V_i^t$ as defined in \eqref{eqn:gram_mat} and $\lambda_-$ as defined in \eqref{eqn:min_eig}, we have with probability at least $1 - \delta$ that
    \begin{equation}
        \lambda_{\text{min}} (V_i^{T'}) \geq \nu + \frac{\lambda_- T'}{2},
    \end{equation}
    for $T' \geq t_\delta := \frac{8 L^2}{\lambda_-} \log(\frac{n m}{\delta})$ and all $i$ in $[n]$.
\end{lemma}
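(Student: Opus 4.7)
The plan is to apply a matrix Chernoff inequality to the sum of i.i.d. rank-one positive semi-definite matrices $X_s := h_i(\gamma_i^s) h_i(\gamma_i^s)^{\top}$ for $s = 1, \dots, T'$, then combine with Weyl's inequality and a union bound over $i \in [n]$.

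First I would record the key structural properties of the exploration phase. Because the algorithm samples $\gamma^s \overset{\text{iid}}{\sim} \mathrm{Unif}(D^0)$ for $s \in [1,T']$, for each fixed user $i$ the sequence $\{X_s\}_{s=1}^{T'}$ is i.i.d.\ and PSD. Assumption \ref{ass:init_set} gives $\|h_i(\gamma_i^s)\| \leq L$ on $D^0 \subseteq \hat{D}^0$, hence $\|X_s\|_{\mathrm{op}} \leq L^2$ almost surely. Lemma \ref{lem:pos_def} supplies the lower bound $\lambda_{\min}(\mathbb{E}[X_s]) \geq \lambda_- > 0$.

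Next, I would invoke a matrix Chernoff tail bound (e.g., Tropp's user-friendly bound) for sums of i.i.d.\ PSD matrices with uniform operator-norm bound $R = L^2$ and minimum expected eigenvalue $\mu = \lambda_-$. Specializing the lower tail at deviation parameter $\tfrac{1}{2}$ yields a bound of the form
\begin{equation*}
    \Pr\!\left( \lambda_{\min}\!\Big( \textstyle\sum_{s=1}^{T'} X_s \Big) \leq \tfrac{T' \lambda_-}{2} \right) \;\leq\; m\, \exp\!\left( -\tfrac{T' \lambda_-}{8 L^2} \right).
\end{equation*}
Setting the right-hand side $\leq \delta/n$ and solving gives exactly the threshold $T' \geq t_{\delta} = \tfrac{8 L^2}{\lambda_-} \log(nm/\delta)$. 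On the complementary event, Weyl's inequality applied to $V_i^{T'} = \nu I + \sum_{s=1}^{T'} X_s$ gives
\begin{equation*}
    \lambda_{\min}(V_i^{T'}) \;\geq\; \nu + \lambda_{\min}\!\Big( \textstyle\sum_{s=1}^{T'} X_s \Big) \;\geq\; \nu + \tfrac{\lambda_- T'}{2}.
\end{equation*}

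Finally I would take a union bound over $i \in [n]$, so that the inequality holds simultaneously for all users with probability at least $1 - \delta$. The main obstacle, such as it is, lies in picking the correct form of the matrix Chernoff inequality and verifying the constants — in particular that $L^2$ is a uniform almost-sure bound on $\|X_s\|_{\mathrm{op}}$ (which follows from Assumption \ref{ass:init_set} since $D^0 \subseteq \hat{D}^0$) and that the $m$ factor in front of the exponential (the ambient dimension of each $X_s$) is correctly absorbed, together with the factor $n$ from the union bound, into the logarithmic term $\log(nm/\delta)$. Everything else is a direct substitution.
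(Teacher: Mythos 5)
Your proposal is correct: the paper gives no standalone proof of this lemma, citing instead Lemma 1 of \cite{amani2019linear} with the multi-user modification being exactly the union bound (hence the $n$ inside the logarithm), and the cited proof proceeds precisely as you do — matrix Chernoff for i.i.d.\ PSD rank-one summands with $R = L^2$, deviation $\varepsilon = 1/2$ giving the $8L^2/\lambda_-$ threshold, then adding $\nu I$. Your verification that $\|h_i(\gamma_i^s)\| \leq L$ during exploration (via $D^0 \subseteq \hat{D}^0$) and that $\lambda_{\min}(\mathbb{E}[X_s]) \geq \lambda_-$ from Lemma \ref{lem:pos_def} covers the only points needing care, so nothing is missing.
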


We can then establish a bound on $b^*$ for $T \geq t+1 > T' \geq t_{\delta}$ as follows.
\begin{align*}
    b^* & = \sum_{i=1}^n \max_{\theta_i \in \tilde{C}^t_i} [a_{ji} h_i^*]^{\top} \theta_i\\
    & \overset{\text{(a)}}{\leq} \sum_{i=1}^n \left( a_{ji} [h_i^*]^{\top} \theta_i^* + 2 | a_{ji} | \sqrt{\beta^t} \| h_i^* \|_{[V_i^t]^{-1}} \right)\\
    & \overset{\text{(b)}}{=}  c_j + \sum_{i=1}^n 2 | a_{ji} | \sqrt{\beta^t} \| h_i^* \|_{[V_i^t]^{-1}} \\
    & \leq  c_j + \sum_{i=1}^n \frac{2 | a_{ji} | \sqrt{\beta^t} \| h_i^* \|}{\sqrt{\lambda_{\text{min}}(V_i^t)}} \\
    & \overset{\text{(c)}}{\leq}  c_j + \sum_{i=1}^n \frac{2 | a_{ji} | \sqrt{\beta^t} \| h_i^* \|}{\sqrt{\lambda_{\text{min}}(V_i^{T'+1})}} \\
    & \overset{\text{(d)}}{\leq}  c_j + \sum_{i=1}^n \frac{2 | a_{ji} | \sqrt{2 \beta^t} \| h_i^* \|}{\sqrt{2 \nu + \lambda_- T'}} \\
    & \overset{\text{(e)}}{\leq}  c_j + \underbrace{\frac{2 n \kappa L \sqrt{2 \beta^T}}{\sqrt{2 \nu + \lambda_- T'}}}_{\ell^t} \numberthis \label{eqn:bstar}
\end{align*}
The step (a) is due to the closed form solution for the support function of an ellipsoid (e.g. \cite{lattimore2020bandit} Eq. 19.13), (b) is due to Lemma \ref{lem:z_tight}, (c) is due to the fact that $\lambda_{\text{min}}(V_i^t) \geq \lambda_{\text{min}}(V_i^{T'})$ for $t \geq T'$, (d) is due to Lemma \ref{lem:min_eig} and (e) is due to Assumptions \labelcref{ass:init_set,ass:price_resp}, and the fact that $\beta^t$ increases with $t$.
Using the same process, we can see that $b^0 \leq c_j^0 + \ell^t$, where $c_j^0 = \sum_{i=1}^n a_{ji} \theta_i^T h_i^0$.

We can now return to bounding $\alpha^t$ in \eqref{eqn:sep_cons}
Because $G^0 \subseteq \tilde{G}^t$ for all $t \geq 1$, we know that $h^0 $ is in $\tilde{G}^t$ for all $t \geq 1$. We use the following Lemma to show that $b^* \geq c_j$.
\begin{lemma}
\label{lem:bstar}
    Assume the same as Lemma \ref{lem:z_tight} and let $j$ be a constraint satisfying Lemma \ref{lem:z_tight}.
    Then, we have that $\max_{\theta \in \tilde{C}^t} \sum_{i=1}^n a_{ji} \theta_i^{\top} h_i^* \geq c_j$.
\end{lemma}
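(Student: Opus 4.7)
The plan is to exploit the convexity in $\alpha$ of the function $\phi_j(\alpha) := \max_{\theta \in \tilde{C}^t} \sum_{i=1}^n a_{ji} \theta_i^{\top}(\alpha h_i^* + (1-\alpha) h_i^0)$ on $[0,1]$. Because $\phi_j$ is a pointwise maximum of functions that are affine in $\alpha$, it is convex. The two anchors that drive the argument are $\phi_j(\alpha^t) = c_j$ (directly from Lemma \ref{lem:z_tight}) and a strict slack $\phi_j(0) \leq c_j - \zeta$ at the initial endpoint. Extrapolating by convexity to $\alpha = 1$ will give $\phi_j(1) \geq c_j$, which is exactly the claim since $\phi_j(1) = \max_{\theta \in \tilde{C}^t} \sum_{i=1}^n a_{ji} \theta_i^{\top} h_i^*$.

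First I would establish the slack bound at $\alpha = 0$. Since $h^0 \in G^0$, there exists $\gamma^0 \in D^0$ with $h_i^0 = h_i(\gamma_i^0)$, and by definition of $D^0$ we have $\sum_{i=1}^n a_{ji} \theta_i^{\top} h_i^0 \leq c_j - \zeta$ for every $\theta \in C^0$. Because $\tilde{C}^t \subseteq C^0$ (the ellipsoidal constraint in $\tilde{C}^t$ only shrinks the initial set $C^0$), the bound persists after maximizing over $\tilde{C}^t$, giving $\phi_j(0) \leq c_j - \zeta$. I would then observe that $\alpha^t > 0$: if $\alpha^t = 0$, then $z^t = h^0$ and Lemma \ref{lem:z_tight} would force $\phi_j(0) = c_j$, contradicting the slack bound just established.

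Writing $\alpha^t = (1-\alpha^t)\cdot 0 + \alpha^t \cdot 1$ and applying convexity of $\phi_j$ yields
\begin{equation*}
    c_j \;=\; \phi_j(\alpha^t) \;\leq\; (1-\alpha^t)\phi_j(0) + \alpha^t \phi_j(1),
\end{equation*}
and solving for $\phi_j(1)$ while substituting $\phi_j(0) \leq c_j - \zeta$ produces
\begin{equation*}
    \phi_j(1) \;\geq\; \frac{c_j - (1-\alpha^t)(c_j - \zeta)}{\alpha^t} \;=\; (c_j - \zeta) + \frac{\zeta}{\alpha^t} \;\geq\; c_j,
\end{equation*}
where the last inequality uses $\alpha^t \in (0,1]$. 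The main subtlety, if any, is recognizing that $\phi_j$ is convex rather than affine (the pointwise maximum over $\tilde{C}^t$ is essential) and that the initial slack $\zeta$ from Assumption \ref{ass:init_set} is exactly what rules out the degenerate case $\alpha^t = 0$ and allows the bound to propagate all the way to $\alpha = 1$ without any dependence on the shrinkage rate of the confidence set $\tilde{C}^t$.
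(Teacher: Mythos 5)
Your proof is correct, and it reaches the conclusion by a somewhat different route than the paper. The paper defines the set $L = \{ h : \max_{\theta \in \tilde{C}^t} \sum_{i=1}^n a_{ji}\theta_i^{\top} h_i \leq c_j\}$, shows it is convex because the objective is a support function, and then argues topologically: $z^t$ lies on the boundary of $L$ (Lemma \ref{lem:z_tight}), $h^0$ lies in the interior of $L$, so if $h^*$ were also interior, convexity of the interior would put $z^t$ in the interior, a contradiction; hence $h^* \notin \mathbf{int}\,L$, which is the claim. You instead work with the one-dimensional function $\phi_j(\alpha) = \max_{\theta \in \tilde{C}^t}\sum_i a_{ji}\theta_i^{\top}\bigl(\alpha h_i^* + (1-\alpha)h_i^0\bigr)$, note it is convex as a pointwise maximum of affine functions of $\alpha$, anchor it with $\phi_j(\alpha^t)=c_j$ from Lemma \ref{lem:z_tight} and the explicit slack $\phi_j(0)\leq c_j-\zeta$ (valid since $\tilde{C}^t \subseteq C^0$ and $h^0$ comes from $D^0$), rule out $\alpha^t=0$ by that same slack, and extrapolate the convexity inequality to $\alpha=1$; your algebra $\phi_j(1) \geq (c_j-\zeta) + \zeta/\alpha^t \geq c_j$ checks out because $\alpha^t \in (0,1]$ and $\zeta>0$. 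The same convexity germ underlies both arguments, but your version is more elementary and quantitative: it avoids the interior/boundary reasoning entirely (in particular, it does not need the paper's claims that $h^0 \in \mathbf{int}\,\tilde{G}^t$ or that $z^t \in \mathbf{bd}\,L$, only the equality stated in Lemma \ref{lem:z_tight}), and it even yields the slightly stronger bound $\phi_j(1)\geq c_j - \zeta + \zeta/\alpha^t$. The trade-off is that your argument leans explicitly on the strict slack $\zeta>0$ from Assumption \ref{ass:init_set}, whereas the paper's geometric argument only needs $h^0$ to be a strictly feasible (interior) point and mirrors the set-based machinery already built for Lemma \ref{lem:z_tight}.
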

\begin{proof}
    Let the set $L$ be defined as 
    \begin{equation*}
        L = \{ [h_1^{\top} \ ...\ h_n^{\top}]^{\top} : \max_{\theta \in \tilde{C}^t} \sum_{i=1}^n a_{ji} \theta_i^{\top} h_i \leq c_j \}.
    \end{equation*}
    To prove the Lemma, we show that $h^*$ is not in the interior of $L$.
    
    First, we show that $L$ is convex.
    To do so, it is sufficient to show that $r(h) = \max_{\theta \in \tilde{C}^t} \sum_{i=1}^n a_{ji} \theta_i^{\top} h_i$ is convex.
    We can show this by noting that $r$ can be expressed as $r(h) = \max_{\phi \in \Phi} \phi^{\top} h$, where $\Phi = \{ [a_{j1} \theta_1^{\top}\ ...\ a_{jn} \theta_n^{\top}]^{\top} : (\theta_1,...,\theta_n) \in \tilde{C}^t \}$.
    Note that $r$ is the support function on $\Phi$, which is necessarily convex \cite{boyd2004convex}.
    Therefore, $L$ is convex.
    
    Since $L$ is convex, so is the interior of $L$ \cite{bertsekas2009convex}.
    By Lemma \ref{lem:z_tight}, $z^t = \alpha^t h^* + (1 - \alpha^t) h^0$ is in the boundary of $L$ and hence not in the interior.
    Also note that $h^0$ is in the interior of $L$ because it is in the interior of $\tilde{G}^t$ (as $\mathbf{int} L \supseteq \mathbf{int}\tilde{G}^t$).
    Suppose that $h^*$ is in the interior of $L$.
    This would imply, by the definition of a convex set, that $z^t$ is in the interior of $L$.
    This is a contradiction, so $h^*$ cannot be in the interior of $L$.
    As this corresponds to the statement of the Lemma, the proof is complete.
\end{proof}

Therefore, we know that $b^* \geq c_j$ and $b^0 \leq c_j$ (as $[h_1^{0,\top}\theta_1\ ...\ h_n^{0,\top}\theta_n]^{\top}$ is in $E$ for all $\theta$ in $C^0$), which imply that $b^* - b^0 \geq 0$ and $\tilde{c}_j = c_j - b^0 \geq 0$.
With these facts in mind and the bounds on $b^*$ and $b^0$, we can continue to simplify the constraint as follows.
\begin{align}
    & c_j \leq \alpha^t b^* + (1 - \alpha^t) b^0 \\
    \Rightarrow \ & \alpha^t \geq \frac{c_j - b^0}{b^* - b^0} = \frac{\tilde{c}_j}{b^* - b^0} \geq \frac{\tilde{c}_j}{c_j + \ell^t - b^0} = \frac{\tilde{c}_j}{\tilde{c}_j + \ell^t}
\end{align}
From the bound on $b^0$, we also have that
\begin{equation}
    \tilde{c}_j = c_j - b^0 \geq c_j - c_j^0 - \ell^t \geq \zeta - \ell^t,
\end{equation}
where the last inequality follows from Assumption \ref{ass:init_set} which implies that for any $h$ in $G^0$ (such as $h^0$), it holds that $c_j - \sum_{i=1}^n a_{ji} h_i^{\top} \theta_i^*  \geq \zeta$ for all $j$ in $[p]$.
This gives us a bound on $1 - \alpha^t$,
\begin{equation}
\label{eqn:alph_bound}
    1 - \alpha^t \leq \frac{\ell^t}{\tilde{c}_j + \ell^t} \leq \frac{\ell^t}{\zeta}.
\end{equation}
Therefore, we have the bound on $r_{t+1}^I$ for $T \geq t+1 > T' \geq t_{\delta}$:
\begin{equation}
    r_{t+1}^I \leq \frac{4 \sqrt{2} \kappa M n^2 L^2 S \sqrt{\beta^T}}{\zeta \sqrt{2 \nu + \lambda_- T'}}
\end{equation}
Note that this bound only holds if both the events in Theorem \ref{thm:conf_set} and Lemma \ref{lem:min_eig} hold.
Since each of these events happen with probability at least $1 - \delta$, they jointly hold with probability at least $1 - 2 \delta$ by the union bound.

\subsubsection{Bounding $r_t^{II}$}
\label{sec:apx_rtII}
In general, our approach for bounding $r_t^{II}$ is standard for stochastic linear bandit analysis (e.g. \cite{abbasi2011improved,dani2008stochastic}).
However, our work does differ in that we use the Lipschitz property of $f$ and only consider $T > T'$.

Also note that the statements in this section only hold if $\theta_i^*$ is in $C_i^t$ for all $i$ in $[n]$ and $t$ in $[T]$ which holds with probability at least $1 - \delta$ by Theorem \ref{thm:conf_set}.
For brevity, this is only discussed here.

Using $r_{t,i}^{II}$ as $r_t^{II}$ due to user $i$, we can use the Lipschitz property in Assumption \ref{ass:lipschitz} and standard linear bandit analysis to get that for $t > T'$:
\begin{align*}
r_{t,i}^{II} & = f_i (h_i(\gamma_i^t)^{\top} \tilde{\theta}_i^t) - f_i (h_i(\gamma_i^t)^{\top} \theta_i^* )\\
& \leq | f_i (h_i(\gamma_i^t)^{\top} \tilde{\theta}_i^t) - f_i (h_i(\gamma_i^t)^{\top} \theta_i^* ) | \\
& \leq M | h_i(\gamma_i^t)^{\top} (\tilde{\theta}_i^t -  \theta_i^* ) | \\
& = M | h_i(\gamma_i^t)^{\top} (\tilde{\theta}_i^t - \hat{\theta}_i^{t-1} + \hat{\theta}_i^{t-1} -  \theta_i^* ) | \\
&\
\begin{aligned}
\leq & \ M \| h_i (\gamma_i^t) \|_{[V_i^{t-1}]^{-1}}\\
& \times ( \| \tilde{\theta}_i^t - \hat{\theta}_i^{t-1} \|_{V_i^{t-1}} + \| \hat{\theta}_i^{t-1} - \theta_i^* \|_{V_i^{t-1}} )
\end{aligned}\\
& = 2 M \| h_i (\gamma_i^t) \|_{[V_i^{t-1}]^{-1}} \sqrt{\beta^{t-1}} \label{ellip} \numberthis
\end{align*}
With the trivial bound $r_{t,i}^{II} \leq 2MLS$ and assuming for simplicity that $T$ is large enough such that $\beta^T \geq 1$, we get  that
\begin{align*}
    r_{t,i}^{II} & \leq \min( 2 M \| h_i (\gamma_i^t) \|_{[V_i^{t-1}]^{-1}} \sqrt{\beta^{t-1}}, 2 MLS)\\
   & \leq 2 M \max(LS,1) \min( \| h_i (\gamma_i^t) \|_{[V_i^{t-1}]^{-1}} \sqrt{\beta^{t-1}}, 1)\\
   & \leq 2 M \max(LS,1) \sqrt{\beta^T} \min( \| h_i (\gamma_i^t) \|_{[V_i^{t-1}]^{-1}} , 1) \numberthis
\end{align*}
We then use a standard linear bandit Lemma to proceed.
\begin{lemma} 
\label{lem:ellip_pot}
(Lemma 11 in \cite{abbasi2011improved}) For the sequence $\{ h_i^t \}_{t=1}^\infty$ in $\mathbb{R}^m$, $\nu > 0$ and $V_i^t = \nu I + \sum_{s=1}^t h_i^s h_i^{s,\top}$ such that $\| h_i^t \| \leq L$ for all $t \in [T]$, we have that
\begin{equation*}
\begin{split}
    & \sum_{t=1}^T \min( \| h_i \|_{[V_i^{t-1}]^{-1}}^2 , 1  ) \\
    & \leq 2(m \log((\textrm{trace}(\nu I) + T L^2)/m) - \log \text{det} (\nu I)).
\end{split}
\end{equation*}
\end{lemma}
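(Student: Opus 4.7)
The plan is to follow the standard elliptical potential argument, which couples the log-determinant of the Gram matrix to the sum of squared self-normalized norms and then bounds that log-determinant using the trace bound via AM-GM on eigenvalues. The key inequality I will invoke at the top is that for any $x \geq 0$, $\min(x,1) \leq 2 \log(1+x)$, which lets us replace the truncated summand by a logarithm without losing anything worse than a factor of two.

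First I would apply the matrix determinant lemma to the rank-one update $V_i^t = V_i^{t-1} + h_i^t (h_i^t)^{\top}$ to obtain the identity
\begin{equation*}
\det V_i^t = \det V_i^{t-1} \bigl( 1 + \| h_i^t \|_{[V_i^{t-1}]^{-1}}^2 \bigr).
\end{equation*}
Taking logs and telescoping from $t=1$ to $T$ with $V_i^0 = \nu I$ yields
\begin{equation*}
\sum_{t=1}^T \log \bigl( 1 + \| h_i^t \|_{[V_i^{t-1}]^{-1}}^2 \bigr) = \log \det V_i^T - \log \det(\nu I).
\end{equation*}
Combining this telescoping identity with the pointwise bound $\min(x,1) \leq 2\log(1+x)$ applied to each $x = \| h_i^t \|_{[V_i^{t-1}]^{-1}}^2$ gives $\sum_{t=1}^T \min(\|h_i^t\|_{[V_i^{t-1}]^{-1}}^2, 1) \leq 2 \bigl( \log \det V_i^T - \log \det(\nu I) \bigr)$.

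Next I would control $\log \det V_i^T$ via the standard AM-GM argument on eigenvalues: since $V_i^T$ is positive definite with $m$ eigenvalues $\lambda_1, \ldots, \lambda_m$ whose geometric mean is at most their arithmetic mean, $\det V_i^T = \prod_k \lambda_k \leq (\mathrm{trace}(V_i^T)/m)^m$. Using $\mathrm{trace}(V_i^T) = \mathrm{trace}(\nu I) + \sum_{s=1}^T \| h_i^s \|^2 \leq \mathrm{trace}(\nu I) + T L^2$ under the norm bound assumption completes the chain:
\begin{equation*}
\log \det V_i^T \leq m \log\!\Bigl( \frac{\mathrm{trace}(\nu I) + T L^2}{m} \Bigr).
\end{equation*}
Putting the two bounds together delivers the stated inequality.

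There is no real obstacle here beyond bookkeeping; the only place to be careful is verifying the pointwise inequality $\min(x,1) \leq 2\log(1+x)$ for all $x \geq 0$ (checked by comparing derivatives and values at $x=0$ and $x=1$), and in making sure the determinant identity is applied with $V_i^{t-1}$ invertible, which follows from $\nu > 0$ so that $V_i^{t-1} \succeq \nu I \succ 0$ throughout.
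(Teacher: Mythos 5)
Your proof is correct and follows exactly the standard elliptical potential argument (rank-one determinant identity, telescoping, the pointwise bound $\min(x,1) \leq 2\log(1+x)$, and the determinant--trace inequality via AM--GM); the paper itself imports this lemma by citation to Lemma 11 of \cite{abbasi2011improved} without reproving it, and your derivation matches the canonical proof given there.
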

Lemma \ref{lem:ellip_pot} is used as follows.
\begin{align*}
    & \sum_{t=T'+1}^T \min( \| h_i (\gamma_i^t) \|_{[V_i^{t-1}]^{-1}}^2 , 1  )\\
    & \leq \sum_{t=1}^T \min( \| h_i^t \|_{[V_i^{t-1}]^{-1}}^2 , 1  )\\
    & \leq 2(m \log((\text{trace}(\nu I) + T L^2)/m) - \log \text{det} (\nu I))\\
    & = 2(m \log((m \nu + T L^2)/m) -  m \log(\nu))\\
    & = 2m \log( 1 + T L^2/ (m \nu) ) \numberthis
\end{align*}
Cauchy-Schwarz can then be used to bound the cumulative regret due to $r_{t,i}^{II}$:
\begin{equation}
    \sum_{t=T'+1}^T r_{t,i}^{II} \leq \sqrt{ (T - T') \sum_{t=T' + 1}^T [r_{t,i}^{II}]^2 }
\end{equation}
Applying the upper bounds on $\sum_{t=T' + 1}^T [r_{t,i}^{II}]^2$ yields the complete bound for the cumulative regret due to $r_{t,i}^{II}$ which is given in the next section.

\subsubsection{Complete regret bound}

With probability at least $ 1- 2 \delta$, it holds that
\begin{align*}
R_T^I & = \sum_{t=1}^{T'} r_t + \sum_{t=T'+1}^T r_t^I + \sum_{t=T'+1}^T r_t^{II} \\
&
\begin{aligned}
\leq\ & 2 MnLST' + \frac{4 \sqrt{2} (T - T') \kappa M n^2 L^2 S \sqrt{\beta^T}}{\zeta \sqrt{2 \nu + \lambda_- T'}}\\
&  + n M \max(LS,1) \sqrt{4 (T - T')} \\
& \ \ \times \sqrt{\beta^T 2m \log( 1 + T L^2/ (m \nu) )}
\end{aligned} \numberthis
\end{align*}
The first term above comes from a trivial bound on $r_t$:
\begin{align*}
    r_t & = \sum_{i=1}^n [ f_i (h_i(\gamma_i^*)^{\top}\theta_i^*) - f_i (h_i(\gamma_i^t)^{\top} \theta_i^*) ]\\  
    & \leq \sum_{i=1}^n | f_i (h_i(\gamma_i^*)^{\top}\theta_i^*) - f_i (h_i(\gamma_i^t)^{\top} \theta_i^*) | \\
    & \leq M \sum_{i=1}^n | (h_i(\gamma_i^*) - h_i(\gamma_i^t))^{\top} \theta_i^* |\\
    & \leq M \sum_{i=1}^n \| h_i(\gamma_i^*) - h_i(\gamma_i^t) \| \| \theta_i^* \| \\
    & \leq 2 M n L S \numberthis
\end{align*}

\subsection{Proof of Theorem \ref{thm:no_reg}}
\label{sec:apx_thm3}

The regret analysis for this setting draws heavily from the PR regret analysis, but it also requires additional work to handle the error in the second argument of the utility function (i.e. the fact that $\munderbar{f}_i(\cdot, \munderbar{\check{\theta}}^t_i)$ does not necessarily equal $\munderbar{f}_i(\cdot, \theta_i^*)$ because $\munderbar{\check{\theta}}^t_i$ is not necessarily equal to $\theta_i^*$).
To deal with this, we use the decomposition
\begin{align*}
\munderbar{r}_t & = \sum_{i=1}^n \left[ \munderbar{f}_i (x_i (\munderbar{\gamma}_i^*, \theta_i^*),\theta_i^*) - \munderbar{f}_i (x_i (\munderbar{\gamma}_i^t,\theta_i^*),\theta_i^*) \right]\\
& = \munderbar{r}_t^I + \munderbar{r}_t^{II} + \munderbar{r}_t^{III} + \munderbar{r}_t^{IV} \numberthis \label{eqn:no_decomp}
\end{align*}
where
\begin{align*}
    \munderbar{r}_t^I & = \sum_{i=1}^n [ \munderbar{f}_i(h_i(\munderbar{\gamma}_i^*)^{\top} \theta_i^*, \theta_i^*) -  \munderbar{f}_i(h_i(\munderbar{\gamma}_i^*)^{\top} \theta_i^*, \munderbar{\check{\theta}}_i^t)]\\
    \munderbar{r}_t^{II} & = \sum_{i=1}^n [ \munderbar{f}_i(h_i(\munderbar{\gamma}_i^*)^{\top} \theta_i^*, \munderbar{\check{\theta}}_i^t) -  \munderbar{f}_i(h_i(\munderbar{\gamma}_i^t)^{\top} \munderbar{\tilde{\theta}}_i^t, \munderbar{\check{\theta}}_i^t)]\\
    \munderbar{r}_t^{III} & = \sum_{i=1}^n [ \munderbar{f}_i(h_i(\munderbar{\gamma}_i^t)^{\top} \munderbar{\tilde{\theta}}_i^t, \munderbar{\check{\theta}}_i^t) -  \munderbar{f}_i(h_i(\munderbar{\gamma}_i^t)^{\top} \munderbar{\tilde{\theta}}_i^t, \theta_i^*)]\\
    \munderbar{r}_t^{IV} & = \sum_{i=1}^n [ \munderbar{f}_i(h_i(\munderbar{\gamma}_i^t)^{\top} \munderbar{\tilde{\theta}}_i^t, \theta_i^*) -  \munderbar{f}_i(h_i(\munderbar{\gamma}_i^t)^{\top} \theta_i^*, \theta_i^*)].
\end{align*}
At first glance, we can see that terms $\munderbar{r}_t^I$ and $\munderbar{r}_t^{III}$ track the variation in the second argument of the utility function, while terms $\munderbar{r}_t^{II}$ and $\munderbar{r}_t^{IV}$ track the variation in the first argument of the utility function.
In order to bound both sets of terms, we use a Lipschitz property on the utility function as we did in the PR analysis.
However, the analysis here differs from the PR analysis because we need that the utility function is Lipschitz with respect to both arguments and because the utility function is defined with respect to the price response (rather than defined arbitrarily) in this setting.

We use the decomposition described in \eqref{eqn:no_decomp} and bound each of the terms separately.
There is a section for preliminary analysis followed by sections for bounding each of the terms in \eqref{eqn:no_decomp}.

\subsubsection{Preliminary analysis}
\label{sec:apx_no_pre}

In this section, we show that $\munderbar{f}_i(x_i, \theta_i)$ can be expressed in closed-form, is increasing and that it is Lipschitz with respect to both $x_i$ and $\theta_i$.
First, we prove that the partial derivative of $\munderbar{f}_i(x_i,\theta_i)$ with respect to $x_i$ must be equal to the inverse of the price response.
This property allows $\munderbar{f}_i$ to be expressed in closed-form and is also used to show that $\munderbar{f}_i(x_i,\theta_i)$ is increasing with respect to $x_i$.
Throughout this section, we denote the inverse of the price response $g_i (x_i, \theta_i)$, such that $x_i( g_i (\tau , \theta_i), \theta_i) = \tau $ for all $\tau$ in $\mathbb{R}_{++}$ and $g_i( x_i ( y , \theta_i), \theta_i) = y $ for all $y$ in $\mathbb{R}_{++}$.
\begin{lemma}
\label{lem:no_equiv}
    Fix some nonzero $\theta_i$ in $\mathbb{R}^m$.
    Then, $\frac{\partial}{\partial x_i} \munderbar{f}_i (x_i, \theta_i) = g_i (x_i, \theta_i)$, where $g_i ( \cdot , \cdot)$ is the inverse of the price response $x_i (\cdot,\cdot)$ with respect to the first argument such that $x_i( g_i (\tau , \theta_i), \theta_i) = \tau $ for all $\tau$ in $\mathbb{R}_{++}$ and $g_i( x_i ( y , \theta_i), \theta_i) = y $ for all $y$ in $\mathbb{R}_{++}$.
\end{lemma}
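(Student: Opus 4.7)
The plan is to obtain the identity by reading off the first-order condition of the maximization problem that implicitly defines $\munderbar{f}_i$. Recall from \eqref{eqn:no_profit} that for every $\gamma_i \in \mathbb{R}_{++}$, the value $x_i(\gamma_i,\theta_i)$ is an (interior) maximizer of the scalar objective $\psi_{\gamma_i}(y) := \munderbar{f}_i(y,\theta_i) - \gamma_i y$ over $y\in\mathbb{R}_{++}$. Since $\munderbar{f}_i$ is differentiable in its first argument, $\psi_{\gamma_i}$ is differentiable, and the first-order necessary condition at the interior optimum yields
\begin{equation*}
\frac{\partial}{\partial x_i}\munderbar{f}_i\bigl(x_i(\gamma_i,\theta_i),\theta_i\bigr) = \gamma_i.
\end{equation*}

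Before invoking this identity for arbitrary $x_i\in\mathbb{R}_{++}$, I first need to confirm that the inverse $g_i(\cdot,\theta_i)$ actually exists as a function on $\mathbb{R}_{++}$. Because $h_i$ is assumed differentiable and strictly decreasing on $\mathbb{R}_{++}$ and $\theta_i\in\mathbb{R}^m_+$ is nonzero, $x_i(\cdot,\theta_i)=h_i(\cdot)^\top\theta_i$ is continuous and strictly decreasing. Combined with the boundary behavior from Assumption \ref{ass:no_resp} (namely $\lim_{\gamma_i\to 0^+}h_i(\gamma_i)=\boldsymbol{\infty}$ and $\lim_{\gamma_i\to\infty}h_i(\gamma_i)=\mathbf{0}$), $x_i(\cdot,\theta_i)$ is a continuous strictly decreasing bijection from $\mathbb{R}_{++}$ onto $\mathbb{R}_{++}$, so its inverse $g_i(\cdot,\theta_i):\mathbb{R}_{++}\to\mathbb{R}_{++}$ is well-defined, strictly decreasing, and satisfies the two identities $x_i(g_i(\tau,\theta_i),\theta_i)=\tau$ and $g_i(x_i(y,\theta_i),\theta_i)=y$.

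With both ingredients in place, the proof concludes in one line: fix any $x_i\in\mathbb{R}_{++}$ and set $\gamma_i := g_i(x_i,\theta_i)\in\mathbb{R}_{++}$. Then $x_i(\gamma_i,\theta_i)=x_i$, so substituting into the first-order condition gives $\frac{\partial}{\partial x_i}\munderbar{f}_i(x_i,\theta_i) = \gamma_i = g_i(x_i,\theta_i)$, which is the claimed identity.

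The main obstacle is not algebraic but conceptual: one has to justify that the first-order condition can legitimately be applied. The interior maximizer lies in the open domain $\mathbb{R}_{++}$, so no boundary complication arises; differentiability of $\munderbar{f}_i$ in its first argument is assumed; and the existence/well-definedness of $\munderbar{f}_i$ (up to an additive constant) was already argued in the body of the paper from the strict monotonicity and smoothness of $h_i$. Thus the only care needed is in verifying the bijectivity of $x_i(\cdot,\theta_i)$ that underlies the existence of $g_i$, which is handled cleanly by the monotonicity and limit conditions in Assumption \ref{ass:no_resp}.
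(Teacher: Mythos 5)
Your proposal is correct and follows essentially the same route as the paper's proof: apply Fermat's first-order condition to the profit-maximization problem defining $\munderbar{f}_i$ to get $\frac{\partial}{\partial x_i}\munderbar{f}_i(x_i(\gamma_i,\theta_i),\theta_i)=\gamma_i$, then use strict monotonicity (hence invertibility) of $x_i(\cdot,\theta_i)$ to identify the derivative with $g_i$. Your explicit verification that $x_i(\cdot,\theta_i)$ is a bijection onto $\mathbb{R}_{++}$ via the limit conditions of Assumption \ref{ass:no_resp} is a slightly more careful rendering of the same step the paper handles through the uniqueness of the left-inverse.
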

\begin{proof}
Note that $\left( \munderbar{f}_i(x_i, \theta_i) - \gamma_i x_i \right)$ is differentiable with respect to $x_i$ and the interval $\mathbb{R}_{++}$ is open.
Therefore, we have from Fermat's Theorem that the maximizer of $\left( \munderbar{f}_i(x_i, \theta_i) - \gamma_i x_i \right)$, which we denote $x_i^*$, necessarily satisfies
\begin{equation*}
    \frac{\partial}{\partial x_i} [\munderbar{f}_i(x_i, \theta_i) - \gamma_i x_i] \Big \rvert_{x_i = x_i^*} = 0.
\end{equation*}
Evaluating the derivative yields
\begin{equation}
\label{eqn:inv_rel}
    \frac{\partial}{\partial x_i} \munderbar{f}_i(x_i^*, \theta_i) = \gamma_i.
\end{equation}
It follows that $\frac{\partial}{\partial x_i} \munderbar{f}_i(x_i(\gamma_i, \theta_i), \theta_i) = \gamma_i$ for all $\gamma_i \in \mathbb{R}_{++}$.
For clarity, we use $F_i(x_i) = \frac{\partial}{\partial x_i} \munderbar{f}_i(x_i, \theta_i) $ and $x_i(\gamma_i) = x_i(\gamma_i, \theta_i)$ for the remainder of the proof.
Note that $x_i(\gamma_i)$ is strictly monotone and hence invertible because $h_i$ is strictly decreasing and $\theta_i$ is fixed as nonzero.
From \eqref{eqn:inv_rel}, we have that
\begin{equation*}
    F_i(x_i(\gamma_i)) = \gamma_i.
\end{equation*}
for all $\gamma_i \in \mathbb{R}_{++}$.
This states that $F_i$ is the left-inverse of $x_i(\cdot)$.
Since $x_i(\cdot)$ is invertible, this left-inverse is unique and hence the inverse of $x_i(\cdot)$ (\cite{keisler1996mathematical} Prop. A.7.3).
Therefore, we have that $g_i(x_i,\theta_i) = \frac{\partial}{\partial x_i} \munderbar{f}_i(x_i, \theta_i)$.
\end{proof}

It immediately follows from Lemma \ref{lem:no_equiv} that the utility can be expressed as
\begin{equation}
\label{eqn:no_util}
    \munderbar{f}_i(x_i, \theta_i) = \munderbar{f}_i(x_i^0, \theta_i) + \int_{x^0_i}^{x_i} g_i (\tau , \theta_i) d\tau
\end{equation}
for nonzero $\theta_i$.
This provides a closed-form equation for the utility function that can be used by the algorithm to compute the utility for a given $x_i$ and $\theta_i$.
We also use Lemma \ref{lem:no_equiv} to establish that $\munderbar{f}_i$ must be increasing in the following lemma.

\begin{lemma}
Let Assumption \ref{ass:no_resp} hold. Then, $\munderbar{f}_i(\cdot, \theta_i)$ is strictly increasing such that for all $x_i^1 < x_i^2$ we have that $\munderbar{f}_i(x_i^1, \theta_i) < \munderbar{f}_i(x_i^2, \theta_i)$ for any nonzero $\theta_i$. \label{lem:no_incr}
\end{lemma}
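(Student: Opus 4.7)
The plan is to leverage Lemma~\ref{lem:no_equiv}, which identifies $\frac{\partial}{\partial x_i}\munderbar{f}_i(x_i,\theta_i)$ with the inverse price response $g_i(x_i,\theta_i)$, and show that this derivative is strictly positive on $\mathbb{R}_{++}$. Once this is established, strict monotonicity follows immediately from the closed-form expression~\eqref{eqn:no_util}, since for any $x_i^1 < x_i^2$ we have
\begin{equation*}
\munderbar{f}_i(x_i^2,\theta_i) - \munderbar{f}_i(x_i^1,\theta_i) = \int_{x_i^1}^{x_i^2} g_i(\tau,\theta_i)\, d\tau > 0.
\end{equation*}

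The substantive step is therefore to argue $g_i(x_i,\theta_i) > 0$ for every $x_i \in \mathbb{R}_{++}$. I would first observe that under Assumption~\ref{ass:no_resp} the basis function $h_i$ is specified to have domain $\mathbb{R}_{++}$, so the price response $x_i(\gamma_i,\theta_i) = h_i(\gamma_i)^\top \theta_i$ is defined only for $\gamma_i > 0$. Consequently, the inverse $g_i(\cdot,\theta_i)$, which by definition satisfies $x_i(g_i(\tau,\theta_i),\theta_i) = \tau$, must take values in $\mathbb{R}_{++}$ whenever it is defined. Combined with Lemma~\ref{lem:no_equiv}, this yields $\frac{\partial}{\partial x_i}\munderbar{f}_i(x_i,\theta_i) = g_i(x_i,\theta_i) > 0$.

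For completeness, I would briefly justify that the inverse $g_i(\cdot,\theta_i)$ is well-defined on all of $\mathbb{R}_{++}$ for the nonzero $\theta_i \in \mathbb{R}_+^m$ at hand: since $h_i$ is continuous and (as stipulated in the SUM setting) strictly decreasing on $\mathbb{R}_{++}$, while $\theta_i \succeq \mathbf{0}$ is nonzero, the map $\gamma_i \mapsto h_i(\gamma_i)^\top \theta_i$ is continuous and strictly decreasing. Combined with the limits $\lim_{\gamma_i \to 0^+} h_i(\gamma_i) = \boldsymbol{\infty}$ and $\lim_{\gamma_i \to \infty} h_i(\gamma_i) = \mathbf{0}$ from Assumption~\ref{ass:no_resp}, the Intermediate Value Theorem shows the price response is a continuous bijection from $\mathbb{R}_{++}$ onto $\mathbb{R}_{++}$, so its inverse $g_i(\cdot,\theta_i)$ exists on $\mathbb{R}_{++}$ with values in $\mathbb{R}_{++}$.

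The main obstacle is the careful bookkeeping needed to ensure that the inverse price response is well-defined on all of $\mathbb{R}_{++}$, because $\theta_i$ is only guaranteed to be nonzero (not strictly positive componentwise). Handling this requires explicitly appealing to the strictly decreasing property of $h_i$ that is assumed at the start of Section~\ref{sec:sum_prob}, rather than attempting to argue directly from componentwise monotonicity.
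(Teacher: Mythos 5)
Your proposal is correct and follows essentially the same route as the paper: both reduce the claim, via Lemma~\ref{lem:no_equiv}, to the positivity of $g_i(\cdot,\theta_i)$, which follows because Assumption~\ref{ass:no_resp} restricts the domain of $h_i$ (hence the range of the inverse price response) to $\mathbb{R}_{++}$. The only cosmetic difference is that you integrate the derivative via \eqref{eqn:no_util} while the paper invokes the Mean Value Theorem, and you add a (welcome but not essential) check that the inverse is well-defined on all of $\mathbb{R}_{++}$.
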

\begin{proof}
We use the Mean Value Theorem to show that $\munderbar{f}_i$ is strictly increasing with respect to $x_i$.
The Mean Value Theorem is applicable because $\munderbar{f}_i$ is differentiable with respect to $x_i$.
From the Mean Value Theorem, we have that for any $x_i^1$ and $x_i^2$, there exists $x_i^3 \in (x_i^1,x_i^2)$ such that
\begin{equation*}
     \munderbar{f}_i(x_i^1, \theta_i) - \munderbar{f}_i(x_i^2, \theta_i) =  (x_i^1 - x_i^2) \frac{\partial}{\partial x_i}  \munderbar{f}_i (x_i^3, \theta_i)
\end{equation*}
Therefore, if $\frac{\partial \munderbar{f}_i (x_i, \theta_i)}{\partial x_i} > 0$ for all $x_i$, then $\munderbar{f}_i$ is strictly increasing.
Equivalently, we need that
\begin{equation*}
    0 < \frac{\partial}{\partial x_i} \munderbar{f}_i (x_i, \theta_i) = g_i (x_i, \theta_i).
\end{equation*}
This is true if the codomain of $g_i$ is a subset of the positive reals.
Therefore, it is true if the domain of $x_i (\cdot)$ is a subset of the positive reals.
This holds because Assumption \ref{ass:no_resp} ensures that the domain of $h_i$ is the positive reals and that $\theta_i$ is taken to be nonzero.
\end{proof}

Lemma \ref{lem:no_incr} is necessary in order to apply the analysis from the PR setting.
Also note that it is expected that the utility function is increasing given that one's benefit derived from a resource would generally increase as the resource consumption increases.

We then prove that $\munderbar{f}_i$ is Lipschitz with respect to both arguments.

\begin{lemma}
\label{lem:lipsch_x}
Let Assumption \labelcref{ass:no_bound} hold and fix some $\theta$ in $\munderbar{C}^0$. For all $x$ in $E$, the function $\munderbar{f}_i(x_i,\theta_i)$ is $\Gamma$-Lipschitz continuous with respect to $x_i$ for all $i$ in $[n]$.
That is, for any $x^1, x^2$ in $ E$,
\begin{equation*}
    | \munderbar{f}_i(x_i^1, \theta_i) - \munderbar{f}_i(x_i^2, \theta_i) | \leq \Gamma | x_i^1 - x_i^2 |,
\end{equation*}
for all $i$ in $[n]$.
\end{lemma}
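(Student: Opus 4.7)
The plan is to reduce the lemma to an application of the Mean Value Theorem using the closed-form derivative already established in Lemma \ref{lem:no_equiv} and the uniform bound on $g_i$ furnished by Assumption \ref{ass:no_bound}. Fix $i \in [n]$ and $\theta \in \munderbar{C}^0$, and take $x^1, x^2 \in E$. The idea is to parametrize along the line segment from $x^1$ to $x^2$ and differentiate coordinatewise in the $i$-th direction.

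First, I would define $\phi(t) = \munderbar{f}_i\bigl(x_i^1 + t(x_i^2 - x_i^1), \theta_i\bigr)$ for $t \in [0,1]$. Since $\munderbar{f}_i$ is differentiable with respect to its first argument, $\phi$ is differentiable on $(0,1)$ and Lemma \ref{lem:no_equiv} gives
\begin{equation*}
    \phi'(t) = g_i\bigl(x_i^1 + t(x_i^2 - x_i^1), \theta_i\bigr)\,(x_i^2 - x_i^1).
\end{equation*}
By the Mean Value Theorem, there exists $t^* \in (0,1)$ such that $\phi(1) - \phi(0) = \phi'(t^*)$, so letting $x^* = x^1 + t^*(x^2 - x^1)$, we obtain
\begin{equation*}
    \munderbar{f}_i(x_i^2,\theta_i) - \munderbar{f}_i(x_i^1,\theta_i) = g_i(x_i^*,\theta_i)\,(x_i^2 - x_i^1).
\end{equation*}

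Next, I would argue that the Assumption \ref{ass:no_bound} bound $g_i \le \Gamma$ applies at the intermediate point. Note that $E \subseteq \bar{E}$, so $x^1, x^2 \in \bar{E}$, and $\bar{E}$ is convex (being the intersection of half-spaces). Therefore $x^* \in \bar{E}$, and since $\theta \in \munderbar{C}^0$, Assumption \ref{ass:no_bound} yields $g_i(x_i^*,\theta_i) \le \Gamma$. Combining this with the MVT identity and taking absolute values gives the desired Lipschitz bound
\begin{equation*}
    |\munderbar{f}_i(x_i^1,\theta_i) - \munderbar{f}_i(x_i^2,\theta_i)| \le \Gamma\,|x_i^1 - x_i^2|.
\end{equation*}

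There is essentially no obstacle once the two supporting facts (the formula for $\partial \munderbar{f}_i/\partial x_i$ and the uniform bound on $g_i$) are in hand. The only subtle point is justifying that the intermediate point $x^*$ lies in $\bar{E}$ so that the assumption is applicable; this follows immediately from convexity of $\bar{E}$ together with $E \subseteq \bar{E}$. One might also note that $g_i$ as a function depends on $x_i$ alone (not on the other coordinates), so it would suffice even to know that the $i$-th coordinate of $x^*$ arises as the $i$-th coordinate of some point in $\bar{E}$, which is automatic.
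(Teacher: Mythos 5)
Your proposal is correct and follows essentially the same route as the paper's proof: apply the Mean Value Theorem to the first argument, identify the derivative as $g_i$ via Lemma \ref{lem:no_equiv}, and bound it by $\Gamma$ using Assumption \ref{ass:no_bound}. Your extra remark that the intermediate point lies in $\bar{E}$ by convexity (so the assumption indeed applies there) is a minor tightening of a step the paper leaves implicit, not a different argument.
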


\begin{proof}
With the Mean Value Theorem, we have that for any $x_i^1, x_i^2$, there exists a $x_i^3 \in (x_i^1, x_i^2)$ such that
\begin{equation*}
    \munderbar{f}_i(x_i^1, \theta_i) - \munderbar{f}_i(x_i^2, \theta_i) = (x_i^1 - x_i^2) \frac{\partial}{\partial x_i} \munderbar{f}_i (x_i^3, \theta_i)
\end{equation*}
It is easy to see that
\begin{equation*}
    | \munderbar{f}_i(x_i^1, \theta_i) - \munderbar{f}_i(x_i^2, \theta_i) | \leq  |x_i^1 - x_i^2| \sup_{x_i^3 \in (x_i^1, x_i^2)} \left| \frac{\partial}{\partial x_i}  \munderbar{f}_i (x_i^3, \theta_i) \right|
\end{equation*}
Therefore, we can upper bound the absolute value of the derivative to derive a Lipschitz constant.
From Lemma \ref{lem:no_equiv}, the derivative of $\munderbar{f}_i(x_i, \theta_i)$ with respect to $x_i$ is
\begin{equation*}
    \frac{\partial}{\partial x_i} \munderbar{f}_i (x_i^3, \theta_i) = g_i (x_i^3, \theta_i).
\end{equation*}
We can then apply Assumption \ref{ass:no_bound} to bound $g_i (x_i^3, \theta_i)$ such that
\begin{equation*}
    \left| \frac{\partial}{\partial x_i} \munderbar{f}_i (x_i^3, \theta_i) \right| = | g_i (x_i^3, \theta_i) | \leq \Gamma,
\end{equation*}
which completes the proof.
\end{proof}

\begin{lemma}
\label{lem:lipsch_th}
Let Assumptions \labelcref{ass:no_resp,ass:no_bound} hold and fix some $x$ in $E$. For all $\theta$ in $\munderbar{C}^0$, the function $\munderbar{f}_i(x_i,\theta_i)$ is $(\eta + \frac{\xi L}{\rho K})$-Lipschitz continuous with respect to $\theta_i$ for all $i$ in $[n]$.
That is, for any $\theta^1, \theta^2$ in $\munderbar{C}^0$,
\begin{equation*}
    | \munderbar{f}_i(x_i, \theta_i^1) - \munderbar{f}_i(x_i, \theta_i^2) | \leq \left( \eta + \frac{\xi L}{\rho K}\right) \| \theta_i^1 - \theta_i^2 \|,
\end{equation*}
for all $i$ in $[n]$.
\end{lemma}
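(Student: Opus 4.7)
The plan is to leverage the closed-form integral representation of $\munderbar{f}_i$ established in equation \eqref{eqn:no_util}, together with Assumption \ref{ass:no_bound}, to split the Lipschitz argument into a "base point" piece and an "integral" piece. Specifically, I would write
\begin{equation*}
\munderbar{f}_i(x_i,\theta_i^1) - \munderbar{f}_i(x_i,\theta_i^2) = \big[ \munderbar{f}_i(x_i^0,\theta_i^1) - \munderbar{f}_i(x_i^0,\theta_i^2) \big] + \int_{x_i^0}^{x_i} \big[ g_i(\tau,\theta_i^1) - g_i(\tau,\theta_i^2) \big] d\tau,
\end{equation*}
so that the triangle inequality yields a bound of $\eta \|\theta_i^1 - \theta_i^2\|$ on the first bracket directly from the second part of Assumption \ref{ass:no_bound}.

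The main work is bounding the integrand in the second piece, i.e., controlling how the inverse price response $g_i(\tau,\theta_i)$ varies in $\theta_i$. My approach would be to use the defining identity $h_i(g_i(\tau,\theta_i))^\top \theta_i = \tau$ and differentiate implicitly with respect to $\theta_i$, giving
\begin{equation*}
\frac{\partial g_i}{\partial \theta_i}(\tau,\theta_i) = -\frac{h_i(g_i(\tau,\theta_i))}{h_i'(g_i(\tau,\theta_i))^\top \theta_i}.
\end{equation*}
The numerator is bounded in norm by $L$ by Assumption \ref{ass:no_bound}. For the denominator, the coordinate-wise inequality $h_i'(g_i(\tau,\theta_i)) \preceq -\mathbf{1} K$ combined with $\theta_i \in \mathbb{R}_+^m$ and the lower bound $\mathbf{1}^\top \theta_i \geq \rho$ from Assumption \ref{ass:no_resp} (inherited into $\munderbar{C}^0$) gives $h_i'(g_i)^\top \theta_i \leq -K\rho$, so the denominator has absolute value at least $K\rho$. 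Hence $\|\partial g_i/\partial \theta_i\| \leq L/(K\rho)$.

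Applying the mean value theorem to $g_i(\tau,\cdot)$ then yields $|g_i(\tau,\theta_i^1) - g_i(\tau,\theta_i^2)| \leq \frac{L}{K\rho}\|\theta_i^1 - \theta_i^2\|$ pointwise in $\tau$. Integrating and using $|x_i - x_i^0| \leq \xi$ (which holds since both lie in $\bar{E}$, by the definition of $\xi$ just after Assumption \ref{ass:no_bound}) bounds the integral piece by $\frac{\xi L}{\rho K}\|\theta_i^1-\theta_i^2\|$. Adding the two contributions gives the claimed constant $\eta + \xi L/(\rho K)$.

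The main obstacle I anticipate is justifying the implicit differentiation step rigorously: one must verify that $g_i(\tau,\cdot)$ is actually differentiable in $\theta_i$ on the relevant region, which requires checking that the scalar map $\gamma_i \mapsto h_i(\gamma_i)^\top \theta_i$ has a nonvanishing derivative (guaranteed by the strict monotonicity of each component and $\theta_i \succeq 0$, $\mathbf{1}^\top\theta_i \geq \rho > 0$) so that the implicit function theorem applies. If one prefers to avoid differentiability concerns entirely, the same bound can be obtained directly: fix $\tau$ and use $h_i(g_i(\tau,\theta_i^k))^\top \theta_i^k = \tau$ for $k=1,2$ to write
\begin{equation*}
0 = h_i(g_i(\tau,\theta_i^1))^\top (\theta_i^1-\theta_i^2) + [h_i(g_i(\tau,\theta_i^1)) - h_i(g_i(\tau,\theta_i^2))]^\top \theta_i^2,
\end{equation*}
then apply the mean value theorem componentwise to $h_i$ and solve for $g_i(\tau,\theta_i^1) - g_i(\tau,\theta_i^2)$, using the same bounds $L$, $K$, $\rho$ to reach the identical estimate.
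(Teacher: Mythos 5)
Your proposal is correct and follows essentially the same route as the paper's proof: the decomposition via \eqref{eqn:no_util}, the $\eta$-bound on the base-point term, implicit differentiation of $\theta_i^{\top} h_i(g_i(\tau,\theta_i)) = \tau$ to get $\|\nabla_{\theta_i} g_i\| \leq L/(\rho K)$, and then the mean value theorem together with $|x_i - x_i^0| \leq \xi$. The only differences are cosmetic — you apply the MVT pointwise in $\tau$ before integrating, whereas the paper integrates first (Leibniz rule) and applies the MVT to the integral $G_i(x_i,\cdot)$, and your implicit-differentiation formula carries the correct minus sign (the paper omits it, harmlessly, since only norms are used).
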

\begin{proof}
Let $f_i^0(\cdot) = \munderbar{f}_i(x_i^0, \cdot)$.
First, we manipulate the left-hand side of the inequality in the lemma:
\begin{align*}
    & | \munderbar{f}_i(x_i, \theta_i^1) - \munderbar{f}_i(x_i, \theta_i^2) |\\
    &
    \begin{aligned}
    & = \Bigg| f_i^0(\theta_i^1) + \int_{x^0_i}^{x_i} g_i (\tau_i , \theta_i^1) d\tau_i \\
    & \qquad - \left( f_i^0(\theta_i^2) + \int_{x^0_i}^{x_i} g_i (\tau_i , \theta_i^2) d\tau_i \right) \Bigg|
    \end{aligned}\\
    &
    \begin{aligned}
    & \leq \left| f_i^0(\theta_i^1) - f_i^0(\theta_i^2) \right| \\
    & \qquad + \left| \int_{x^0_i}^{x_i} g_i (\tau_i , \theta_i^1) d\tau_i - \int_{x^0_i}^{x_i} g_i (\tau_i , \theta_i^2) d\tau_i \right|
    \end{aligned}
\end{align*}
Next, we consider the gradient of $G(x_i,\theta_i) = \int_{x^0_i}^{x_i} g_i (\tau_i , \theta_i^1) d\tau_i$ with respect to $\theta_i$ for some $\theta$ in $\munderbar{C}^0$,
\begin{equation*}
    \nabla_{\theta_i} \int_{x^0_i}^{x_i} g_i (\tau_i, \theta_i) d \tau_i = \int_{x^0_i}^{x_i} \nabla_{\theta_i}  g_i (\tau_i, \theta_i) d \tau_i,
\end{equation*}
which follows from the Leibniz integral rule. 
In order to evaluate $\nabla_{\theta_i}   g_i (\tau_i , \theta_i)$, we use the fact that $g_i (\tau_i , \theta_i)$ is defined implicitly with the relation
\begin{equation*}
    R(g_i (\tau_i , \theta_i), \theta_i) = \theta_i^T h_i (g_i (\tau_i , \theta_i)) - \tau_i = 0,
\end{equation*}
which follows from the fact that $g_i$ is the inverse of $x_i$.
Therefore, for some $\gamma_i = g_i (\tau_i , \theta_i)$, we can use implicit differentiation to get
\begin{equation}
\label{eqn:imp_diff}
    \nabla_{\theta_i} g_i (\tau_i , \theta_i) = \frac{\nabla_{\theta_i} R(\gamma_i, \theta_i)}{\frac{\partial}{\partial \gamma_i} R(\gamma_i, \theta_i)} = \frac{h_i (\gamma_i)}{\theta_i^T h_i'(\gamma_i)}.
\end{equation}
Since $[\tau_1\ ...\ \tau_n]^{\top}$ is in $\bar{E}$ by convexity and $\theta$ is in $\munderbar{C}^0$ by specification, we know that $h_i'(\gamma_i)$ is nonzero due to Assumption \ref{ass:no_bound}.
Also, the function $h_i$ is differentiable by definition.
Therefore, we have that the numerator in \eqref{eqn:imp_diff} exists for all $\gamma_i$ and the denominator is nonzero (and exists) for all $\theta_i$ in $\munderbar{C}^0$ and all $\gamma_i$.
Therefore, $G(x_i,\theta_i)$ is differentiable with respect to $\theta_i$ for all $\theta_i$ in $\munderbar{C}^0$ and all $\gamma_i$.

This allows us to apply the Mean Value Theorem, which gives that
\begin{equation*}
    | G_i(x_i, \theta_i^1) - G_i(x_i, \theta_i^2) | \leq \sup_{\theta_i^3 \in \munderbar{C}^0 } \| \nabla_{\theta_i} G_i (x_i, \theta_i^3) \|  \| \theta_i^1 - \theta_i^2 \|.
\end{equation*}
In order to bound $\| \nabla_{\theta_i} G_i (x_i, \theta_i) \|$, we first bound $ \|\nabla_{\theta_i} g_i (\tau_i , \theta_i) \|$ as
\begin{equation*}
    \|\nabla_{\theta_i} g_i (\tau_i , \theta_i) \| = \frac{ \| h_i (g_i (\tau_i , \theta_i)) \| }{ | \theta_i^T h_i'(g_i (\tau_i , \theta_i)) |} \leq \frac{L}{\rho K}.
\end{equation*}
This follows from Assumptions \labelcref{ass:no_resp,ass:no_bound}, because $\tau=[\tau_1\ ...\ \tau_n]^{\top}$ is in $\bar{E}$ and $\theta$ is in $\munderbar{C}^0$.
The reason that $\tau$ is in $\bar{E}$ is because $x^0$ and $x$ are in $\bar{E}$, and $\bar{E}$ is convex.
Therefore, we have that
\begin{align*}
    \| \nabla_{\theta_i} G_i (x_i, \theta_i) \| & \leq \left \| \int_{x^0_i}^{x_i} \nabla_{\theta_i}  g_i (\tau_i, \theta_i) d \tau_i \right \| \\
    & \leq \int_{x^0_i}^{x_i} \left \| \nabla_{\theta_i}  g_i (\tau_i, \theta_i) \right \| d \tau_i  \\
    & \leq | x_i - x^0_i |  \max_{\tau_i \in [x^0_i, x_i]} \| \nabla_{\theta_i}  g_i (\tau_i , \theta_i) \| \\
    & \leq \frac{\xi L}{\rho K}.
\end{align*}

We can also bound $\left| f_i^0(\theta_i^1) - f_i^0(\theta_i^2) \right|$ with Assumption \ref{ass:no_bound} as
\begin{equation*}
    \left| f_i^0(\theta_i^1) - f_i^0(\theta_i^2) \right| \leq  \eta \left \| \theta_i^1 - \theta_i^2 \right \|
\end{equation*}

This allows a Lipschitz constant to be derived:
\begin{align*}
    & | \munderbar{f}_i(x_i, \theta_i^1) - \munderbar{f}_i(x_i, \theta_i^2) |\\
    &
    \begin{aligned}
    & \leq \left| f_i^0(\theta_i^1) - f_i^0(\theta_i^2) \right| \\
    & \qquad + \left| G_i (x_i, \theta_i^1) - G_i (x_i, \theta_i^2) \right|
    \end{aligned}\\
    & \leq \left( \eta + \frac{\xi L}{\rho K} \right) \left \| \theta_i^1 - \theta_i^2 \right \|
\end{align*}
This completes the proof.
\end{proof}

Lemmas \labelcref{lem:lipsch_x,lem:lipsch_th} allow the regret to be put in terms of the price response for terms $\munderbar{r}_t^{II}$ and $\munderbar{r}_t^{IV}$, and in terms of the parameter for terms $\munderbar{r}_t^I$ and $\munderbar{r}_t^{III}$.

\subsubsection{Bounding $\underline{r}_t^I$ and $\underline{r}_t^{III}$}
We can first use Lemma \ref{lem:lipsch_th} to bound $\munderbar{r}_t^I$ with the error in the parameter as
\begin{align*}
    \munderbar{r}_t^I & = \sum_{i=1}^n [ \munderbar{f}_i(h_i(\munderbar{\gamma}_i^*)^{\top} \theta_i^*, \theta_i^*) -  \munderbar{f}_i(h_i(\munderbar{\gamma}_i^*)^{\top} \theta_i^*, \munderbar{\check{\theta}}_i^t)]\\
    & \leq \sum_{i=1}^n | \munderbar{f}_i(h_i(\munderbar{\gamma}_i^*)^{\top} \theta_i^*, \theta_i^*) -  \munderbar{f}_i(h_i(\munderbar{\gamma}_i^*)^{\top} \theta_i^*, \munderbar{\check{\theta}}_i^t)|\\
    & \leq \left( \eta + \frac{\xi L}{\rho K} \right) \sum_{i=1}^n \| \theta_i^* -  \munderbar{\check{\theta}}_i^t \|, \numberthis
\end{align*}
where Lemma \ref{lem:lipsch_th} applies because $h_i(\munderbar{\gamma}_i^*)^{\top} \theta_i^*$ is in $\bar{E}$ by definition, and both $\theta_i^*$ and $\munderbar{\check{\theta}}_i^t$ are in $\munderbar{C}^0$.
Note that $\munderbar{\check{\theta}}_i^t$ is in $\munderbar{C}^0$ because $\munderbar{\check{\theta}}_i^t$ is in $\munderbar{C}_i^t$ by definition and $\munderbar{C}_i^t \subseteq \munderbar{C}^0$.

We can bound $\munderbar{r}_t^{III}$ similarily:
\begin{align*}
    \munderbar{r}_t^{III} & = \sum_{i=1}^n [ \munderbar{f}_i(h_i(\munderbar{\gamma}_i^t)^{\top} \munderbar{\tilde{\theta}}_i^t, \munderbar{\check{\theta}}_i^t) -  \munderbar{f}_i(h_i(\munderbar{\gamma}_i^t)^{\top} \munderbar{\tilde{\theta}}_i^t, \theta_i^*)]\\
    & \leq \sum_{i=1}^n | \munderbar{f}_i(h_i(\munderbar{\gamma}_i^t)^{\top} \munderbar{\tilde{\theta}}_i^t, \munderbar{\check{\theta}}_i^t) -  \munderbar{f}_i(h_i(\munderbar{\gamma}_i^t)^{\top} \munderbar{\tilde{\theta}}_i^t, \theta_i^*)|\\
    & \leq \left( \eta + \frac{\xi L}{\rho K} \right) \sum_{i=1}^n \| \theta_i^* -  \munderbar{\check{\theta}}_i^t \|, \numberthis
\end{align*}
where we can apply Lemma \ref{lem:lipsch_th} because $h_i(\munderbar{\gamma}_i^t)^{\top} \munderbar{\tilde{\theta}}_i^t$ is guaranteed to be in $E$ by definition.

To proceed, we can define an expanded confidence set as we did in the PR setting (in \eqref{eqn:exp_conf}):
\begin{equation}
\label{eqn:no_exp_conf}
\begin{split}
    \munderbar{\tilde{C}}_i^t = \{ & \theta_i \in \mathbb{R}^m_+ : \\
    & \| \theta_i - \theta_i^* \|_{V_i^t} \leq 2 \sqrt{\beta^t}, \| \theta_i \| \leq S, \mathbf{1}^{\top} \theta_i \geq \rho \}
\end{split}
\end{equation}
As before, we know that $\munderbar{C}_i^t \subseteq \munderbar{\tilde{C}}_i^t$.
Using Lemma \ref{lem:min_eig}, we can then see that for $t \geq T'$, any $\theta_i$ in $\munderbar{\tilde{C}}_i^t$ satisfies
\begin{align*}
    2 \sqrt{\beta^t} & \geq \| \theta_i - \theta_i^* \|_{V_i^t}\\
    & \geq \| \theta_i - \theta_i^* \| \sqrt{ \lambda_{\text{min}} (V_i^t)}\\
    & \geq \| \theta_i - \theta_i^* \| \sqrt{\lambda_{\text{min}} (V_i^{T'})}\\
    & \geq \| \theta_i - \theta_i^* \| \sqrt{\nu + \frac{\lambda_- T'}{2}} \numberthis
\end{align*}
Since $\munderbar{\check{\theta}}_i^t$ is in $\munderbar{\tilde{C}}_i^{t-1}$ and all terms are positive, we then have that
\begin{equation}
    \| \theta_i^* -  \munderbar{\check{\theta}}_i^t \| \leq \frac{2 \sqrt{2 \beta^{t-1}}}{\sqrt{2\nu + \lambda_- T'}} \leq \frac{2 \sqrt{2 \beta^T}}{\sqrt{2\nu + \lambda_- T'}},
\end{equation}
where we use that $\beta^T \geq \beta^t$ for $T \geq t$.
Therefore, it follows that
\begin{align}
    \munderbar{r}_t^I & \leq \frac{2 \sqrt{2} n (\eta + \frac{\xi L}{\rho K}) \sqrt{\beta^T}}{ \sqrt{2\nu + \lambda_- T'}},\\
    \munderbar{r}_t^{III} & \leq \frac{2 \sqrt{2} n (\eta + \frac{\xi L}{\rho K}) \sqrt{\beta^T}}{\sqrt{2\nu + \lambda_- T'}}.
\end{align}

\subsubsection{Bounding $\underline{r}_t^{II}$}
We use a similar process to bound $r_t^I$ in the PR setting (Appendix \ref{sec:apx_rt1}), but need to redefine some of the sets.
First, recall the definition of $\munderbar{\tilde{C}}_i^t$ in \eqref{eqn:no_exp_conf}.
Using the same process (i.e. by including the restriction that $\mathbf{1}^{\top} \theta_i^* \leq \rho$ from Assumption \ref{ass:no_resp}), we then define $\munderbar{\tilde{D}}^t$ similar to \eqref{eqn:shr_pset}, $\munderbar{\tilde{G}}^t$ similar to \eqref{eqn:lconf_set} and $\munderbar{G}^0$ similar to \eqref{eqn:g_init}.
Taking $\munderbar{h}^* = [h_1(\munderbar{\gamma}_1^*)^{\top}\ ...\ h_n(\munderbar{\gamma}_n^*)^{\top}]^{\top}$ and $\munderbar{h}^0$ be any element in $\munderbar{G}^0$, we have that
\begin{equation}
    \munderbar{\alpha}^t = \max \{ \munderbar{\alpha} \in [0,1] : \munderbar{\alpha} \munderbar{h}^* + (1 - \munderbar{\alpha}) \munderbar{h}^0 \in \munderbar{\tilde{G}}^t \}.
\end{equation}
Let $\munderbar{z}^t = \munderbar{\alpha}^t \munderbar{h}^* + (1 - \munderbar{\alpha}^t) \munderbar{h}^0$ and $\munderbar{z}_i^t = \munderbar{\alpha}^t \munderbar{h}_i^* + (1 - \munderbar{\alpha}^t) \munderbar{h}_i^0$.
Using Lemma \ref{lem:lipsch_x} and the fact that $\sum_{i=1}^n f_i (h_i(\munderbar{\gamma}_i^{t+1})^{\top} \munderbar{\tilde{\theta}}_i^{t+1}, \munderbar{\check{\theta}}_i^{t+1}) \geq \sum_{i=1}^n f_i ([\munderbar{z}_i^t]^{\top}\theta_i^*, \munderbar{\check{\theta}}_i^{t+1})$ (given that the left hand side is chosen optimistically), we have that
\begin{align*}
    \munderbar{r}_{t+1}^{II} & = \sum_{i=1}^n [ \munderbar{f}_i(h_i(\munderbar{\gamma}_i^*)^{\top} \theta_i^*, \munderbar{\check{\theta}}_i^{t+1}) -  \munderbar{f}_i(h_i(\munderbar{\gamma}_i^{t+1})^{\top} \munderbar{\tilde{\theta}}_i^{t+1}, \munderbar{\check{\theta}}_i^{t+1})]\\
    & \leq \sum_{i=1}^n [ \munderbar{f}_i(h_i(\munderbar{\gamma}_i^*)^{\top} \theta_i^*, \munderbar{\check{\theta}}_i^{t+1}) -  \munderbar{f}_i ([\munderbar{z}_i^t]^{\top}\theta_i^*, \munderbar{\check{\theta}}_i^{t+1})]\\
    & \leq \sum_{i=1}^n | \munderbar{f}_i(h_i(\munderbar{\gamma}_i^*)^{\top} \theta_i^*, \munderbar{\check{\theta}}_i^{t+1}) -  \munderbar{f}_i([\munderbar{z}_i^t]^{\top}\theta_i^*, \munderbar{\check{\theta}}_i^{t+1})|\\
    & \leq \Gamma \sum_{i=1}^n | h_i(\munderbar{\gamma}_i^*)^{\top} \theta_i^* -  [\munderbar{z}_i^t]^{\top}\theta_i^*|\\
    & = \Gamma \sum_{i=1}^n | (\munderbar{h}_i^* - \munderbar{h}_i^0) ^{\top} \theta_i^* | (1 - \munderbar{\alpha}^t) \\
    & \leq \Gamma \sum_{i=1}^n  \| \munderbar{h}_i^* - \munderbar{h}_i^0 \| \| \theta_i^* \| (1 - \munderbar{\alpha}^t) \\
    & \leq 2 \Gamma n L S (1 - \munderbar{\alpha}^t) \numberthis
\end{align*}
Note that Lemma \ref{lem:lipsch_x} can be applied above because $h_i(\munderbar{\gamma}_i^*)^{\top} \theta_i^*$ and $[\munderbar{z}_i^t]^{\top}\theta_i^*$ are in $E$ and 
$\check{\munderbar{\theta}}_i^{t+1}$ is in
$\munderbar{C}_0$.

We then proceed similarly to the PR setting in that we show that the constraint is tight on the optimal solution.
To do so, we state several lemmas which are equivalent to Lemmas \labelcref{lem:equiv_sol,lem:tight_const,lem:z_tight,lem:bstar} from the PR setting.

\begin{lemma}
    (Equivalent to Lemma \ref{lem:equiv_sol})
    Let Assumption \labelcref{ass:no_resp} hold. Then the optimal consumption $\munderbar{x}^* = [h_1(\munderbar{\gamma}_1^*)^{\top}\theta_1^*\ ...\ h_n(\munderbar{\gamma}_n^*)^{\top} \theta_n^*]^{\top}$ satisfies
    \begin{equation*}
    \munderbar{x}^* \in \argmax_{x \in \munderbar{E}} \sum_{i=1}^n \munderbar{f}_i (x_i,\theta_i^*),
    \end{equation*}
    where
    \begin{equation*}
    \munderbar{E} = \left\{ x\in \mathbb{R}^n_{++} :  \sum_{i=1}^n a_{ji} x_i \leq c_j,\ \forall j \in [p] \right\}.
    \end{equation*}
\label{lem:no_equiv_sol}
\end{lemma}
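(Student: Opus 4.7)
The plan is to mirror the structure of the proof of Lemma \ref{lem:equiv_sol}, adapting it to the SUM setting. The key modifications come from the fact that in this setting the domain of $h_i$ is $\mathbb{R}_{++}$ rather than $\mathbb{R}$, and the relevant limits are taken as $\gamma_i \to 0^+$ and $\gamma_i \to \infty$ instead of $\pm \infty$.

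First, I introduce the auxiliary set
\begin{equation*}
\tilde{\munderbar{E}} = \left\{ [h_1(\gamma_1)^{\top}\theta_1^*\ \ldots\ h_n(\gamma_n)^{\top}\theta_n^*]^{\top} : \gamma \in \bar{D} \right\},
\end{equation*}
and observe that $\munderbar{x}^*$ is, by the definition of $\munderbar{\gamma}^*$, a maximizer of $\sum_{i=1}^n \munderbar{f}_i(\cdot,\theta_i^*)$ over $\tilde{\munderbar{E}}$. It therefore remains only to show $\tilde{\munderbar{E}} = \munderbar{E}$, after which the statement follows immediately.

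The inclusion $\tilde{\munderbar{E}} \subseteq \munderbar{E}$ is by construction: any element of $\tilde{\munderbar{E}}$ lies in $\mathbb{R}^n_{++}$ because $h_i$ is positive on its domain and Assumption \ref{ass:no_resp} gives $\mathbf{1}^{\top}\theta_i^* \geq \rho > 0$ with $\theta_i^* \succeq 0$, and it satisfies the linear constraints because $\gamma \in \bar{D}$. For the reverse inclusion, I fix $x \in \munderbar{E}$ and show that there exists $\gamma \in \bar{D}$ with $h_i(\gamma_i)^{\top}\theta_i^* = x_i$ for every $i$. The function $\gamma_i \mapsto h_i(\gamma_i)^{\top}\theta_i^*$ is continuous on $\mathbb{R}_{++}$. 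Using Assumption \ref{ass:no_resp} componentwise together with $\theta_i^* \succeq 0$ and $\mathbf{1}^{\top}\theta_i^* \geq \rho > 0$, I get
\begin{equation*}
\lim_{\gamma_i \to 0^+} h_i(\gamma_i)^{\top}\theta_i^* = \infty, \qquad \lim_{\gamma_i \to \infty} h_i(\gamma_i)^{\top}\theta_i^* = 0.
\end{equation*}
The Intermediate Value Theorem then produces, for every $x_i \in \mathbb{R}_{++}$, some $\gamma_i \in \mathbb{R}_{++}$ with $h_i(\gamma_i)^{\top}\theta_i^* = x_i$. Assembling these coordinatewise and using that $x$ satisfies the linear constraints defining $\munderbar{E}$ shows that the resulting $\gamma$ lies in $\bar{D}$, whence $x \in \tilde{\munderbar{E}}$.

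The only delicate point is verifying that the limits on $h_i$ transfer to limits on $h_i(\gamma_i)^{\top}\theta_i^*$ despite $\theta_i^*$ potentially having some zero components; the lower bound $\mathbf{1}^{\top}\theta_i^* \geq \rho$ from Assumption \ref{ass:no_resp} guarantees at least one strictly positive component, which drives the limit at $0^+$ to $+\infty$, and the limit at $\infty$ is immediate since every component of $h_i$ vanishes. Everything else is a direct transcription of the argument used for Lemma \ref{lem:equiv_sol}.
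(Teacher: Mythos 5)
Your proposal is correct and follows exactly the route the paper intends: the paper proves this lemma by noting it is the SPR argument of Lemma \ref{lem:equiv_sol} transcribed to the SUM setting under the stronger Assumption \ref{ass:no_resp}, which is precisely what you have written out (image set $\tilde{\munderbar{E}}$, optimality over it, and equality with $\munderbar{E}$ via continuity and the Intermediate Value Theorem using the limits at $0^+$ and $\infty$). Your explicit handling of possibly zero components of $\theta_i^*$ via $\mathbf{1}^{\top}\theta_i^* \geq \rho$ is a faithful (indeed slightly more careful) rendering of the paper's appeal to $\theta_i^*$ being nonzero.
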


\begin{lemma}
\label{lem:no_tight_const}
(Equivalent to Lemma \ref{lem:tight_const})
Let Assumption \labelcref{ass:no_resp} hold. Then, there exists a constraint $j$ in $[p]$ such that $\sum_{i=1}^n a_{ji} \munderbar{x}_i^* = c_j$, where $\munderbar{x}^*$ is an optimal consumption as defined in Lemma \ref{lem:no_equiv_sol}.
\end{lemma}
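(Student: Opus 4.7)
The plan is to mirror the proof of Lemma \ref{lem:tight_const} from the SPR setting, substituting the strict monotonicity property established for $\munderbar{f}_i$ (Lemma \ref{lem:no_incr}) in place of the SPR assumption that $f_i$ is strictly increasing. The statement is logically equivalent to saying that the optimizer $\munderbar{x}^*$ lies on the boundary of the polyhedron $\munderbar{F} = \{x \in \mathbb{R}^n : \sum_{i=1}^n a_{ji} x_i \leq c_j,\ \forall j \in [p]\}$, since $\munderbar{E} \subseteq \munderbar{F}$, so I will argue by contradiction: assuming $\munderbar{x}^*$ lies in the interior of $\munderbar{F}$ and deriving a feasible point with strictly larger objective value.

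First, I will observe that $\munderbar{x}^*$ is in $\munderbar{F}$ by definition (since it is feasible for the maximization in Lemma \ref{lem:no_equiv_sol}, and $\munderbar{E} \subseteq \munderbar{F}$). Suppose, for contradiction, that $\munderbar{x}^*$ lies in $\mathbf{int}\,\munderbar{F}$. Then there exists $\epsilon > 0$ such that the open ball of radius $\epsilon$ around $\munderbar{x}^*$ is contained in $\munderbar{F}$. Define the shifted point $\tilde{x} = \munderbar{x}^* + \tfrac{\epsilon}{2} \tfrac{\mathbf{1}}{\|\mathbf{1}\|}$, which lies in that ball and hence in $\munderbar{F}$. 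Moreover, since $\munderbar{x}^* \succ 0$ (as $\munderbar{x}^* \in \mathbb{R}_{++}^n$ by the definition of $\munderbar{E}$) and the perturbation is positive componentwise, $\tilde{x}$ remains in $\mathbb{R}_{++}^n$ and hence in $\munderbar{E}$, so $\tilde{x}$ is a feasible point of the maximization.

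The key step is then invoking Lemma \ref{lem:no_incr}, which states that for any nonzero $\theta_i$ the map $x_i \mapsto \munderbar{f}_i(x_i, \theta_i)$ is strictly increasing. Since Assumption \ref{ass:no_resp} guarantees $\mathbf{1}^{\top}\theta_i^* \geq \rho > 0$, in particular $\theta_i^*$ is nonzero, so Lemma \ref{lem:no_incr} applies. Componentwise we have $\tilde{x}_i > \munderbar{x}_i^*$ for every $i$, so $\munderbar{f}_i(\tilde{x}_i, \theta_i^*) > \munderbar{f}_i(\munderbar{x}_i^*, \theta_i^*)$ for each $i$, and summing gives $\sum_{i=1}^n \munderbar{f}_i(\tilde{x}_i, \theta_i^*) > \sum_{i=1}^n \munderbar{f}_i(\munderbar{x}_i^*, \theta_i^*)$. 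This contradicts the fact that $\munderbar{x}^*$ achieves the maximum in Lemma \ref{lem:no_equiv_sol}, so $\munderbar{x}^*$ must lie on $\mathbf{bd}\,\munderbar{F}$, which is precisely the statement that at least one constraint $j \in [p]$ satisfies $\sum_{i=1}^n a_{ji} \munderbar{x}_i^* = c_j$.

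I do not expect any substantive obstacle here; the only point that warrants a sentence of care is verifying that the nonzero hypothesis of Lemma \ref{lem:no_incr} is satisfied for $\theta_i^*$, which is immediate from the $\mathbf{1}^{\top}\theta_i^* \geq \rho$ clause in Assumption \ref{ass:no_resp}. The argument is otherwise a direct transcription of the SPR proof.
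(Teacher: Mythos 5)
Your proof is correct and follows essentially the same route as the paper, which simply asserts that the SPR argument (the proof of Lemma \ref{lem:tight_const}) carries over verbatim once strict monotonicity of $\munderbar{f}_i(\cdot,\theta_i^*)$ is supplied by Lemma \ref{lem:no_incr} and $\theta_i^*\neq 0$ follows from $\mathbf{1}^{\top}\theta_i^*\geq\rho$. Your explicit check that the perturbed point stays in $\mathbb{R}_{++}^n$, and hence in $\munderbar{E}$, is a welcome bit of extra care that the paper's SPR proof leaves implicit.
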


\begin{lemma}
\label{lem:no_z_tight}
(Equivalent to Lemma \ref{lem:z_tight})
    Let Assumption \labelcref{ass:no_resp} hold.
    Then with probability at least $1 - \delta$, there exists $j$ in $[p]$ such that
    \begin{equation*}
        \max_{\theta \in \munderbar{\tilde{C}}^t} \sum_{i=1}^n a_{ji} \theta_i^{\top} \munderbar{z}_i^t = c_j.
    \end{equation*}
\end{lemma}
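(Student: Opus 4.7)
The plan is to essentially replay the proof of Lemma \ref{lem:z_tight} from the SPR setting, with the only substantive change being the bookkeeping needed to accommodate the extra constraint $\mathbf{1}^{\top}\theta_i \geq \rho$ that appears in the definition of $\munderbar{\tilde{C}}^t$. The key observation is that since Assumption \ref{ass:no_resp} requires $\mathbf{1}^{\top}\theta_i^* \geq \rho$, combining this with Theorem \ref{thm:conf_set} (and the triangle inequality argument used to derive $\munderbar{\tilde{C}}^t$ from $\munderbar{C}^t$) ensures that $\theta^* \in \munderbar{\tilde{C}}^t$ with probability at least $1 - \delta$. All arguments below are made conditional on this event.

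First, I would define the auxiliary set
\begin{equation*}
\munderbar{H}^t = \Big\{ h \in \mathbb{R}^{nm} : \textstyle\sum_{i=1}^n a_{ji} h_i^{\top} \theta_i \leq c_j,\ \forall j \in [p],\ \forall \theta \in \munderbar{\tilde{C}}^t \Big\},
\end{equation*}
so that $\munderbar{\tilde{G}}^t = \munderbar{H}^t \cap \{ h \in \mathbb{R}^{nm} : h \succeq 0 \}$. Note that the claim of the lemma is equivalent to asserting that $\munderbar{z}^t$ lies on the boundary of $\munderbar{H}^t$ through a constraint face indexed by some $j \in [p]$.

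Next, I would show that $\munderbar{h}^* \notin \mathbf{int}\,\munderbar{H}^t$. Since $\theta^* \in \munderbar{\tilde{C}}^t$, any point in $\mathbf{int}\,\munderbar{H}^t$ must satisfy $\sum_i a_{ji} h_i^{\top} \theta_i^* < c_j$ for all $j \in [p]$. But by Lemmas \ref{lem:no_equiv_sol} and \ref{lem:no_tight_const}, the optimal consumption $\munderbar{x}^* = [h_1(\munderbar{\gamma}_1^*)^{\top}\theta_1^*\ \ldots\ h_n(\munderbar{\gamma}_n^*)^{\top}\theta_n^*]^{\top}$ saturates at least one constraint, i.e.\ $\sum_i a_{ji}[\munderbar{h}_i^*]^{\top}\theta_i^* = c_j$ for some $j$, so $\munderbar{h}^*$ cannot lie in the interior of $\munderbar{H}^t$.

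Then I would show that $\munderbar{z}^t \in \mathbf{bd}\,\munderbar{H}^t$. If $\munderbar{h}^* \in \munderbar{H}^t$, then $\munderbar{\alpha}^t = 1$, $\munderbar{z}^t = \munderbar{h}^*$, and the preceding step already places $\munderbar{z}^t$ on the boundary. Otherwise, suppose for contradiction that $\munderbar{z}^t \in \mathbf{int}\,\munderbar{H}^t$. Then a small perturbation of $\munderbar{z}^t$ in the direction $(\munderbar{h}^* - \munderbar{h}^0)/\|\munderbar{h}^* - \munderbar{h}^0\|$ still lies in $\munderbar{H}^t$; rewriting this perturbation as $\munderbar{h}^0 + \tilde{\munderbar{\alpha}}(\munderbar{h}^* - \munderbar{h}^0)$ with $\tilde{\munderbar{\alpha}} > \munderbar{\alpha}^t$ (and $\tilde{\munderbar{\alpha}} < 1$ since $\munderbar{h}^* \notin \munderbar{H}^t$), and noting that $\munderbar{h}^0, \munderbar{h}^* \succeq 0$ implies the perturbation is also non-negative, we get $\munderbar{h}^0 + \tilde{\munderbar{\alpha}}(\munderbar{h}^* - \munderbar{h}^0) \in \munderbar{\tilde{G}}^t$, contradicting the maximality of $\munderbar{\alpha}^t$. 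Therefore $\munderbar{z}^t \in \mathbf{bd}\,\munderbar{H}^t$, and the lemma follows.

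The main obstacle is essentially clerical: one must verify that the additional constraint $\mathbf{1}^{\top}\theta_i \geq \rho$ in $\munderbar{\tilde{C}}^t$ does not disrupt the argument. It does not, because (a) this constraint only shrinks $\munderbar{\tilde{C}}^t$ and thus enlarges $\munderbar{H}^t$, which does not affect boundary/interior reasoning, and (b) $\theta^*$ still belongs to $\munderbar{\tilde{C}}^t$ by Assumption \ref{ass:no_resp} so the tightness argument from Lemmas \ref{lem:no_equiv_sol} and \ref{lem:no_tight_const} carries through unchanged.
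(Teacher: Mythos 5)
Your proposal is correct and follows essentially the paper's intended argument: the paper proves this lemma only by remarking that it carries over from Lemma \ref{lem:z_tight} because Assumption \ref{ass:no_resp} is stronger than Assumption \ref{ass:price_resp}, and your write-up simply replays that SPR proof (define $\munderbar{H}^t$, show $\munderbar{h}^* \notin \mathbf{int}\,\munderbar{H}^t$ via Lemmas \ref{lem:no_equiv_sol} and \ref{lem:no_tight_const}, then the perturbation argument placing $\munderbar{z}^t$ on $\mathbf{bd}\,\munderbar{H}^t$) while correctly noting that the extra constraint $\mathbf{1}^{\top}\theta_i \geq \rho$ only shrinks the confidence set and does not disturb any step.
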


\begin{lemma}
\label{lem:no_bstar}
    (Equivalent to Lemma \ref{lem:bstar})
    Assume the same as Lemma \ref{lem:no_z_tight} and let $j$ be a constraint satisfying Lemma \ref{lem:no_z_tight}.
    Then, we have that $\max_{\theta \in \munderbar{\tilde{C}}^t} \sum_{i=1}^n a_{ji} \theta_i^{\top} \munderbar{h}_i^* \geq c_j$.
\end{lemma}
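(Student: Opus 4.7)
The plan is to mirror the argument used in the SPR-setting proof of Lemma \ref{lem:bstar}, with the safe set, confidence set, and boundary-point replaced by their SUM-setting analogues. Specifically, I would define
\[
\munderbar{L} = \left\{ h \in \mathbb{R}^{nm} : \max_{\theta \in \munderbar{\tilde{C}}^t} \sum_{i=1}^n a_{ji} \theta_i^{\top} h_i \leq c_j \right\}
\]
and prove the lemma by showing that $\munderbar{h}^*$ cannot lie in the interior of $\munderbar{L}$, which is directly equivalent to the stated inequality.

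First I would establish convexity of $\munderbar{L}$. The inner maximization can be written as a support function $r(h) = \max_{\phi \in \munderbar{\Phi}} \phi^{\top} h$ where $\munderbar{\Phi} = \{ [a_{j1}\theta_1^{\top}\ \ldots\ a_{jn}\theta_n^{\top}]^{\top} : \theta \in \munderbar{\tilde{C}}^t \}$. Since every support function is convex, $\munderbar{L}$ is a sublevel set of a convex function and is therefore convex, and consequently $\mathbf{int}\,\munderbar{L}$ is also convex.

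Next I would locate the three relevant points. By Lemma \ref{lem:no_z_tight}, the point $\munderbar{z}^t$ lies on the boundary of $\munderbar{L}$. Because $\munderbar{h}^0 \in \munderbar{G}^0$ and Assumption \ref{ass:no_initset} bakes a strict $\zeta$-slack into the definition of $\munderbar{D}^0$, we have $\sum_{i=1}^n a_{ji} \theta_i^{\top} \munderbar{h}_i^0 \leq c_j - \zeta$ for every $\theta \in \munderbar{C}^0 \supseteq \munderbar{\tilde{C}}^t$, so $\munderbar{h}^0 \in \mathbf{int}\,\munderbar{L}$. Now suppose for contradiction that $\munderbar{h}^* \in \mathbf{int}\,\munderbar{L}$. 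By convexity of $\mathbf{int}\,\munderbar{L}$, every convex combination of $\munderbar{h}^0$ and $\munderbar{h}^*$ also lies in $\mathbf{int}\,\munderbar{L}$; in particular $\munderbar{z}^t = \munderbar{\alpha}^t \munderbar{h}^* + (1 - \munderbar{\alpha}^t)\munderbar{h}^0 \in \mathbf{int}\,\munderbar{L}$, contradicting the fact that $\munderbar{z}^t \in \mathbf{bd}\,\munderbar{L}$. Hence $\munderbar{h}^* \notin \mathbf{int}\,\munderbar{L}$, which is exactly the claim.

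The main obstacle is verifying strict membership $\munderbar{h}^0 \in \mathbf{int}\,\munderbar{L}$, since $\munderbar{L}$ is defined by the single constraint indexed by $j$ (chosen via Lemma \ref{lem:no_z_tight}) rather than by all $p$ constraints that define $\munderbar{\tilde{G}}^t$. The $\zeta$-margin guaranteed by Assumption \ref{ass:no_initset} applies uniformly over $j \in [p]$, so strict feasibility transfers from $\munderbar{D}^0$ to each individual constraint and hence to $\munderbar{L}$; the rest of the argument is the same convex-combination contradiction as in the SPR case, with no new analytical difficulty introduced by the SUM-specific restriction $\mathbf{1}^{\top}\theta_i \geq \rho$ added to $\munderbar{\tilde{C}}^t$.
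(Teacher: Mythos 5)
Your proposal is correct and takes essentially the same route as the paper: the paper proves the SPR counterpart (Lemma \ref{lem:bstar}) by exactly this support-function convexity argument plus the convex-combination contradiction with the boundary point $z^t$, and then asserts that Lemma \ref{lem:no_bstar} follows by the same reasoning since Assumption \ref{ass:no_resp} is stronger than Assumption \ref{ass:price_resp}. Your direct verification that $\munderbar{h}^0 \in \mathbf{int}\,\munderbar{L}$ via the $\zeta$-slack (valid because $\munderbar{\tilde{C}}^t \subseteq \munderbar{C}^0$) is just a slightly more explicit rendering of the paper's step that $\mathbf{int}\,L \supseteq \mathbf{int}\,\tilde{G}^t$, not a genuinely different argument.
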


The above lemmas hold because of the similarity between the two settings and because Assumption \ref{ass:no_resp} is stronger than Assumption \ref{ass:price_resp}, which is used in the PR counterparts of the above lemmas.

We can then use the condition in Lemma \ref{lem:no_z_tight} to bound $\munderbar{\alpha}^t$ similar to \eqref{eqn:sep_cons}.
\begin{align*}
    c_j & = \max_{\theta \in \munderbar{\tilde{C}}^t} \sum_{i=1}^n a_{ji} [\munderbar{z}_i^t]^{\top}\theta_i\\
    & \leq \munderbar{\alpha}^t \underbrace{\sum_{i=1}^n \max_{\theta_i \in \munderbar{\tilde{C}}^t_i} a_{ji} [\munderbar{h}_i^*]^{\top} \theta_i}_{\munderbar{b}^*} + (1 - \munderbar{\alpha}^t) \underbrace{\sum_{i=1}^n \max_{\theta_i \in \munderbar{\tilde{C}}^t_i} a_{ji} [\munderbar{h}_i^0]^{\top} \theta_i}_{\munderbar{b}^0} \numberthis
\end{align*}

Note that the bound on $\munderbar{b}^{*}$ and $\munderbar{b}^0$ in \eqref{eqn:bstar} applies.
Accordingly, we have that $\munderbar{b}^* \leq c_j + \ell^t$ and $\munderbar{b}^0 \leq c_j^0 + \ell^t$, where $\ell^t$ is defined in \eqref{eqn:bstar} and $\munderbar{c}_j^0 = \sum_{i=1}^n a_{ji} \theta_i^T \munderbar{h}_i^0$.
Then, it follows from Lemma \ref{lem:no_bstar} that $\munderbar{b}^* \geq c_j$, $\munderbar{b}^0 \leq c_j$.
Using the PR analysis in Appendix \ref{sec:apx_rt1}, we can see that that the bound in \eqref{eqn:alph_bound} applies such that $1 - \munderbar{\alpha}^t \leq \ell^t / \zeta$.
Therefore, the bound on $r_t^{II}$ for $T \geq t+1 > T'$ is
\begin{equation}
    r_{t+1}^{II} \leq \frac{4 \sqrt{2} \kappa \Gamma n^2 L^2 S \sqrt{\beta^T}}{\zeta \sqrt{2 \nu + \lambda_- T'}},
\end{equation}
which holds with probability at least $1 - 2 \delta$.

\subsubsection{Bounding $\underline{r}_t^{IV}$}
The analysis for term $\underline{r}_t^{IV}$ is very similar to that for term $r_t^{II}$ in Appendix \ref{sec:apx_rtII} which itself is similar to the stochastic linear bandit analysis.

With $\munderbar{r}^{IV}_{t,i}$ as $\munderbar{r}^{IV}_t$ due to user $i$, we can use Lemma \ref{lem:lipsch_x} for $t \geq T'$ as
\begin{align*}
    \munderbar{r}^{IV}_{t,i} & = \munderbar{f}_i(h_i(\munderbar{\gamma}_i^t)^{\top} \munderbar{\tilde{\theta}}_i^t, \theta_i^*) -  \munderbar{f}_i(h_i(\munderbar{\gamma}_i^t)^{\top} \theta_i^*, \theta_i^*)\\
    & \leq | \munderbar{f}_i(h_i(\munderbar{\gamma}_i^t)^{\top} \munderbar{\tilde{\theta}}_i^t, \theta_i^*) -  \munderbar{f}_i(h_i(\munderbar{\gamma}_i^t)^{\top} \theta_i^*, \theta_i^*) |\\
    & \leq \Gamma | h_i(\munderbar{\gamma}_i^t)^{\top} (\munderbar{\tilde{\theta}}_i^t - \theta_i^*)|\\
    & = \Gamma | h_i(\munderbar{\gamma}_i^t)^{\top} (\munderbar{\tilde{\theta}}_i^t - \hat{\theta}_i^{t-1} + \hat{\theta}_i^{t-1} -  \theta_i^*)|\\
    &\
    \begin{aligned}
    \leq & \ \Gamma \| h_i (\munderbar{\gamma}_i^t) \|_{[V_i^{t-1}]^{-1}}\\
    & \times ( \| \tilde{\theta}_i^t - \hat{\theta}_i^{t-1} \|_{V_i^{t-1}} + \| \hat{\theta}_i^{t-1} - \theta_i^* \|_{V_i^{t-1}} )
    \end{aligned}\\
    & = 2 \Gamma \| h_i (\munderbar{\gamma}_i^t) \|_{[V_i^{t-1}]^{-1}} \sqrt{\beta^{t-1}}, \numberthis
\end{align*}
where the last inequality holds with probability at least $1 - \delta$.
We then use the trivial bound $\munderbar{r}^{IV}_{t,i} \leq 2 \Gamma L S$ to get
\begin{align*}
    \munderbar{r}^{IV}_{t,i} & \leq \min( 2 \Gamma \| h_i (\munderbar{\gamma}_i^t) \|_{[V_i^{t-1}]^{-1}} \sqrt{\beta^{t-1}}, 2 \Gamma LS)\\
   & \leq 2 \Gamma \max(LS,1) \min( \| h_i (\munderbar{\gamma}_i^t) \|_{[V_i^{t-1}]^{-1}} \sqrt{\beta^{t-1}}, 1)\\
   & \leq 2 \Gamma \max(LS,1) \sqrt{\beta^T} \min( \| h_i (\munderbar{\gamma}_i^t) \|_{[V_i^{t-1}]^{-1}} , 1), \numberthis
\end{align*}
where the last inequality assumes (for simplicity) that $T$ is large enough such that $\beta^T \geq 1$.

It then follows from Lemma \ref{lem:ellip_pot} that 
\begin{align*}
    & \sum_{t=T'+1}^T \min( \| h_i (\munderbar{\gamma}_i^t) \|_{[V_i^{t-1}]^{-1}}^2 , 1  )\\
    & \leq \sum_{t=1}^T \min( \| h_i (\munderbar{\gamma}_i^t) \|_{[V_i^{t-1}]^{-1}}^2 , 1  )\\
    & \leq 2(m \log((\text{trace}(\nu I) + T L^2)/m) - \log \text{det} (\nu I))\\
    & = 2(m \log((m \nu + T L^2)/m) -  m \log(\nu))\\
    & = 2m \log( 1 + T L^2/ (m \nu) ) \numberthis
\end{align*}
Using Cauchy-Schwarz as
\begin{equation}
    \sum_{t=T'+1}^T \munderbar{r}_{t,i}^{IV} \leq \sqrt{ (T - T') \sum_{t=T' + 1}^T [\munderbar{r}_{t,i}^{IV}]^2 },
\end{equation}
we can apply the bounds on $\sum_{t=T' + 1}^T [\munderbar{r}_{t,i}^{IV}]^2$ to get the bound on $\munderbar{r}_{t,i}^{IV}$.

\subsection{Complete regret bound}
Combining all of the instantaneous regret terms, we have with probability at least $1 - 2 \delta$ that
\begin{align*}
    \munderbar{R}_T^I & =  \sum_{t=1}^{T'} \munderbar{r}_t + \sum_{t=T'+1}^T \munderbar{r}_t^I + \sum_{t=T'+1}^T \munderbar{r}_t^{II} + \sum_{t=T'+1}^T \munderbar{r}_t^{III} \\
    & \quad + \sum_{t=T'+1}^T \munderbar{r}_t^{IV}\\
    & \leq 2 \Gamma nLST'  + \frac{2 \sqrt{2} n (T - T') (\eta + \frac{\xi L}{\rho K}) \sqrt{\beta^T}}{ \sqrt{2\nu + \lambda_- T'}}\\
    & \quad + \frac{4 \sqrt{2} (T - T') \kappa \Gamma n^2 L^2 S \sqrt{\beta^T}}{\zeta \sqrt{2 \nu + \lambda_- T'}}\\
    & \quad + n \Gamma \max(LS,1) \sqrt{8 (T - T') \beta^T m \log \left( 1 + \frac{T L^2}{m \nu} \right)} \numberthis
\end{align*}
Note that the trivial bound on $\munderbar{r}_t$ is derived as follows.
\begin{align*}
    \munderbar{r}_t & = \sum_{i=1}^n \left[ \munderbar{f}_i (h_i(\munderbar{\gamma}_i^*)^{\top} \theta_i^*,\theta_i^*) - \munderbar{f}_i (h_i(\munderbar{\gamma}_i^t)^{\top} \theta_i^*,\theta_i^*) \right]\\
    & \leq \sum_{i=1}^n \left| \munderbar{f}_i (h_i(\munderbar{\gamma}_i^*)^{\top} \theta_i^*,\theta_i^*) - \munderbar{f}_i (h_i(\munderbar{\gamma}_i^t)^{\top} \theta_i^*,\theta_i^*) \right|\\
    & \leq \Gamma \sum_{i=1}^n \left| h_i(\munderbar{\gamma}_i^*)^{\top} \theta_i^* - h_i(\munderbar{\gamma}_i^t)^{\top} \theta_i^* \right|\\
    & = \Gamma \sum_{i=1}^n \left| (h_i(\munderbar{\gamma}_i^*) - h_i(\munderbar{\gamma}_i^t))^{\top} \theta_i^* \right|\\
    & \leq \Gamma \sum_{i=1}^n \| h_i(\munderbar{\gamma}_i^*) - h_i(\munderbar{\gamma}_i^t) \| \| \theta_i^* \| \\
    & \leq 2 \Gamma n L S \numberthis
\end{align*}

\subsection{The SUM Utility Function in the  DR Setting}
\label{sec:apx_sumdr}

In this section, we describe how the SUM utility function $\munderbar{f}_i$ is defined and calculated in the DR problem in Section \ref{sec:dr}.
Since the consumption is vector-valued in this case, the formulation of the SUM utility function needs to be extended.
In the DR problem, the price response $x_i(\gamma_i,\theta_i)$ is vector-valued (with dimension $V$) to represent the consumption at each time period in the day.
Also, the price for each user $\gamma_i$ is a vector (of dimension $V$) in the DR problem.
Since the agent is assumed to be profit-maximizing, the SUM utility function $\munderbar{f}_i : \mathbb{R}_{++}^V \times \mathbb{R}^m \rightarrow \mathbb{R} $ is implicitly defined as
\begin{equation}
    x_i(\gamma_i, \theta_i) = \argmax_{x_i \in \mathbb{R}_{++}^V} \left( \munderbar{f}_i(x_i, \theta_i) - \gamma_i^\top x_i \right).
\end{equation}
Therefore, under the argument in Lemma \ref{lem:no_equiv} (and relevant assumptions), we have that
\begin{equation}
    \nabla_{x_i} \munderbar{f}_i(x_i, \theta_i) = g_i(x_i, \theta_i),
\end{equation}
where $g_i$ is the inverse of $x_i(\cdot)$ with respect to the first argument as discussed in Appendix \ref{sec:apx_no_pre}.
Therefore (as in \eqref{eqn:no_util}), $\munderbar{f}_i$ can be written as
\begin{equation}
\label{eqn:dr_sum_util}
    \munderbar{f}_i (x_i, \theta_i) = \munderbar{f}_i (x_i^0, \theta_i) + \int_\xi  g_i(\tau, \theta_i)^{\top} d \tau,
\end{equation}
by the gradient theorem, where $\xi$ is a continuous curve which starts at $x_i^0$ and ends at $x_i$.

However, using \eqref{eqn:dr_sum_util} to calculate the SUM utility directly is challenging because the price response model presented in Section \ref{sec:dr} is highly irregular, making it challenging to efficiently evaluate the inverse price response.
In order to handle this, we use the smoothed price response function,
\begin{equation}
    \tilde{x}_i (\gamma_i, \theta_i) = \frac{1}{2 V} \sum_{v=1}^V \big( x_i (\gamma_i + \rho \mathbf{e}_v, \theta_i) + x_i (\gamma_i - \rho \mathbf{e}_v, \theta_i) \big),
\end{equation}
where $\mathbf{e}_v$ is a $V$ dimensional vector with a 1 at position $v$ and $0$ everywhere else.
We denote the inverse of $\Tilde{x}_i$ with respect to the first argument as $\Tilde{g}_i$.
We approximate $\Tilde{g}_i(x_i, \theta_i)$ at some $x_i$ by finding the $\gamma_i$ such that $\Tilde{x}_i(\gamma_i, \theta_i) - x_i = 0$ with a root-finding solver.

We also assume that $\munderbar{f}_i (x_i^0, \theta_i) = 0$ for any $\theta_i$, which will not change the result of the optimistic update \eqref{eqn:no_ofu_price} in the SUM algorithm since a constant addition will not change the result of an argmax.
To approximate the integral in \eqref{eqn:dr_sum_util}, we use a midpoint Riemann sum such that the approximate SUM utility is
\begin{equation}
\label{eqn:sum_util_approx}
    \tilde{\munderbar{f}}_i (x_i, \theta_i) = \sum_{k=0}^{K-1} \tilde{g}_i \left( x_0 + \left(k + 0.5 \right) \Delta , \theta_i \right)^{\top} \Delta
\end{equation}
where $\Delta = \frac{1}{K}(x_i - x_0)$.
In the simulations, we approximate the SUM utility with \eqref{eqn:sum_util_approx}, we choose $x_0 = \mathbf{1}$, $K = 5$, and $\rho = 1$.

\end{document}